\documentclass[pra,twocolumn,superscriptaddress,letterpaper]{revtex4-2}

\usepackage{nicefrac}
\usepackage{braket}
\usepackage{xcolor}
\usepackage{graphicx}
\usepackage[colorlinks=true,urlcolor=blue,citecolor=blue,linkcolor=blue]{hyperref}

\usepackage{amssymb,amsmath,amsfonts}
\usepackage{amsthm}
\newtheorem{lemma}{Lemma}
\newtheorem{theorem}{Theorem}

\usepackage{enumerate}

\usepackage{booktabs}
\usepackage{multirow}

\DeclareMathAlphabet\mathbfcal{OMS}{cmsy}{b}{n}

\renewcommand{\vec}[1] {\boldsymbol{#1} }
\newcommand{\veccal}[1] {\boldsymbol{\mathcal{#1}} }

\newcommand{\trace}[1] {\mathrm{tr}\!\left( #1  \right) }
\newcommand{\rank}[1] {\mathrm{rank}\!\left( #1  \right) }

\begin{document}

\title{
A Robust Framework for the Comprehensive Characterisation of Parametric Down-Conversion Sources
}

\author{Denis A. Kopylov}
\email{denis.kopylov@uni-paderborn.de}
\author{Michael Stefszky}
\author{Christine Silberhorn}

\affiliation{Paderborn University, Integrated Quantum Optics Group, Institute for Photonic Quantum Systems (PhoQS), Warburger Stra\ss{}e 100, 33098 Paderborn, Germany}

\date{\today}
\begin{abstract}
\noindent
Reliable verification of squeezed-light sources is essential for photonic quantum technologies.
High-dimensional quantum optical networks demand single-mode, pure squeezed states which, in principle, can be generated through careful engineering of parametric down-conversion-based (PDC) sources.
However, due to the imperfections of optical components, the generated state is never pure and commonly contains internal mode structure.
Existing characterisation techniques, based on either joint spectral intensity or normalized second-order correlation functions, work under the assumption that the probed state is pure, and therefore can result in incomplete or misleading assessment of the source quality.
Here, we develop a device-agnostic characterisation framework for PDC-based sources with distinguishable signal and idler fields. 
Treating the source as a black box producing a mixed multimode Gaussian state, we find a lower bound for the state purity, which can be experimentally determined via mode-unresolved measurements.
Our approach enables a holistic assessment of realistic squeezed-light source quality without requiring a detailed model of the nonlinear device and its surrounding optical network, providing practical benchmarks for optimizing sources for state-of-the-art quantum applications.
\end{abstract}

\maketitle


\section{ Introduction }

Pulsed squeezed-light sources are a core component of modern photonic quantum technologies.
As a key resource for numerous quantum protocols~\cite{Weedbrook_2012,Andersen2016,Hamilton2017}, squeezed light provides a versatile platform for the experimental preparation of non-Gaussian states, including single photons~\cite{Mosley_2008}, cat states~\cite{Sychev_2017,Kan2021}, and GKP states~\cite{Larsen_2025,Konno2024,Erkilik2026}.
State-of-the-art squeezed-light sources are based on parametric down-conversion (PDC) or four-wave mixing, with the highest performance typically achieved using PDC~\cite{luo2026,Mehmet2016,BarconsPlanas2026}.
PDC-based squeezed-light sources have been successfully employed in proof-of-principle demonstrations of quantum computation~\cite{Larsen2019,Yu_2026,yokoyama2026,stefszky2026}, communication~\cite{Madsen2012}, and metrology~\cite{Labont__2024,Aasi2013}.

Any quantum optical experiment consists of quantum light sources, an optical circuit and a detection system (also often coined generation, manipulation and detection). 
As optical circuits are scaled up, they become increasingly sensitive to input state quality.
While some imperfections of light sources can be neglected in few-channel systems operating with a low number of photons, they ultimately limit performance in highly demanding, large-scale networks.
Understanding how these imperfections affect the generated quantum state is therefore essential for accurately assessing source performance and informing improved designs.
Verification schemes for quantum state preparation must provide an accurate and reliable characterisation that necessarily accounts for all relevant source characteristics---a topic that has, to date, remained comparatively underexplored.
As systems advance towards real-world applications, the development of such comprehensive characterisation schemes becomes critical.

A reliable characterisation technique should be grounded in the underlying physical processes. 
For pulsed PDC-based sources, this requires accounting for several key properties of the generated light. 
Its quantum state is squeezed (i.e., has an undefined number of photons), intrinsically multimode---even under near-single-mode operation---and generally mixed.
We discuss the implications of each of these properties in turn:

First, the PDC produces squeezed light which, by definition, consists of an undefined number of photons. The weak squeezed states are commonly analysed within the two-photon approximation, which is invalid for bright states and which becomes an increasingly critical issue as higher degrees of squeezing are demanded. Thus, an accurate characterisation technique has to be based on the squeezed states and must be valid both in the low-gain, two-photon regime, and as the sources are driven toward the high-gain regime. 

Second, the majority of existing quantum protocols are developed for single-mode light~\cite{Braunstein_2005_RMP,Weedbrook_2012,Walschaers_2021,Hahn_2022}.
However, under general conditions, PDC light is intrinsically multimode~\cite{Fabre_Treps_2020_review,Raymer_2020}: The fabrication of single-mode sources requires careful engineering and even then, is only ever approximated by a perfect single-mode source~\cite{luo2026,Pickston21}. Therefore, experimental quantification of the mode structure is necessary for a reliable source characterisation and is required to understand and compensate for the impact of these imperfections.

Third, successful realization of quantum optical protocols requires complete knowledge of the quantum state entering the optical circuits, which will be influenced by optical elements that are present between the PDC crystal and the optical system.
These additional elements, such as mirrors, lenses, waveguide couplers, pinholes, and bandpass filters, can dramatically alter the state's properties, affecting both modal structure and purity \cite{MeyerScott2017,Blay_2017,Christ2014}. 
Importantly, for integrated schemes with PDC-based sources, these linear optical systems can be an inherent component of the integrated optical system~\cite{Schuhmann_2024,Williams_2025,Wang_2026} and the squeezed state exiting the nonlinear stage may not be accessible at all. Thus, these additional linear optical elements should be considered as part of the quantum source.

Fourth, experimental optical components introduce losses and imperfections that cannot be characterized with perfect accuracy.
The effective refractive indices and poling structure of nonlinear waveguides can vary from waveguide to waveguide due to fabrication imperfections, which will impact the spectral properties of the generated light~\cite{Pelc2011,Santandrea2019}.
The modal structure can be altered by waveguide loss~\cite{Kopylov_2025,Kopylov_2025_HOM}.
Finally, undesired optical processes such as cascaded up-conversion~\cite{Rasputnyi_2022}, cross- and self- and phase-modulation~\cite{Triginer_2020, Thekkadath_2024}, or photo-refraction~\cite{Lange2026}, can be significant under strong pumping.
These factors influence the resulting state and the inclusion of all of the experimental imperfections in theoretical models is unfeasible.

In summary, experimentally generated states are always multimode, mixed squeezed states and any reliable treatment of these states necessarily needs to treat them as such.
They are impacted by experimental imperfections and therefore, we need a device-agnostic characterisation technique that operates with a PDC-based source as a black box producing a multimode, mixed squeezed state.

In the following, we focus on PDC-based sources that generate states with distinguishable signal and idler fields.
These signal and idler fields can have different polarizations (produced via type-II PDC) or non-overlapping frequency spectra (produced via non-degenerate type-I or type-O PDC).
Currently, two experimental methods are typically used for characterisation of their mode structure:
The first one is based on the Schmidt decomposition of the pure two-photon state fully described by the two-photon amplitude (TPA)~\cite{Law_2000,Mikhailova_2008}. 
The second method requires normalized second-order correlation measurements via slow detectors~\cite{Christ_2011}.
Both methods are widely used in experiments~\cite{Kim_2005,Wasilewski_2006_OL,Eckstein_2011,Sh_Iskhakov_2012,Harder_2013,P_rez_2014,Zielnicki_2018,Xin_2022,Wang_2026,Faleo_2026,Tischler_2015,Ansari_2018_PRL,MacLean_2019,Davis_2020,Thekkadath_2022}; however, these methods are based on assumptions that are not valid in the presence of non-negligible loss or as these systems are driven at high-gain.
The Schmidt decomposition based method requires measurement of the TPA, the measurement of which require sophisticated interferometric schemes~\cite{Tischler_2015,Ansari_2018_PRL,MacLean_2019,Davis_2020,Thekkadath_2022}.
To bypass this issue, the joint spectral intensity (JSI) is typically measured instead. However, the JSI is phase insensitive and as such the TPA phase remains unknown.
In order to infer the JSA for mode characterisation, a flat-phase assumption is often used~\cite{Harder_2013,Zielnicki_2018,Graffitti_2018}.
However, the flat-phase assumption is typically unjustified and will lead to an unreliable characterisation~\cite{Davis_2020}.
Moreover, the two-photon-approximation is no longer valid as the system is driven to produce appreciable amounts of squeezing; therefore, the validity of the first method for applications requiring squeezed states is dubious.
The second method is valid for both weak and bright squeezed states, but it assumes the state under study is pure~\cite{Christ_2011}.
Based on the Bloch-Messiah reduction~\cite{Braunstein_2005}, which provides a solid framework for mode analysis of pure multimode squeezed states even in the presence of strong pump fields~\cite{Wasilewski_2006,Christ_2011,Gouzien_2020,Horoshko_2021_PRA,Quesada_2022,Kopylov_2025,Scharwald_2026}, this method predicts equal signal and idler normalized second-order correlation functions.
However, the Bloch-Messiah reduction is not valid for mixed states~\cite{Kopylov_2025}; external spectral filtering~\cite{Sh_Iskhakov_2012} and optical losses~\cite{Kopylov_2025_HOM} modify the normalized second-order correlation functions.
In addition, recent experiments show that the normalized second-order correlation functions for signal and idler fields can differ~\cite{Wang_2026}.
These factors indicate that the pure-state assumption is very strong and does not hold for experimental mixed states.
Both the aforementioned methods, in their current forms, rely on pure states, and therefore their validity for mixed states should be reexamined.

In this work, we study the mode structure and correlation properties of multimode mixed states generated from PDC-based sources that produce distinguishable signal and idler fields.
Using the framework of Gaussian states, we reveal the limitations of existing characterization methods based on both second-order correlation functions and joint-spectral intensity.
Finally, we develop new, experimentally measurable quality metrics for mixed squeezed states with signal-idler distinguishability. Our framework overcomes the limitations of previous schemes and provides a new, robust, device-agnostic characterization technique for PDC-based sources, for both weak and highly squeezed states.

The paper is organized as follows: 
In Sec.~\ref{sec_system_under_study}, we specify the properties of the PDC-based sources under study and briefly present the main results.
The core of the developed formalism is a framework of multimode Gaussian states, and, in Sec.~\ref{sec_gtts_states}, we apply it to mixed and pure GSID-sources and analyse their mode structure and correlation properties.
These results build the theoretical background for the state characterisation technique, presented in Sec.~\ref{sec_characterisation}.
In Sec.~\ref{sec_examples}, we demonstrate its usage on a numerically simulated realistic PDC-based source.
Section~\ref{sec_discussion} discusses the differences with existing methods and presents possible extensions.

\section{ System under study and main results }
\label{sec_system_under_study}

    \begin{figure}
    \centering
    \includegraphics[width=1.\linewidth]{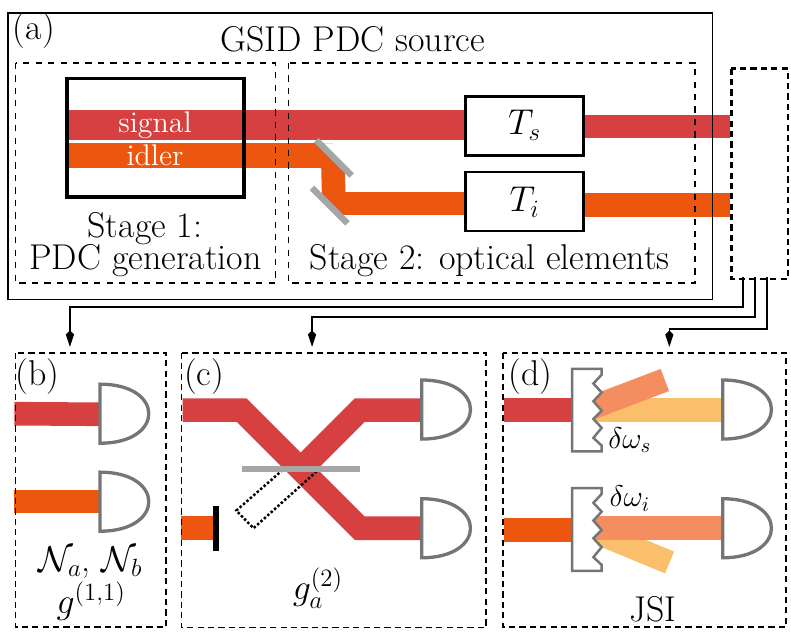}
    \caption{ (a) The scheme of a GSID-source. (b) Detection scheme for mean photon number and $g^{(11)}$ measurements. (c) The detection scheme for $g_a^{(2)}$ measurement. (d) Frequency resolved measurement: spectrum and JSI.  }
    \label{fig_scheme_1}
    \end{figure}

\subsection{Generalized Signal-Idler Distinguishable Source}
\label{sub_gtts_source}

        We define a PDC-based source as an optical system that consists of two principal subsystems: generation and preparation stages, shown in Fig.~\ref{fig_scheme_1}(a).
        The generation stage consist of a nonlinear medium where the signal and idler fields are generated via PDC. 
        The second stage---the preparation one---consists of a linear optical system that modifies and prepares the generated PDC light for its use in the experiment. 
        For example this stage might include spectral filtering to shape the generated state.
        
        In this work, we focus on PDC-based sources that satisfies the following requirements:
                 
        \textbf{R1}.
            We have a pulsed PDC process, which can be described via the interaction Hamiltonian 
            \begin{multline}
                \hat{G}_{int}(z) \propto \int d \vec{k}_\perp d \vec{k}^\prime_\perp  d \omega d \omega^\prime f(\omega,\omega^\prime, \vec{k}_\perp,\vec{k}^\prime_\perp,z) \\ \times \hat{a}^\dagger(\omega, \vec{k}_\perp)\hat{b}^\dagger(\omega^\prime, \vec{k}^\prime_\perp) + h.c.,
                \label{eq_hamiltonian}
            \end{multline}
            where the creation operators $\hat{a}^\dagger(\omega, \vec{k}_\perp)$ and $\hat{b}^\dagger(\omega, \vec{k}_\perp)$ correspond to distinguishable optical modes, i.e.,  the commutation relation $[\hat{a}(\omega, \vec{k}_\perp), \hat{b}^\dagger(\omega, \vec{k}_\perp)]=0$.
            This interaction Hamiltonian does not necessarily require type-II phase-matching, but also covers non-degenerate or non-collinear type-I phase-matching. 
            Further, the subsystems $a$ and $b$ (described by the operators $\hat{a}$ and $\hat{b}$) are labeled signal and idler, respectively.
            The coupling function $ f(\omega,\omega^\prime, \vec{k}_\perp,\vec{k}^\prime_\perp,z) $ depends on the pump properties (pump power and temporal pulse profile), the  properties of the nonlinear medium (linear dispersion and effective nonlinear susceptibility). 
            Note that the classical undepleted pump approximation is used.

        \textbf{R2}.
            We assume that the propagation loss of the nonlinear medium during the PDC generation can be described as a linear loss.
            The signal and idler subsystems are considered to have their own independent vacuum or thermal environment.
            This assumption guarantees the absence of linear interference and linear coupling between the signal and idler fields during PDC generation~\cite{Kopylov_2025_HOM}.

        \textbf{R3}. 
            The initial state of signal and idler fields is vacuum or thermal.

        \textbf{R4}. 
            The preparation stage of a source consists only of linear optical elements.
            These linear optical elements act on each of the subsystems independently.

        \textbf{R5}.
            Both output channels of the source produce fields in a single spatial mode.
            PDC in single mode waveguides guarantees this requirement (the interaction Hamiltonian does not depend on spatial degrees of freedom).
            For bulk crystals, this condition requires that the preparation stage should contain additional spatial filtering, e.g., via coupling to single-mode fibers.
            In Sec.~\ref{ssec_spatial_degrees}, we discuss PDC sources beyond this requirement.

        Requirements R1, R2, and R4 mean that our source acts as a multimode Gaussian channel without displacement.
        Together with requirements R2 and R3, they guarantee that the output state is an undisplaced Gaussian state.
        Requirements R1 and R2 provide distinguishability of the signal and idler waves during PDC generation (see Refs.~\cite{Kopylov_2025_HOM,Kopylov_2025_APL}), and, together with R4, ensure that the source produces separate signal and idler waves. 
        In other words, the signal and idler fields are never mixed.
        Requirement R5 allows us to omit spatial modes from further consideration. 
        
        Assuming the aforementioned requirements are met, our PDC-based source produces a frequency-multimode, undisplaced, Gaussian quantum state $\hat{\rho}$ with distinguishable signal and idler fields.
        The generated light is fully determined by the nonzero correlation functions 
        \begin{equation}
                \braket{\hat{a}^\dagger(\omega)\hat{a}(\omega^\prime)}, \braket{\hat{b}^\dagger(\omega)\hat{b}(\omega^\prime)} ~ \text{and} ~ \braket{\hat{a}(\omega)\hat{b}(\omega^\prime)},
                \label{eq_corr_funcitons}
        \end{equation}
        while both $\braket{\hat{a}(\omega)\hat{a}(\omega^\prime)}=\braket{\hat{b}(\omega)\hat{b}(\omega^\prime)}=\braket{\hat{a}^\dagger(\omega)\hat{b}(\omega^\prime)}\equiv 0$ and $\braket{\hat{a}(\omega)}=\braket{\hat{b}(\omega)}\equiv0$. 
        In what follows, we call this state \textit{a generalized signal-idler distinguishable PDC state} or \textit{GSID-state} for short.
        The PDC-based source producing the GSID-state is further called a \textit{GSID-source}.

\subsection{Intensity measurements}
\label{sub_int_measurem}

        In this work, we consider exclusively intensity measurements performed with slow detectors.
        Here, slow means that the detection time is much longer than the pulse duration which implies that internal field fluctuations cannot be resolved. 
        However, the detectors are fast enough to measure the energy of each pulse and resolve pulse-to-pulse fluctuations.
        In other words, the internal mode structure remains hidden; therefore, we call such measurements \textit{mode-unresolved measurements}.
        Fig~\ref{fig_scheme_1}(b,c) presents the mode-unresolved detection schemes considered in this work.
        These measurements give us the average number of photons and the mode-unresolved second-order correlation functions.

        Placing monochromators before the detectors allows us to perform spectral measurements (see Fig~\ref{fig_scheme_1}(d)).
        In this case, we can measure the spectra of signal and idler modes as well as the joint spectral intensity. 
        Such measurements are further called \textit{frequency-resolved measurements}.

        In this paper, we characterize the GSID-source with both mode-unresolved and frequency-resolved intensity measurements.
        Here we briefly present the main results; the details for the characterisation procedure are given in Sec.~\ref{sec_characterisation}.

\subsection{Main results}
\label{sub_main_results}

    \textbf{Statement 1}.
        The mixed state of a pulsed GSID-source provides three different bases: The Mercer-Wolf basis of the signal subsystem, the Mercer-Wolf basis of the idler subsystem, and the basis of pairwise signal-idler correlated modes (called further the C-basis).
        They are defined independently from correlation functions $\braket{\hat{a}^\dagger(\omega)\hat{a}(\omega^\prime)}$, $\braket{\hat{b}^\dagger(\omega)\hat{b}(\omega^\prime)}$ and  $\braket{\hat{a}(\omega)\hat{b}(\omega^\prime)}$, respectively.
        These bases are characterized by their own effective number of modes: the number of modes in the signal subsystem $\mu_a$, the number of modes in the idler subsystem $\mu_b$, and the number of correlated modes  between the signal and idler subsystems $K$, respectively.
        The parameters $\mu_a$, $\mu_b$, and $K$ are defined independently and therefore can take different values.
        For details, see Sec.~\ref{sub_properties_of_mixed_gsid_state_}.

        \textbf{Statement 2}.
            The number of modes in the signal $\mu_a$ and idler $\mu_b$ can be measured via the normalized second-order correlation functions $g^{(2)}_{a,b}$ and the parameters $\mu_a$ and $\mu_b$ are not necessarily equal.
            For details, see Sec.~\ref{ssub_meas_corr_modes}.
            
        \textbf{Statement 3}.
            The number of correlated modes $K$ cannot be reliably determined from the joint spectral intensity  $ J(\omega,\omega^\prime) = |\braket{\hat{a}(\omega)\hat{b}(\omega^\prime)}|^2$ because
            the JSI disregards the phase information within $\braket{\hat{a}(\omega)\hat{b}(\omega^\prime)}$.
            We show that the decomposition of $\sqrt{J(\omega,\omega^\prime)}$ provides the value $K_{\sqrt{J}}$, which can only be used as a lower bound for $K$:
            \begin{equation}
               K_{\sqrt{J}} \leq K.
            \end{equation}
            For details, see Sec.~\ref{ssec_JSI}.

        \textbf{Statement 4}.
            The degree of correlation between signal and idler can be quantified using $\xi$.
            It obeys                        
            \begin{equation}
                0 \leq \xi \leq 1,
            \end{equation}
            and $\xi=0$ indicates the absence of the signal-idler, while $ \xi = 1$ for maximally correlated states.
            The degree of correlation $\xi$ can be experimentally determined from the $g^{(1,1)}$ measurements.
            For details, see Sec.~\ref{sub_properties_of_mixed_gsid_state_}.

        \textbf{Statement 5}.
            For pure GSID-states, the number of photons in signal $\mathcal{N}_{a}$ and idler $\mathcal{N}_{b}$ subsystems are equal $\mathcal{N}_a = \mathcal{N}_b$ and the number of modes coincide $\mu_a = \mu_b$.
            The degree of correlations $\xi=1$. 
            However, $K \neq \mu$, and $K \approx \mu$ only for states with low photon number. 
            For details, see Sec.~\ref{sec_pure_gtts}.

        \textbf{Statement 6}.
            The values $\mu_{a}$, $\mu_{b}$, $K$ and $\xi$ together with the number of photons in signal and idler subsystems $\mathcal{N}_{a,b}$  define the lower bounds for the state purity.
            This purity bound can be used as an experimental measure of the GSID-source quality.
            For details, see Sec.~\ref{sub_purity_bounds}.

\vspace{0.5cm}

In order to obtain these results, let us start first with the analysis of GSID-states in the formalism of Gaussian states.

\section{ Properties of GSID-states }
\label{sec_gtts_states}

\subsection{ Notation and preliminaries }
\label{sub_notation}
    In this section, we briefly present the notation and the main mathematical results used for the description of  Gaussian states.

    Let us consider a complex $N\times M$ matrix $\vec{A}$ (for short, $\vec{A} \in \mathbb{C}^{N\times M}$).
    The element of $\vec{A}$ at the $i$-th row and $j$-the column is denoted as $[\vec{A}]_{ij}$.
    The matrix $\vec{A}^* \in \mathbb{C}^{N\times M}$ is the conjugated matrix of $\vec{A}$ with $[\vec{A}^*]_{ij} = ([\vec{A}]_{ij})^*$.
    The matrix $\vec{A}^T  \in \mathbb{C}^{M\times N}$ is the transposed matrix of $\vec{A}$ with $[\vec{A}^T]_{ij} = [\vec{A}]_{ji}$.

    The matrix $\vec{A}^H \in \mathbb{C}^{M\times N}$ is the conjugate-transposed matrix of $\vec{A}$ with $[\vec{A}^H]_{ij} = ([\vec{A}]_{ji})^*$.
    Note that throughout the text we use two notations for Hermitian-conjugation: the notation ${[.]}^\dagger$ refers to the quantum mechanical operators while ${[.]}^H$ refers to the matrices.  

    The matrix $\vec{B} \in \mathbb{C}^{N\times N}$ is Hermitian if $\vec{B}^H = \vec{B}$.
    The notation $\vec{B} \succ 0$ ($\vec{B} \succeq 0$) denotes that the Hermitian matrix $\vec{B}$ is positive definite (positive semi-definite).

    For the matrix $\vec{C} \in \mathbb{C}^{N\times N}$, the trace is defined as $\trace{\vec{C}} = \sum_{i=1}^N [\vec{C}]_{ii}$.
    For the matrix $\vec{A} \in \mathbb{C}^{N\times M}$, the following identity holds
    \begin{equation}
        \sum_{i=1}^N \sum_{j=1}^M |[\vec{A}]_{ij}|^2 = \trace{ \vec{A}^H \vec{A} } = \trace{\vec{A}  \vec{A}^H }.
        \label{eq_trace_indent}
    \end{equation}
    Appendix~\ref{app_math} contains the complete mathematical results and derivations used in this paper.

\subsection{ Uncertainty relation for GSID-state }
    \label{subsec_uncert_rel}
    We begin our analysis with the GSID-state defined on a discrete monochromatic basis. 
    The transition to continuous frequencies is discussed in Sec.~\ref{sec_characterisation}.

    Commonly, multimode Gaussian states are described in terms of the Wigner covariance matrix, defined by symmetric second-order quadrature moments. 
    This description is useful for the studies of Gaussian states~\cite{Simon_1994,Vogel_Welsch_book,Weedbrook_2012}, when homodyne measurements are considered.
    However, in the case of intensity-correlation measurements, the description in terms of symmetric second-order moments is implicit: the measurement results are determined by the normal-ordered field correlators.
    Thus, in this manuscript, we formulate the Gaussian formalism in terms of normal-ordered second-order correlation matrices (for details, see Appendix~\ref{app_Gaussian_states}).

    Let us consider a $(N+M)$-mode bosonic system  consisting of two subsystems: signal ($a$) and idler ($b$) on two discrete uniform frequency spaces $(\omega_1, \dots, \omega_N)$ and $(\omega_1, \dots, \omega_M)$, respectively.
    The subsystems $a$ and $b$ are described then by $N$ and $M$ monochromatic operators $\hat{a}_i\equiv\hat{a}(\omega_i)$ and $\hat{b}_i\equiv\hat{b}(\omega_i)$, respectively.
    These operators obey the bosonic commutation relations $[\hat{a}_i, \hat{a}^\dagger_j] = [\hat{b}_i, \hat{b}^\dagger_j] = \delta_{ij}$ and  $[\hat{a}_i, \hat{b}^\dagger_j] = 0$.
    In compact form, the operators can be written as a vector $\hat{\vec{a}} \equiv  (\hat{a}_1, \dots, \hat{a}_N)^T$ and $\hat{\vec{b}} \equiv  (\hat{b}_1, \dots, \hat{b}_M)^T$.

    In the discrete monochromatic basis, the GSID-state $\hat{\rho}$ is defined by two block-matrices
    \begin{equation}
        \veccal{D} =     
        \begin{pmatrix}
            \veccal{D}_a  & \mathbf{0}  \\
            \mathbf{0}  & \veccal{D}_b 
        \end{pmatrix}   ~ ~ \text{and} ~ ~
        \veccal{C}  =     
        \begin{pmatrix}
            \mathbf{0}  & \veccal{C}_{ab}  \\
            \veccal{C}^T_{ab}  & \mathbf{0} 
        \end{pmatrix},
        \label{eq_solution_block}
    \end{equation}
    with $\veccal{D}_a \in \mathbb{C}^{N\times N}$, $\veccal{D}_b \in \mathbb{C}^{M\times M}$ and $\veccal{C}_{ab} \in \mathbb{C}^{N\times M}$.
    Their matrix elements are $[\veccal{D}_a]_{ij} = \braket{\hat{a}^\dagger_i \hat{a}_j}$, $[\veccal{D}_b]_{ij} = \braket{\hat{b}^\dagger_i \hat{b}_j}$ and $[\veccal{C}_{ab}]_{ij} = \braket{\hat{a}_i \hat{b}_j}$.
    These matrices are the discretized versions of correlations Eq.~\eqref{eq_corr_funcitons}.
    Note that the $[\veccal{D}^T]_{mn} = [\veccal{D}]_{nm} = \braket{\hat{a}^\dagger_n \hat{a}_m} \neq \braket{\hat{a}_n \hat{a}^\dagger_m}$.

    In order to define a valid quantum state $\hat{\rho}$, the matrices $\veccal{D}_a$, $\veccal{D}_b$ and $\veccal{C}_{ab}$ should satisfy the following conditions
    \begin{enumerate}[(I)]
        \item Matrices $\veccal{D}_{a,b}$ are Hermitian: $\veccal{D}_{a,b} = \veccal{D}_{a,b}^H $ and $\veccal{D}_{a,b}^* = \veccal{D}_{a,b}^T $.  
        \item $\veccal{D}_{a,b}$ are positive semi-definite matrices ($\veccal{D}_{a,b} \succeq 0$).
        \item The uncertainty relation is (see Appendix~\ref{app_Gaussian_states})
        \begin{equation}
            \begin{cases}
             \veccal{D}^T_a - \veccal{C}_{ab}(\veccal{D}_b  + 1)^{-1}\veccal{C}^H_{ab} & \succeq 0, \\
                     \veccal{D}_b - \veccal{C}_{ab}^H (\veccal{D}_a^T  + 1)^{-1}\veccal{C}_{ab} & \succeq 0.
            \end{cases}
            \label{eq_uncertainty_main}
        \end{equation}
     
    \end{enumerate}
    The conditions (I-III) are necessary and sufficient for matrices $\veccal{D}_a$, $\veccal{D}_b$ and $\veccal{C}_{ab}$ to define a valid quantum GSID-state.
    
    The correlation matrices~\eqref{eq_solution_block} imply the following structure for GSID-states: each subsystem quantum state is a general multimode thermal state. This follows from tracing out the other subsystem, e.g., the idler. The reduced quantum state of the signal field is then described only by the matrix $\veccal{D}_{a}$.
    Similarly, the state of the idler system is described by $\veccal{D}_{b}$.
    Therefore, any GSID-state consists of two multimode thermal states. The matrix $\veccal{C}_{ab}$ stores all correlation properties between the signal and idler subsystems, and its analysis is essential for signal-idler correlation studies. 
    In the absence of correlations, when $\veccal{C}_{ab}=0$, the GSID-state is just a composition of two thermal states and such a GSID-state will henceforth be referred to as a \textit{thermal} or \textit{uncorrelated GSID-state}.

\subsection{Optical modes}
\label{sub_mode_trans}

    Thus far, we described the state $\hat{\rho}$ using matrices $\veccal{D}_a$, $\veccal{D}_b$ and $\veccal{C}_{ab}$ given in the monochromatic basis.
    However, the same state $\hat{\rho}$ can be written in any arbitrary basis.
    Given a basis transformation of the form 
    \begin{equation}
        \vec{\hat{A}} = U_A \vec{\hat{a}} ~ ~ \text{and} ~ ~ \vec{\hat{B}} = U_B \vec{\hat{b}}, 
        \label{eq_mode_transform}
    \end{equation}
    where $U_A$ and $U_B$ are $N\times N$ and $M\times M$ unitary matrices, respectively, we obtain new annihilation operators $\vec{\hat{A}} \equiv  (\hat{A}_1, \dots, \hat{A}_N)^T$ and $\vec{\hat{B}} \equiv  (\hat{B}_1, \dots, \hat{B}_M)^T$.
    These new operators also satisfy the bosonic commutation relations $[\hat{A}_n, \hat{A}_m^\dagger]=\delta_{nm} $, $[\hat{B}_n, \hat{B}_m^\dagger]=\delta_{nm} $ and $[\hat{A}_n, \hat{B}_m^\dagger]=0 $.

    Each row of the matrices $U_A$ and $U_B$ is interpreted as a new optical mode for the subsystems $a$ and $b$, respectively.  
    As long as the modes in this new basis are built as a superposition of monochromatic ones, they are commonly called broadband modes, or specto-temporal modes~\cite{Brecht_2015,Raymer_2020,Fabre_Treps_2020_review}.  

    Under the mode-transformation~\eqref{eq_mode_transform}, the correlation matrices in this new basis are
     \begin{multline}
        \veccal{D}_A = U_A^* \veccal{D}_a U_A^T, ~ ~ \veccal{D}_B = U_B^* \veccal{D}_b U_B^T  \\  ~ \text{and} ~ ~ 
        \veccal{C}_{AB} = U_A \veccal{C}_{ab} U_B^T.
        \label{eq_correlation_transformation}
    \end{multline}
    Note that the transformation defined by~\eqref{eq_correlation_transformation} does not change the state $\hat{\rho}$: both the sets $\veccal{D}_a$, $\veccal{D}_b$, $\veccal{C}_{ab}$ and $\veccal{D}_A$, $\veccal{D}_B$, $\veccal{C}_{AB}$ describe the same quantum state $\hat{\rho}$.

    In the context of mode-unresolved detectors, the measurement outcome is insensitive to the internal structure of the light.
    Therefore, we can work with any basis, and the measurement outcome should be determined by the \textit{basis-independent quantities} of the correlation matrices, i.e., such quantities that are unchanged under the transformation~\eqref{eq_correlation_transformation}.
    On the one hand, this means that we do not need a basis transformation at all---the measurement results can be expressed using the correlation function in the initial monochromatic basis.
    On the other hand, as we will see further, some bases lend themselves to a simple physical interpretation and therefore are more convenient than others.

    In this work, in addition to the monochromatic basis, we consider three other bases: the Mercer-Wolf basis, the C-Basis, and the S-Basis.

\subsubsection{Mercer-Wolf basis }
\label{ssub_mw_basis} 
        Let us consider the basis that brings matrices $\veccal{D}_a$ and $\veccal{D}_b$ in diagonal form.
        In the case that $\veccal{D}_{a,b}$ are Hermitian and positive semidefinite, their eigendecomposition takes the form 
        $ 
             \veccal{D}_a = W_A \vec{\Lambda}_a W_A^H, ~ ~      \veccal{D}_b = W_B \vec{\Lambda}_b W_B^H,
        $ 
        where $W_A$ and $W_B$ are unitary matrices. 
        The diagonal matrices $\vec{\Lambda}_{a,b}$ are built from the eigenvalues of matrices $\veccal{D}_{a,b}$, respectively.
        In what follows, we always take the eigenvalues in decreasing order.         
        Introducing the new broadband modes as $\vec{\hat{A}} = W_A^T\vec{\hat{a}}$ and $\vec{\hat{B}} = W_B^T\vec{\hat{b}}$, we get that the correlation matrices are
        \begin{equation}
            \veccal{D}_A=\vec{\Lambda}_a, ~ \veccal{D}_B=\vec{\Lambda}_b,
             ~  \text{and} ~ ~  \veccal{C}_{AB} = W_A^T \veccal{C}_{ab} W_B . 
             \label{eq_MW_basis}
        \end{equation}
        This basis is known as a \textit{Mercer-Wolf (MW) basis}~\cite{Kopylov_2025,Kopylov_2025_HOM,Kopylov_2025_APL} or as the first-order coherence basis~\cite{Fabre_Treps_2020_review}.

        In the MW-basis, the matrices $\veccal{D}_A$ and $\veccal{D}_B$ are well defined and unique. 
        However, as soon as the MW-basis is defined independently for each of the subsystems $a$ and $b$, the matrix $\veccal{C}_{AB}$ is not necessarily diagonal.
        It means that correlations can be present between MW-modes with different indices.
        Moreover, the matrix $\veccal{C}_{AB}$ is not unique because of additional uncertainty caused by a degenerate eigenspectrum (see Appendix~\ref{app_uncertainty_of_mw_modes}). 
        This uncertainty does not play a further role; therefore, we do not focus on it. 

        The eigenvalues of the matrices $\veccal{D}_{a,b}$ are not changed under the transformation~\eqref{eq_correlation_transformation}; therefore, building an MW-basis from any arbitrary basis, we always get $\veccal{D}_{A,B}=\vec{\Lambda}_{a,b}$.
        As a result, the matrices $\vec{\Lambda}_{a,b}$ can be used for the definition of basis-independent quantities.

\subsubsection{ C-Basis   }
\label{ssub_c_basis}

        An alternative broadband basis is the basis that diagonalizes the correlation matrix $\veccal{C}_{ab}$.
        Indeed, the singular value decomposition (SVD) of the matrix $\veccal{C}_{ab} = S^H_A \vec{\Sigma}_C S_B^*$, produces two unitary matrices $S_A\in \mathbb{C}^{N\times N}$ and $S_B\in \mathbb{C}^{M\times M}$.
        The matrix $\vec{\Sigma}_C \in \mathbb{C}^{N\times M}$ is a diagonal rectangular matrix, built from singular values of $\veccal{C}_{ab}$: $\sigma_n = [\vec{\Sigma}_C]_{nn}$.

        Defining a new basis as $\vec{\hat{A}} = S_A \vec{\hat{a}}$ and $\vec{\hat{B}} = S_B \vec{\hat{b}}$, together with Eq.~\eqref{eq_correlation_transformation}, we get
        \begin{equation}
            \veccal{D}_A = S_A^* \veccal{D}_a S_A^T, ~ ~ \veccal{D}_B = S_B^* \veccal{D}_b S_B^T, ~ ~
            \veccal{C}_{AB} = \vec{\Sigma}_C.    
        \end{equation}
        In what follows, we shall refer to this basis as the \textit{C-basis}.
        
        The matrix $\veccal{C}_{AB}$ is diagonal, meaning that, in the C-basis, the signal and idler modes are pairwise correlated.
        However, the matrices $\veccal{D}_{A,B}$ are not necessarily diagonal, indicating additional internal correlations in each subsystem.

        In contrast to the MW-basis, the C-basis is not well defined for any quantum state: for the matrix $\veccal{C}_{ab}$, the modes $S_A$ and $S_B$ are well-defined only for nonzero singular values, i.e., if there exists correlation between them.
        For the thermal GSID-state (when $\veccal{C}_{ab}=0$), the C-basis is undefined.

        The matrix $\vec{\Sigma}_C$ is independent under the transformation given by~\eqref{eq_correlation_transformation}.
        Therefore, there are basis-independent quantities that can be built from the singular values of $\veccal{C}_{ab}$ and interpreted in terms of C-modes.

\subsubsection{ S-Basis   }
\label{ssub_s_basis}
        As we will see in Sec.~\ref{ssub_meas_corr_modes}, there is an additional basis of practical interest.
        Namely, the basis which diagonalizes the matrix 
        $\vec{G} = |\veccal{C}_{ab}|_\circ $, where $|.|_\circ$ is the Hadamard-modulus that denotes $[\vec{G}]_{ij} = | [\veccal{C}_{ab}]_{ij} |$.

        The SVD-decomposition of the real-valued matrix $\vec{G} = O^T_A \vec{\Sigma}_G O_B $ allows us to introduce two valid sets of modes $\vec{\hat{A}} = O_A \vec{\hat{a}}$ and $\vec{\hat{B}} = O_B \vec{\hat{b}}$.
        However, neither the correlation matrices $\veccal{D}_A = O_A \veccal{D}_a O_A^T$ and $\veccal{D}_B = O_B \veccal{D}_b O_B^T$, nor the matrix $\veccal{C}_{AB} = O_A \veccal{C}_{ab} O_B^T $ are necessarily diagonal in this basis.
        This comes from the fact that the SVD of a Hadamard product does not coincide with the SVD of its factors (for details, see e.g. Ref.~\cite{Ciaperoni_2024}). 

        Moreover, the singular values of matrices $\vec{G}$ and $\vec{\tilde{G}} = |U_A \veccal{C}_{ab} U_B^T|_\circ $ are not necessarily equal.
        In other words, these singular values are changed under the transformation~\eqref{eq_correlation_transformation} and therefore can not be used for the introduction of basis-independent properties.
        Because of this, we call the basis built from the matrix $\vec{G}$ a \textit{spurious basis} or \textit{S-basis}.

        The S-basis commonly arises when joint spectral intensity measurements are used for the experimental determination of the mode structure.
        However, as we show in Sec.~\ref{ssub_meas_corr_modes}, the S-basis does not provide accurate nor sufficient information for GSID-state characterisation.

\subsection{ Properties of mixed GSID-state }
\label{sub_properties_of_mixed_gsid_state_}
        Let us consider a GSID-state $\hat{\rho}$ defined by the matrices $\veccal{D}_a$, $\veccal{D}_b$ and $\veccal{C}_{ab}$.

\subsubsection{ Total number of photons of the subsystem }

    \textit{The total number of photons} is defined for each of the signal and idler subsystems independently as
    \begin{equation}
        \mathcal{N}_a = \trace{\veccal{D}_a}  ~ ~ \text{and} ~ ~ \mathcal{N}_b = \trace{\veccal{D}_b}.
        \label{eq_numphot_1}
    \end{equation}
    In other words, the total number of photons is the sum of the mean number of photons over all the modes, e.g., for the signal system $\mathcal{N}_a = \sum_i \braket{\hat{a}_i^\dagger\hat{a}_i^\dagger}$.
    Note that the total number of photons does not depend on the working basis, i.e., it is independent of the transformation~\eqref{eq_correlation_transformation}.    
    Indeed, $\trace{\veccal{D}_a} = \trace{U^H U\veccal{D}_a } = \trace{ U\veccal{D}_a U^H}$ for any unitary matrix $U$.

    In terms of MW-basis, 
    \begin{equation}
        \mathcal{N}_a = \trace{\vec{\Lambda}_a}  ~ ~ \text{and} ~ ~ \mathcal{N}_b = \trace{\vec{\Lambda}_b},
        \label{eq_numphot_2}
    \end{equation}
    i.e., the total number of photons are the sum of all the eigenvalues of matrices $\veccal{D}_a$ and $\veccal{D}_b$, respectively.    
    For conciseness, in what follows we refer to the total number of photons as simply, the number of photons.

\subsubsection{Number of modes in the subsystem}
\label{ssub_num_modes_subsyst}

Let us first consider the signal subsystem. 
    In the MW-basis, $\veccal{D}_A=\vec{\Lambda}_a$,
    and $\lambda^a_i = [\vec{\Lambda}_a]_{ii}$ is then interpreted as the number of photons in the $i$-th mode of the MW-basis.
    We can ask ourselves: how many modes are occupied in the MW-basis?
    Inspired by Ref.~\cite{Grobe_1994}, 
    let us first consider the relative number of photons in the $i$-th mode defined as  $p^{a}_i = \lambda^{a}_i/\trace{\vec{\Lambda}_{a}}$. 
    Then the values $p^{a}_i$ can be interpreted as a discrete distribution.
    The number of modes $\mu_a$ in the signal system can then be defined as the inverse of the averaged probability, namely as
    $[\sum_i (p^{a}_i)^2]^{-1}$.
    Similarly, we build the number of modes in the idler subsystem.

    As a result, \textit{the number of modes} for signal $\mu_a$ and idler $\mu_b$ subsystems are defined as
    \begin{equation}
        \mu_a  =  \dfrac{[\trace{\vec{\Lambda}_a}]^2}{\trace{\vec{\Lambda}_a^2}},  ~ ~ 
        \mu_b  =  \dfrac{[\trace{\vec{\Lambda}_b}]^2}{\trace{\vec{\Lambda}_b^2}}.
        \label{eq_nummod_1} 
    \end{equation}
    Using the trace-property Eq.~\eqref{eq_trace_indent}, we have
    \begin{equation}
        \sum_{ij}\big|[\veccal{D}_{a}]_{ij}\big|^2= \trace{\veccal{D}_{a}^H \veccal{D}_{a} } = \trace{\vec{\Lambda}_a^2} = \trace{\veccal{D}_{a}^2} ,
        \label{eq_deriv_bas_indep_N}
    \end{equation}
    and therefore the number of modes is a basis-independent characteristic of each of the subsystems and has the basis-independent form
    \begin{equation} 
        \mu_a  =  \dfrac{[\trace{\veccal{D}_a}]^2}{\trace{\veccal{D}_{a}^2}},  ~ ~ 
        \mu_b  =  \dfrac{[\trace{\veccal{D}_b}]^2}{\trace{\veccal{D}_{b}^2}}.
        \label{eq_nummod_2} 
    \end{equation}
    The number of modes is defined for each of the subsystems independently; therefore, the values $\mu_a$ and $\mu_b$ are not necessarily equal.

    Note that the definition of the number of modes differs from Ref.~\cite{Kopylov_2025}, where \textit{the number of occupied modes in an arbitrary basis} was defined.
    In that terminology, the number of modes $\mu_{a,b}$ is then the number of occupied modes in the MW-basis, and is minimal over all the possible bases~\cite{Kopylov_2025}.

\subsubsection{Total amount of correlations}

    Both the number of photons  $\mathcal{N}_{a,b}$ and the number of modes  $\mu_{a,b}$ are defined via the matrices $\veccal{D}_{a,b}$ and therefore are characteristics of the subsystems but not the joint system.
    As was mentioned in the Sec.~\ref{subsec_uncert_rel}, the correlations are determined by the matrix $\veccal{C}_{ab}$.
    In order to characterise the signal-idler correlations, we define \textit{the total amount of signal-idler correlations} as 
    \begin{equation}
        \mathcal{J} = \sum^N_{i}\sum^M_{j} |[\veccal{C}_{ab}]_{ij}|^2.
        \label{eq_J_def_1}
    \end{equation}
    With the use of Eq.~\eqref{eq_trace_indent}, we can write it as     
    \begin{equation}
        \mathcal{J} = \trace{\veccal{C}^H_{ab} \veccal{C}_{ab} },
        \label{eq_J_def_2}
    \end{equation}
    and therefore the total amount of correlations is a basis-independent characteristic of the GSID-state (see Eq.~\eqref{eq_deriv_bas_indep_N}).

\subsubsection{The number of correlated modes}
        The total amount of correlations does not describe how the correlations are distributed among the signal and idler modes. 
        Nevertheless, in the C-basis, the total amount of correlations takes the form
        \begin{equation}
            \mathcal{J} = \trace{\vec{\Sigma}_C^2} = \sum_i \sigma^2_i,
        \end{equation}
        where $\sigma_i = [\vec{\Sigma}_C]_{ii}$.
        The values $\sigma^2_i$ are then interpreted as the amount of correlations between the $i$-th signal mode and the $i$-th idler mode of the C-basis.
        In a similar way to the number of modes, we could ask ourselves, how many correlated modes exist in the C-basis?
        To answer it, we construct the averaged probabilities similarly to those in Sec.~\ref{ssub_num_modes_subsyst}.
        As a result, we get \textit{the number of correlated modes}         
        \begin{equation}
            K =  \dfrac{[\trace{\vec{\Sigma}_C^2}]^2}{\trace{\vec{\Sigma}_C^4}}
        \end{equation} 
        or, in the basis-independent form,
        \begin{equation}
            K =  \dfrac{[\trace{\veccal{C}^H_{ab}\veccal{C}_{ab} }]^2}{\trace{ (\veccal{C}^H_{ab}\veccal{C}_{ab})^2 }}.
        \end{equation} 

        Note that the number of correlated modes $K$ is defined via the correlation matrix $\veccal{C}_{ab} $, while the number of modes in each subsystem $\mu_{a,b}$ is defined from the matrices $\veccal{D}_{a,b}$. 
        The interpretation of the values $K$ and $\mu_{a,b}$ are given in different bases: namely, in the C-basis and the MW-basis, respectively.
        Therefore, these values have a different physical meaning. 
        Note that the uncertainty relation~\eqref{eq_uncertainty_main} allows both larger and smaller values of $K$ in comparison with the values of $\mu_a$ and/or $\mu_b$.
        We stress that \textit{the parameters $\mu_a$, $\mu_a$ and $K$ are independent and must be determined independently}.
        In Sec.~\ref{sec_examples_mw_basis}, we present examples of GSID-states that highlight that mode structure of signal and idler subsystem does not define correlation properties of GSID-staes. 

        An important remark should be made regarding the value
        \begin{equation}
            K_{\sqrt{J}} =  \dfrac{[\trace{\vec{G}^T_{ab}\vec{G}_{ab} }]^2}{\trace{ (\vec{G}^T_{ab}\vec{G}_{ab})^2 }},
        \end{equation} 
        obtained from the matrix $\vec{G}$ with elements $[\vec{G}]_{nm} = |[\veccal{C}_{ab}]_{nm}|$ (see Sec.~\ref{ssub_s_basis}).
        For arbitrary states, 
        \begin{equation}
            K_{\sqrt{J}} \leq K,
        \end{equation}
        (for proof, see Appendix~\ref{app_JSI_inequality} )
        which provides the lower bound of the number of correlated modes. 
        This inequality additionally emphasizes the difference between the C-basis and the S-basis.
        Therefore, the value $K_{\sqrt{J}}$ does not reliably represent the number correlated modes, but nevertheless can still be helpful for the characterization of the GSID-sources (see Sec.~\ref{sub_comments_bounds}).

\subsubsection{ Degree of signal-idler correlations  }

        One of the corollaries of the uncertainty condition is the following inequality (see Appendix~\ref{app_bipartitie_Gaussian} for details)
        \begin{equation}
            \left|\trace{\veccal{C}_{ab}^H \veccal{C}_{ab}}\right|^2 \leq \trace{\veccal{D}_a + \veccal{D}^2_a}\trace{\veccal{D}_b + \veccal{D}^2_b}.
            \label{eq_uncet_inequality}
        \end{equation}
        This inequality restricts the correlations between the fixed modes
        \begin{equation}
            \left|[\veccal{C}_{ab}]_{ij}\right|^4 \leq ({[\veccal{D}_a]_{ii} + [\veccal{D}_a]^2_{ii}})  ({[\veccal{D}_b]_{jj} + [\veccal{D}_b]_{jj}^2}).
        \label{eq_uncet_inequality_elements}
        \end{equation}
        Note that these inequalities are necessary but not sufficient to define a valid quantum GSID-state with the matrices $\veccal{D}_a$, $\veccal{D}_b$ and $\veccal{C}_{ab}$.

        Using inequality~\eqref{eq_uncet_inequality} together with Eqs.~\eqref{eq_numphot_1}, \eqref{eq_nummod_2}, \eqref{eq_J_def_2}, we get that the total amount of correlations is bounded as
        \begin{equation}
            \mathcal{J}^2 \leq 
            \left(\mathcal{N}_a + \dfrac{\mathcal{N}_a^2}{\mu_a}\right)
            \left(\mathcal{N}_b + \dfrac{\mathcal{N}_b^2}{\mu_b}\right).
        \end{equation}
        This condition implies that the maximal amount of correlations is limited by the subsystem's properties: the number of photons and the number of modes.
        Thus, we are able to normalize the amount of correlations $\mathcal{J}$, and we introduce the \textit{degree of signal-idler correlations} 
 
        \begin{equation}
           \xi = \dfrac{\mathcal{J}}{  \sqrt{\mathcal{N}_a + \dfrac{\mathcal{N}_a^2}{\mu_a} }   \sqrt{\mathcal{N}_b + \dfrac{\mathcal{N}_b^2}{\mu_b} } }.
           \label{eq_degree_corr_def}
        \end{equation}
        By definition, $ 0 \leq \xi \leq 1$, where $\xi=0$ indicates the absence of signal-idler correlations and corresponds to a thermal GSID-state.
        For maximally correlated states, $ \xi = 1$.

        Note that the value $\xi$ is a measure of correlations for the complete GSID-state: Both the values $\xi$ and $\mathcal{J}$ are the characteristics of the GSID-state as a whole, and they do not give any information about the distribution of the correlations over the modes.

    \subsubsection{ Purity and renormalized purity  }
    \label{sec_purity}
        The purity of a quantum state $\hat{\rho}$ is defined as $P(\hat{\rho})=\trace{\hat{\rho}^2}$.
        For a GSID-state $\hat{\rho}$, the purity reads as (see Appendix~\ref{app_purity})
        \begin{equation}
                P(\hat{\rho}) =  \Big [\det  \begin{pmatrix}
                    2\veccal{D}_a^T + \vec{1}_{N}   &  2\veccal{C}_{ab}              \\
                    2\veccal{C}^H_{ab}              &  2\veccal{D}_b + \vec{1}_{M}
                \end{pmatrix}  \Big ]^{-1}.
                \label{eq_purity}
        \end{equation}

        The purity itself is not a good quality measure for GSID-states with a low number of photons. 
        Indeed, let us consider a thermal GSID-state $\hat{\rho}_{th}$ (a state with $\xi=0$) that has the same matrices $\veccal{D}_{a,b}$ as the considered state $\hat{\rho}$.
        The purity $P(\hat{\rho}_{th})\leq P(\hat{\rho})$.
        In the case of $\mathcal{N}_{a,b}\ll 1$, this inequality still holds; however, the values $P(\hat{\rho}_{th})$ and $P(\hat{\rho})$ are both close to $1$. 
        This comes from the fact that the vacuum component of both states is higher than all other Fock components. 
        As a result, it is very hard to distinguish the states $\hat{\rho}_{th}$ and $\hat{\rho}$ from the vacuum and the purity is not an obvious measure for this.

        In order to adapt the purity for weak GSID-states, we introduce an alternative purity measure, namely \textit{renormalized purity} as
        \begin{equation}
            R(\hat{\rho}) = \dfrac{P(\hat{\rho}) - P(\hat{\rho}_{th})}{1-P(\hat{\rho}_{th})}.
            \label{eq_renormalized_purity}
        \end{equation}
        The renormalized purity $R(\hat{\rho})$ is a valid purity measure: It satisfies $0\leq R(\hat{\rho})\leq 1$.
        The value $R(\hat{\rho})=1$ if the state is pure and $R(\hat{\rho})=0$ if the GSID-state is thermal.
        
        By definition, purity and renormalized purity are also basis-independent characteristics of GSID-states.
        Therefore, in Eq.~\eqref{eq_purity}, we can use correlation matrices given in any basis.

\subsection{Properties of pure GSID-states}
\label{sec_pure_gtts}
    Pure states are defined as having a purity $P=\trace{\hat{\rho}^2} = 1$.
    Pure GSID-states satisfy the following theorem
    \begin{theorem}
    \label{theorem_pure_state}
        The GSID-state $\hat{\rho}$, described by the matrices $\veccal{D}_a$, $\veccal{D}_b$ and $\veccal{C}_{ab}$ is pure if there exist two unitary matrices $W_A$ and $W_B$ and a real diagonal matrix $\vec{\Lambda}$ such that the following transformation holds
        \begin{equation}
            W_A \veccal{D}_a W_A^H = \vec{\Lambda}, ~ ~W_B \veccal{D}_b W_B^H = \vec{\Lambda}
        \end{equation}
        and
        \begin{equation}
            W_A \veccal{C}_{ab} W_B^H  = (\vec{\Lambda}+\vec{\Lambda}^2)^{\frac12}.
        \end{equation}
    \end{theorem}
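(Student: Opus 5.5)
The plan is to prove purity by direct evaluation of the purity formula, Eq.~\eqref{eq_purity}, after moving to a basis in which all three correlation matrices are simultaneously diagonal. Purity is basis independent (Sec.~\ref{sec_purity}), so it suffices to show that the determinant in Eq.~\eqref{eq_purity} equals $1$ in a suitably chosen mode basis. I would first read the hypotheses as a statement about a mode transformation of the form~\eqref{eq_mode_transform}. Choose $U_A,U_B$, built from $W_A,W_B$ (or their complex conjugates, depending on the transposition convention), so that by Eq.~\eqref{eq_correlation_transformation}
\begin{equation*}
\veccal{D}_A=\vec{\Lambda},\qquad \veccal{D}_B=\vec{\Lambda},\qquad \veccal{C}_{AB}=(\vec{\Lambda}+\vec{\Lambda}^2)^{\frac12}.
\end{equation*}
Since $\vec{\Lambda}$ is real and diagonal, $\vec{\Lambda}^T=\vec{\Lambda}^*=\vec{\Lambda}$, and $(\vec{\Lambda}+\vec{\Lambda}^2)^{1/2}$ is real diagonal as well. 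The transposes and conjugates in Eq.~\eqref{eq_purity} therefore become harmless once this basis is reached.

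Equivalently, at the matrix level, I would conjugate the $(N+M)\times(N+M)$ matrix in Eq.~\eqref{eq_purity} by the block-unitary $\mathrm{diag}(V_A,V_B)$. Here $V_A$ is chosen so that $V_A(2\veccal{D}_a^T+\vec{1})V_A^H=2\vec{\Lambda}+\vec{1}$; from $\veccal{D}_a^T=(W_A^H\vec{\Lambda}W_A)^T$ this gives $V_A=W_A^*$ up to the convention used. The block $V_B$ is chosen analogously, and the off-diagonal blocks then become $2(\vec{\Lambda}+\vec{\Lambda}^2)^{1/2}$ and its Hermitian conjugate. Conjugation by a unitary leaves the determinant unchanged. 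I expect the main (if mundane) obstacle to be this bookkeeping: checking that the off-diagonal block indeed transforms by the same pair of unitaries that diagonalise the diagonal blocks, i.e.\ that the hypothesis on $\veccal{C}_{ab}$ is stated in the same convention as~\eqref{eq_correlation_transformation}. If a conjugation mismatch appears, I would absorb it by replacing $W_A$ with $W_A^*$, which is allowed because $\vec{\Lambda}$ is real.

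Once in this basis, permuting rows and columns pairs signal mode $i$ with idler mode $i$, so the matrix becomes block diagonal with $2\times 2$ blocks
\begin{equation*}
\begin{pmatrix} 2\lambda_i+1 & 2\sqrt{\lambda_i+\lambda_i^2}\\ 2\sqrt{\lambda_i+\lambda_i^2} & 2\lambda_i+1\end{pmatrix}.
\end{equation*}
If $N\neq M$, the leftover unpaired modes have $\lambda=0$ and contribute factors of $1$. Each $2\times 2$ block has determinant $(2\lambda_i+1)^2-4(\lambda_i+\lambda_i^2)=1$. Hence the full determinant is $1$ and $P(\hat{\rho})=1$, which proves that the state is pure; physically, it is a product of two-mode squeezed vacua.

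Should the converse also be wanted, I would argue via the uncertainty relation~\eqref{eq_uncertainty_main}. Purity forces both Schur complements to vanish, since a pure Gaussian state saturates them. This gives $\veccal{D}_a^T=\veccal{C}_{ab}(\veccal{D}_b+1)^{-1}\veccal{C}_{ab}^H$ and the analogous identity for $\veccal{D}_b$. Combining these identities with the SVD of $\veccal{C}_{ab}$ (the C-basis) shows that the C-basis also diagonalises $\veccal{D}_{a,b}$, with equal eigenvalues $\lambda_i$ and $\sigma_i^2=\lambda_i+\lambda_i^2$.
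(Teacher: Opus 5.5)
Your overall strategy is the paper's own. Step~3 of Appendix~\ref{app_theorem_proof} evaluates Eq.~\eqref{eq_purity} in the bipartite Schmidt basis and obtains $\det(2\vec{\Lambda}+\vec{1})\det\big((2\vec{\Lambda}+\vec{1})^{-1}\big)=1$ via Schur's formula, which is equivalent to your $2\times2$ pairing. The paper's remaining steps only construct that basis from the minimal-uncertainty condition and are not needed for the ``if'' direction.

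The gap is in the step you dismissed as bookkeeping. With the hypotheses as printed, the block $2\veccal{D}_a^T+\vec{1}$ is brought to $2\vec{\Lambda}+\vec{1}$ by $V_A=W_A^*$ (complex-conjugate $W_A\veccal{D}_aW_A^H=\vec{\Lambda}$). With $V_B=W_B$, the off-diagonal block then becomes $2W_A^*\veccal{C}_{ab}W_B^H=2Q\vec{\Gamma}$, where $\vec{\Gamma}=(\vec{\Lambda}+\vec{\Lambda}^2)^{1/2}$ and $Q=W_A^*W_A^H$ is a symmetric unitary. So your claim that the off-diagonal blocks ``then become $2(\vec{\Lambda}+\vec{\Lambda}^2)^{1/2}$'' fails unless $Q$ commutes with $\vec{\Lambda}$.

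Your proposed remedy does not repair this. Trading $W_A$ for $W_A^*$ is legitimate in the $\veccal{D}_a$ equation because $\vec{\Lambda}$ is real. But conjugating the $\veccal{C}_{ab}$ equation yields $W_A^*\veccal{C}_{ab}^*W_B^T=\vec{\Gamma}$, which is not the block you need. In fact, as bare matrix identities the three hypotheses do not give determinant $1$. Take two modes with $\lambda_1\neq\lambda_2$, $W_B=\vec{1}$ and $W_A=\tfrac{1}{\sqrt2}\bigl(\begin{smallmatrix}1&i\\ i&1\end{smallmatrix}\bigr)$, so that $Q$ is antidiagonal; one finds $\det=1-(u_1-u_2)^2(1+u_1u_2)$ with $u_i=2\lambda_i+1$. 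Such a triple violates condition (III), so the theorem is not false, but your argument never invokes (III).

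You can close this in one of two ways.

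\emph{Read the hypothesis as a genuine mode transformation.} Interpret it in the sense of Eq.~\eqref{eq_correlation_transformation}, which is what the appendix actually proves and uses: there $\veccal{D}_a=W_A\vec{\Lambda}W_A^H$ and $\veccal{C}_{AB}=W_A^T\veccal{C}_{ab}W_B$. In the theorem's notation the third condition should then read $W_A^*\veccal{C}_{ab}W_B^H=\vec{\Gamma}$. With that reading, $V_A=W_A^*$, $V_B=W_B$ works, and your proof is complete and coincides with the paper's.

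\emph{Prove the statement as printed, using that $\hat{\rho}$ is a valid state.} The hypotheses give $\veccal{C}_{ab}(\veccal{D}_b+\vec{1})^{-1}\veccal{C}_{ab}^H=W_A^H\vec{\Gamma}(\vec{\Lambda}+\vec{1})^{-1}\vec{\Gamma}W_A=\veccal{D}_a$. Hence the first line of Eq.~\eqref{eq_uncertainty_main} reads $\veccal{D}_a^T-\veccal{D}_a\succeq0$. This Hermitian matrix is traceless, hence zero, so $\veccal{D}_a$ is real. Then $W_A^*\veccal{D}_aW_A^T=\vec{\Lambda}$ as well, and therefore $Q\vec{\Lambda}Q^H=\vec{\Lambda}$. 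Consequently $Q$ commutes with $\vec{\Lambda}$, $\vec{\Gamma}$ and $2\vec{\Lambda}+\vec{1}$. An extra conjugation by $\mathrm{diag}(Q^H,\vec{1})$ removes it, and your $2\times2$ computation goes through unchanged.
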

    Its proof is presented in Appendix~\ref{app_theorem_proof} and provides the following corollaries for pure GSID-states:

    \textbf{1}. 
        There exists a basis of broadband modes such that $\vec{\hat{A}} = W_A^T\vec{\hat{a}}$ and $\vec{\hat{B}} = W_B^T\vec{\hat{b}}$.
        In this basis, the correlation matrices are diagonal
        \begin{equation}
            \veccal{D}_A=\veccal{D}_B=\vec{\Lambda} ~ \text{and}  ~ \veccal{C}_{AB} = (\vec{\Lambda}+\vec{\Lambda}^2)^{\frac12}.
            \label{eq_schmidt_correlators}
        \end{equation}
        We will call this basis the basis of \textit{bipartite Schmidt modes}.
        As long as all the correlation matrices are diagonalized in the same basis, the basis of bipartite Schmidt modes is simultaneously the MW-basis as well as the C-basis.

    \textbf{2}.
        The correlation matrices satisfy the minimal uncertainty condition in the form
    \begin{equation}
            \begin{cases}
             \veccal{D}^T_a - \veccal{C}_{ab}(\veccal{D}_b  + 1)^{-1}\veccal{C}^H_{ab} & = 0, \\
                     \veccal{D}_b - \veccal{C}_{ab}^H (\veccal{D}_a^T  + 1)^{-1}\veccal{C}_{ab} & = 0.
            \end{cases}
        \end{equation}

    \textbf{3}. The photon number distribution for signal ($a$) and idler ($b$) subsystems coincide: $\veccal{D}_A=\veccal{D}_B = \vec{\Lambda}$.
        Therefore, the number of photons
        \begin{equation}
            \mathcal{N}_a = \mathcal{N}_b \equiv \mathcal{N} 
        \end{equation}
        as well as the number of modes
        \begin{equation}
            \mu_a=\mu_b =  \dfrac{[\trace{\vec{\Lambda}}]^2}{\trace{\vec{\Lambda}^2}} \equiv \mu  .
            \label{eq_numberMu_pure_state}
        \end{equation} 
        for both subsystems are equal.

    \textbf{4}. 
        In the basis of bipartite Schmidt modes, the signal and idler subsystems are pairwise correlated: the matrix $\veccal{C}_{AB}$ is diagonal.
        The total amount of correlations then takes the form 
        \begin{equation}
            \mathcal{J} = \mathcal{N}+\dfrac{\mathcal{N}^2}{\mu} 
        \end{equation}
        and the degree of correlation $\xi=1$.

    \textbf{5}. 
        The number of correlated modes takes the form
            \begin{equation}
                  K = \dfrac{\left[ \trace{ \vec{\Lambda}+\vec{\Lambda}^2 } \right]^2}{\trace{[\vec{\Lambda}+\vec{\Lambda}^2]^2}}.
                  \label{eq_numberK_pure_state}
             \end{equation} 
        The equality $K=\mu$ holds only for pure states with a uniform distribution for the number of photons, i.e., when only $n$ entries of the matrix $\vec{\Lambda}$ are equal and the rest are zeros.

        For an arbitrary distribution of the number of photons, 
        \begin{equation}
            K\neq\mu.   
        \end{equation}
        Only when $\vec{\Lambda}^2 \ll \vec{\Lambda}$, does $K \approx \mu$, which corresponds to optical fields with low-photon number. 
        Therefore, the number of modes $\mu$ and the number of correlated modes can take different values even for pure GSID-states.

\subsection{Summary}

    In this section, we presented the main parameters used in this paper.
    As we see, the mode structure of the arbitrary GSID-states can be quite complicated, providing not only the different number of modes in the signal and idler subsystems, but also different number of correlated modes.  
    Even for pure states, the number of correlated modes differs from the number of modes, although the number of modes in both subsystems is equal.

    Real experimental states are never pure, meaning that characterisation of only one of signal (or idler) subsystems does not fully characterize the quantum state.
    In the next section, we present a reliable, holistic characterization procedure based on the mode-unresolved detection, that can be applied to the GSID states.

\section{GSID-source characterisation}
\label{sec_characterisation}

Let us return to the GSID-source depicted in Fig.~\ref{fig_scheme_1}(a), which satisfies the restrictions R1-R5 from Sec.~\ref{sec_system_under_study}.
Let us assume that the correlation matrices
$$
    \braket{\hat{a}^\dagger(\omega)\hat{a}(\omega^\prime)}, \braket{\hat{b}^\dagger(\omega)\hat{b}(\omega^\prime)} ~ \text{and} ~ \braket{\hat{a}(\omega)\hat{b}(\omega^\prime)}
$$
are unknown.
The main question we answer in this section is: How can we characterize the source with intensity detectors, and what can we say about the source quality?

The first type of intensity measurements we consider are frequency-dependent measurements.
These measurements are provided by an optical spectrum analyzer (see Fig.~\ref{fig_scheme_1}(d)).
They are described by the photon-number density operators $ \hat{n}_{a}(\omega) = \hat{a}^\dagger(\omega)\hat{a}(\omega)$ and $ \hat{n}_{b}(\omega^\prime) =  \hat{b}^\dagger(\omega^\prime)\hat{b}(\omega^\prime)$, that correspond to the signal monochromatic mode $\omega$ and the idler monochromatic mode $\omega^\prime$, respectively.

The second type of intensity measurements we consider are mode-unresolved detection (see Fig.~\ref{fig_scheme_1}(b)).
In this case, the measurement outcomes are defined by the total photon-number operators $ \hat{N}_{a} = \int d\omega  \hat{n}_{a}(\omega) $ and $ \hat{N}_{b} =  \int d\omega^\prime \hat{n}_{b}(\omega^\prime) $.
Mode-unresolved detection averages over all modes in the detected light, and therefore only provides basis-independent characteristics. 
Note that we assume that the actual number of photons can be reconstructed from the measurements (i.e., we assume detectors with 100\% efficiency). 
In the case of detectors with imperfect efficiency, we assume that they are calibrated and the absolute photon number produced by the GSID-source can be determined. 

We also assume that the correlation functions describing the state are `good' in the sense that they can be approximated on a discrete frequency space and this approximation is convergent in the limit of $\Delta \omega \rightarrow 0$.
In this case, all the results from the previous section are directly applicable for continuous frequencies: The continuous operators are defined as $\hat{a}(\omega) \equiv \lim_{\Delta \omega \rightarrow0} \frac{ \hat{a}(\omega_i) }{\sqrt{\Delta \omega}}$ and the sums are replaced by integrals $\sum_{\omega_i} \rightarrow \int d\omega$.

We also assume that the state is described by `good' correlation functions that are convergent in the limit of $\Delta \omega \rightarrow 0$. 
In this case, all the results from the previous section are directly applicable for continuous frequencies: The continuous operators are defined as $\hat{a}(\omega) \equiv \lim_{\Delta \omega \rightarrow0} \frac{ \hat{a}(\omega_i) }{\sqrt{\Delta \omega}}$ and the sums are replaced by integrals $\sum_{\omega_i} \rightarrow \int d\omega$.

\subsection{ Mode-unresolved measurements }

    The first characteristic we can measure is the total number of photons, which is just the average number of photons measured by each of the detectors 
    \begin{equation}
        \mathcal{N}_a = \braket{\hat{N}_a}, ~ ~ \mathcal{N}_b = \braket{\hat{N}_b}.
    \end{equation}
    The mode-unresolved second-order correlation functions for signal and idler subsystems are defined as 
    \begin{equation}
       g_a^{(2)} = \dfrac{\braket{:\hat{N}_a \hat{N}_a:}}{\braket{\hat{N}_a}^2}, ~ ~ g_b^{(2)} = \dfrac{\braket{:\hat{N}_b \hat{N}_b:}}{\braket{\hat{N}_b}^2}, 
          \label{eq_g2_1}
    \end{equation}
    while the  mode-unresolved signal-idler second-order correlation function is 
    \begin{equation}
       g^{(1,1)} =  \dfrac{\braket{:\hat{N}_a \hat{N}_b:}}{\braket{\hat{N}_a}\braket{\hat{N}_b}},
       \label{eq_g11_1}
    \end{equation}
    where $\braket{:...:}$ indicates normal-ordering.
    These correlation functions can be measured with the optical schemes depicted in Fig.~\ref{fig_scheme_1} (see Appendix~\ref{app_sec_ord_corr_func} for details).
    Using the results from Sec.~\eqref{ssub_num_modes_subsyst}, we find that the second-order functions are connected with the number of modes through 
    \begin{equation}
       g_a^{(2)} = 1 + \dfrac{1}{\mu_a},  ~ ~    g_b^{(2)} = 1 + \dfrac{1}{\mu_b},
       \label{eq_g2_2}
    \end{equation}
    and 
     \begin{equation}
       g^{(1,1)} = 1 + \dfrac{ \mathcal{J}  }{\mathcal{N}_a \mathcal{N}_b} .
          \label{eq_g11_2}
    \end{equation}
    Equation~\eqref{eq_g2_2} provides an experimental way to determine the number of modes $\mu_a$ and $\mu_b$ from the measurements and Eq.~\eqref{eq_g11_2} -- the total amount of correlations.

    With the use of Eq.~\eqref{eq_degree_corr_def}, the degree of correlation can be formulated in terms of the correlation functions as 
    \begin{equation}
       \xi = \dfrac{g^{(1,1)}_{ab}-1}{\sqrt{ \dfrac{1}{\braket{\hat{N}_a}} + (g_a^{(2)}-1) } \sqrt{ \dfrac{1}{\braket{\hat{N}_b}} + (g_b^{(2)}-1) }   } .
    \end{equation}
    This equation provides an experimental way to obtain the degree of correlation $\xi$.

  As a result, mode-unresolved intensity measurements of the GSID-source allow us to determine five parameters: $\mathcal{N}_a$, $\mathcal{N}_b$, $\mu_a$, $\mu_b$ and $\xi$. 
  The parameters $\mathcal{N}_{a,b}$ and $\mu_{a,b}$ are the individual properties of subsystems, and only the parameter $\xi$ indicates the presence of signal-idler correlations.
  As we show in Sec.~\ref{sub_purity_bounds}, all these parameters together are necessary to determine the purity bounds (see Sec.~\ref{sub_purity_bounds}).
  
\subsection{ Frequency-resolved measurements }
    
    Introducing monochromators in the signal and idler channels enables us to determine the spectral properties of the studied state $\hat{\rho}$ (see Fig.~\ref{fig_scheme_1}(d)).
    Here, we consider two of them: the photon number spectrum and the joint spectral intensity.

\subsubsection{ Photon number spectrum }
        The photon number spectral densities for signal and idler fields are defined as 
        \begin{align}
            n_{a}(\omega) & = \braket{\hat{n}_{a}(\omega)} = \braket{\hat{a}^\dagger(\omega)\hat{a}(\omega)},\\
            n_{b}(\omega) & =  \braket{\hat{n}_{b}(\omega)} =  \braket{\hat{b}^\dagger(\omega)\hat{b}(\omega)}
        \end{align}
        and are measured independently by scanning over all the frequencies.
        Therefore, the spectra $n_{a}(\omega)$ and $n_{b}(\omega)$ are properties of the subsystems, and in order to characterize the spectral correlation properties, we need to define the joint spectral intensity.

\subsubsection{ Joint spectral intensity  }
\label{ssec_JSI}

        Before we introduce the joint spectral intensity (JSI), let us consider a second-order photon-number correlator $\braket{\hat{n}_a(\omega) \hat{n}_b(\omega^\prime)}$.
        For the GSID-state, it takes the form (see Appendix~\ref{app_norm_ordered_undisplcaed})
        \begin{equation}
            \braket{\hat{n}_a(\omega) \hat{n}_b(\omega^\prime)} =  |\braket{\hat{a}(\omega) \hat{b}(\omega^\prime)}|^2 + n_{a}(\omega)n_{b}(\omega^\prime).
        \end{equation}
        The second term depends only on the product of the signal and idler photon number spectra; therefore, it does not represent the correlations stored in the GSID-state but the accidental coincidences.
        In order to exclude them,  we can define the JSI as a second-order photon-number covariance  
        \begin{equation}
            J(\omega,\omega^\prime) = \braket{\hat{n}_a(\omega) \hat{n}_b(\omega^\prime)} - \braket{\hat{n}_a(\omega)}\braket{\hat{n}_b(\omega^\prime)}.
        \end{equation}
        Then the JSI for the GSID-state is 
        \begin{equation}
            J(\omega,\omega^\prime) = |\braket{\hat{a}(\omega)\hat{b}(\omega^\prime)}|^2
        \end{equation}
        and represents the spectral correlations between monochromatic modes of signal and idler subsystems.
        The function $\braket{\hat{a}(\omega)\hat{b}(\omega^\prime)}$ can be interpreted as the complex amplitude of the JSI, and therefore we call it the joint spectral amplitude (JSA).

        Typically, the JSI is defined via the two-photon approximation, and only coincidence counts without subtracting accidental ones are taken into account while measuring it~\cite{Harder_2013,Zielnicki_2018}. 
        Indeed, for pure states, in the low-photon-number limit, the impact of accidental counts is negligible (as seen in Eqs.~\eqref{eq_uncet_inequality_elements} and \eqref{eq_degree_corr_def}), and our definition of JSI coincides with that obtained from the two-photon approximation.
        However, accidental counts become significant as the photon number grows (the quadratic photon-number term in Eq.~\eqref{eq_degree_corr_def}) and must be accounted for to measure signal-idler correlations accurately.

        Note that the JSI does not contain the full information about the spectral correlations: the function $\braket{\hat{a}(\omega)\hat{b}(\omega^\prime)}$ is a complex function, while the JSI is not.
        This means that JSI measurements omit the phase information. As we see in the next section, the phase is critical for the C-basis determination and the number of correlated modes.

\subsubsection{ The number of correlated modes and flat phase assumption of JSA  }
\label{ssub_meas_corr_modes}

        As we have found in Sec.~\ref{ssub_c_basis}, the number of correlated modes $K$ is defined in terms of the C-basis and requires not only the knowledge of the intensity but also the phase of the JSA.
        Its measurement is possible using interferometer-based detection schemes~\cite{Tischler_2015,MacLean_2019,Davis_2020,Thekkadath_2022}, and as a result, the desired value $K$ can be obtained.

        However, these interferometer-based schemes are more sophisticated than JSI measurements with frequency-resolving intensity detectors. 
        Therefore, the typical experimental assumption is that the JSA has a flat- or linear- phase profile. 
        Then the value $\sqrt{\text{JSI}}$ is postulated as the number of modes~\cite{Harder_2013,Zielnicki_2018}.
        However, this assumption can lead to significant underestimation of the number of modes and below we show why.

        Let us assume that we have measured the JSI $J(\omega,\omega^\prime)$ and therefore we can have determined the modulus of the JSA $|\braket{\hat{a}(\omega)\hat{b}(\omega^\prime)}| = \sqrt{J(\omega,\omega^\prime)}$ (but not its phase).
        In terms of basis reconstruction, the modulus of the JSA corresponds to the S-basis, which is not a `good' basis (see Sec.~\ref{ssub_s_basis}).
        The Schmidt decomposition of the $\sqrt{J(\omega,\omega^\prime)}$ defines the number $K_{\sqrt{J}}$ which satisfies the following inequality (see Appendix~\ref{app_JSI_inequality})
        \begin{equation}
            K_{\sqrt{J}} \leq K.
            \label{eq_K_lbound}
        \end{equation}
        As a result, the value $K_{\sqrt{J}}$ is a lower bound for the number of correlated modes and therefore cannot be considered as the actual number of correlated modes $K$.
        
        This result shows that the linear-phase profile assumption of JSA is very strong and can lead to underestimation of the number of correlated modes. 
        For an accurate source characterisation, we should expect the worst scenario and assume an unknown phase-profile for the JSA.
        Although the value $K_{\sqrt{J}}$ provides only the lower bound for the number of correlated modes, it can be still useful for purity bound determination (see Sec.~\ref{sub_comments_bounds}).

\subsection{ Degree of correlation bound }
    \label{sub_deg_corr_bound}
        The parameters $\mathcal{N}_{a,b}$, $\mu_{a,b}$, $K$ and $\xi$ for a GSID-state $\hat{\rho}$ are not completely independent.
        There exists a bound on $\xi$ (see Appendix~\ref{app_ineq_degree_corr})
        \begin{equation}
            \xi \leq \mathrm{min} \big[  1, H(\vec{x}_a), H(\vec{x}_b) \big],  
            \label{eq_inequality_xi}
        \end{equation}
        where the vectors of parameters are $\vec{x}_a=(\mathcal{N}_a,\mathcal{N}_b,\mu_a,\mu_b,K)$ and $\vec{x}_b=(\mathcal{N}_b,\mathcal{N}_a,\mu_b,\mu_a,K)$.
        The function
        \begin{equation}
             H(\vec{y}) = \frac{ x \big(\frac{ x^\prime }  { \sqrt{km^\prime} }  + 1 \big)}{ \sqrt{x + \frac{x^2}{m} }   \sqrt{x^\prime + \frac{(x^\prime)^2}{m^\prime} }},
         \end{equation}
         where the vector of variables reads as $\vec{y} = (x,x^\prime,m,m^\prime,k)$.

        From a practical point of view, inequality~\eqref{eq_inequality_xi} can be used as an indicator of a correctness and consistency of the measured values.

\subsection{ Purity bounds }
    \label{sub_purity_bounds}
    Let us assume that, for the studied GSID-state $\hat{\rho}$, we have measured the values  $\mathcal{N}_{a,b}$, $\mu_{a,b}$, $K$ and $\xi$.
    What can we say about the purity of the state?
    Here we present two lower bounds for state purity $P(\hat{\rho})$ and an upper bound for the renormalized purity $R(\hat{\rho})$. 
    To write the expression in compact forms, in this section, we use two vectors of parameters 
    $\vec{x}_a = (\mathcal{N}_a, \mathcal{N}_b, \mu_a, \mu_b, K, \xi) $ and 
    $\vec{x}_b = (\mathcal{N}_b, \mathcal{N}_a, \mu_b, \mu_a, K, \xi) $.

\subsubsection{ Strict lower bound for purity }
    \label{sub_strict_purity_bounds}

    The first bound is 
    \begin{equation}
        P(\hat{\rho}) \geq \tilde{P}^s(\hat{\rho}) = \mathrm{max} \Big[ p_s(\vec{x}_a), p_s(\vec{x}_b) \Big],
        \label{eq_purity_bound_strict}
    \end{equation}
    where
    \begin{equation*}
        p_s(\vec{y}) = e^{ -2(x + x^\prime) + 4 F(\vec{y})},
    \end{equation*}
    and 
    \begin{equation*}
        F(\vec{y}) = \dfrac {\zeta \sqrt{k m (x+\frac{x^2}{m}) (x^\prime+\frac{(x^\prime)^2}{ m^\prime})} }{  (2 x   +  \sqrt{k m})},
    \end{equation*}
    with the vectors of variables $\vec{y} = (x, x^\prime, m, m^\prime, k, \zeta)$.
    For the derivation, see Appendix~\ref{app_purity_bound_strict}.
    
    This bound is a strict one: it is valid for any state; however, it provides a very conservative estimate.
    Note that all the signal-idler correlations are determined by the functions $F(\vec{y})$ where the parameter $\xi$ plays a critical role.
    For thermal GSID-state $\hat{\rho}_{th}$ with  $\xi=0$, the bound is given by
    \begin{equation*}
       P(\hat{\rho}_{th}) \geq e^{ -2(\mathcal{N}_a + \mathcal{N}_b) }.
    \end{equation*}

\subsubsection{ Approximate lower bound for purity }
    \label{sub_approximate_purity_bounds}

    The second bound reads as
    \begin{equation}
        P(\hat{\rho}) \geq \tilde{P}^a_{r_a, r_b}(\hat{\rho}) =  \mathrm{max} \Big[ p_a(\vec{x}_a, r_a, r_b),    p_a(\vec{x}_b, r_a, r_b) \Big].
        \label{eq_purity_bound_2__}
    \end{equation}
    where the function
    \begin{equation*}
        p_a(\vec{y}, r, r^\prime) =    \bigg[ 1 + \frac{2x}{r}   \bigg]^{-r} \Bigg[ 1 + \frac{2x^\prime}{r^\prime} -\dfrac{4F(\vec{y})}{ r^\prime   }   \Bigg]^{-r^\prime}.
    \end{equation*} 
    For the derivation, see Appendix~\ref{app_purity_bound_approx}.

    To use Eq.~\eqref{eq_purity_bound_2__}, we need to determine two additional parameters $r_{a,b}$.
    According to the derivation, these parameters correspond to the number of nonzero occupied modes of the studied state $\hat{\rho}$.
    For single-mode states (when $\mu_a=\mu_b=1$), $K = 1$ and the equality holds.
    For multimode states, the equality holds only for pure states with a uniform photon-number distribution in MW-basis.
    In this case the $r_{a}=r_{b}$, $K=\mu_{a}=\mu_{b}$ and $\xi=1$.

    However, the photon-number distribution remains unknown in the studied framework.
    Therefore, for practical usage, we suggest to replace the values $r_{a,b}$ with the measured values $\mu_{a,b}$ for any state.
    Indeed, the values $\mu_{a,b}$ indicate the number of `significant' modes and the function~\eqref{eq_purity_bound_2__} supports the fractional values of $r_{a,b}$.
    Replacement of the unknown non-zero matrix elements $r_{a,b}$ by the values $\mu_{a,b}$ is not a correct mathematical operation; therefore, 
    \begin{equation}
         P(\hat{\rho})  \gtrsim  \tilde{P}^a_{\mu_a, \mu_b}(\hat{\rho}),
        \label{eq_purity_bound_approx}
    \end{equation}
    where we use the symbol $\gtrsim$ to emphasize that this inequality can be violated.
    Despite of the absence of mathematical rigor, the value $\tilde{P}^a_{\mu_a, \mu_b}(\hat{\rho})$ has a large practical potential for almost single-mode sources, which is demonstrated in Sec.~\ref{sec_examples_realistic}.

    In addition to $\tilde{P}^a_{\mu_a, \mu_b}(\hat{\rho})$, we can use, e.g., $\tilde{P}^a_{2\mu_a, 2\mu_b}(\hat{\rho})$, which gives a more accurate bound; however, it is also not necessarily strict.
    An optimal choice of $\tilde{P}^a_{r_a, r_b}(\hat{\rho})$ for a given state requires the knowledge of the photon-number distribution, which remains unknown in the studied framework, and therefore beyond the scope of this paper.

    \subsubsection{ Bound for the renormalized purity }
    \label{sub_renorm_purity}

        In Sec.~\ref{sec_purity}, we introduced the purity and the renormalized purity $R(\hat{\rho})$.
        Unfortunately, the lower purity bounds~\eqref{eq_purity_bound_strict} and \eqref{eq_purity_bound_approx} do not provide a bound on $R(\hat{\rho})$: the chain of inequalities breaks during the derivations.
        Therefore, to determine bounds for $R(\hat{\rho})$ a different approach must be used.

        In this work, we present an upper bound for the  renormalized purity for weak GSID-states, i.e., the states with $\mathcal{N}_{a,b} \ll 1$. 
        In this case, 
        \begin{equation}
             R(\hat{\rho}) \leq \tilde{R}=\mathrm{min}[g(\vec{x}_a), g(\vec{x}_b)],
             \label{eq_renormalized_purity_bound}
        \end{equation}
        where 
        \begin{equation}
            g(\vec{y}) = \dfrac{ \dfrac{1 + 2x^\prime}{ 1 + 2x^\prime - 4 F(\vec{y}) } - 1} {(1 + 2x)(1 + 2x^\prime)  - 1} 
        \end{equation}
        with the vectors of variables $\vec{y} = (x, x^\prime, m, m^\prime, k, \zeta)$.
        For derivation, see Appendix~\ref{app_renormalized_purity_bound}.

        In the limit of $\mathcal{N}_{a,b} \rightarrow 0$,  $\tilde{R} \rightarrow \xi$; therefore, for weak GSID-states, 
        the degree of correlation is an upper bound for the renormalized purity
        \begin{equation}
            R(\hat{\rho}) \leq \xi.
            \label{eq_r_purity_xi}
        \end{equation}

    \subsubsection{ Comment on the bounds }
    \label{sub_comments_bounds}

        All of the derived bounds~\eqref{eq_purity_bound_strict}, \eqref{eq_purity_bound_approx} and \eqref{eq_renormalized_purity_bound} require determination of the parameter $K$, which (as was discussed in Sec.~\ref{ssec_JSI} and \ref{ssub_meas_corr_modes}) is difficult to measure.
        One can instead use the value $K_{\sqrt{J}}$ (see Appendix~\ref{app_purity_estimation}), however, the purity bounds are worse in this case.
        In the case when $K_{\sqrt{J}}$ is unknown, we can use $K=1$, providing the lowest bound.

\section{Numerical Examples}
\label{sec_examples}

In this section, we consider several examples of GSID-states.
Firstly, we present states with identical photon number distributions, but which exhibit unique signal-idler correlation properties, highlighting the independence between the properties of the subsystems and the full state.
In the second part, we consider a realistic waveguided GSID-source with 5\% optical loss after state generation, and study the effect of spectral filtering. This investigation shows the importance of treating both source generation and preparation holistically, and demonstrates the applicability of our framework to current state-of-the-art sources.

\subsection{Examples in MW-basis}
\label{sec_examples_mw_basis}

\begin{figure}
\centering
\includegraphics[width=\linewidth]{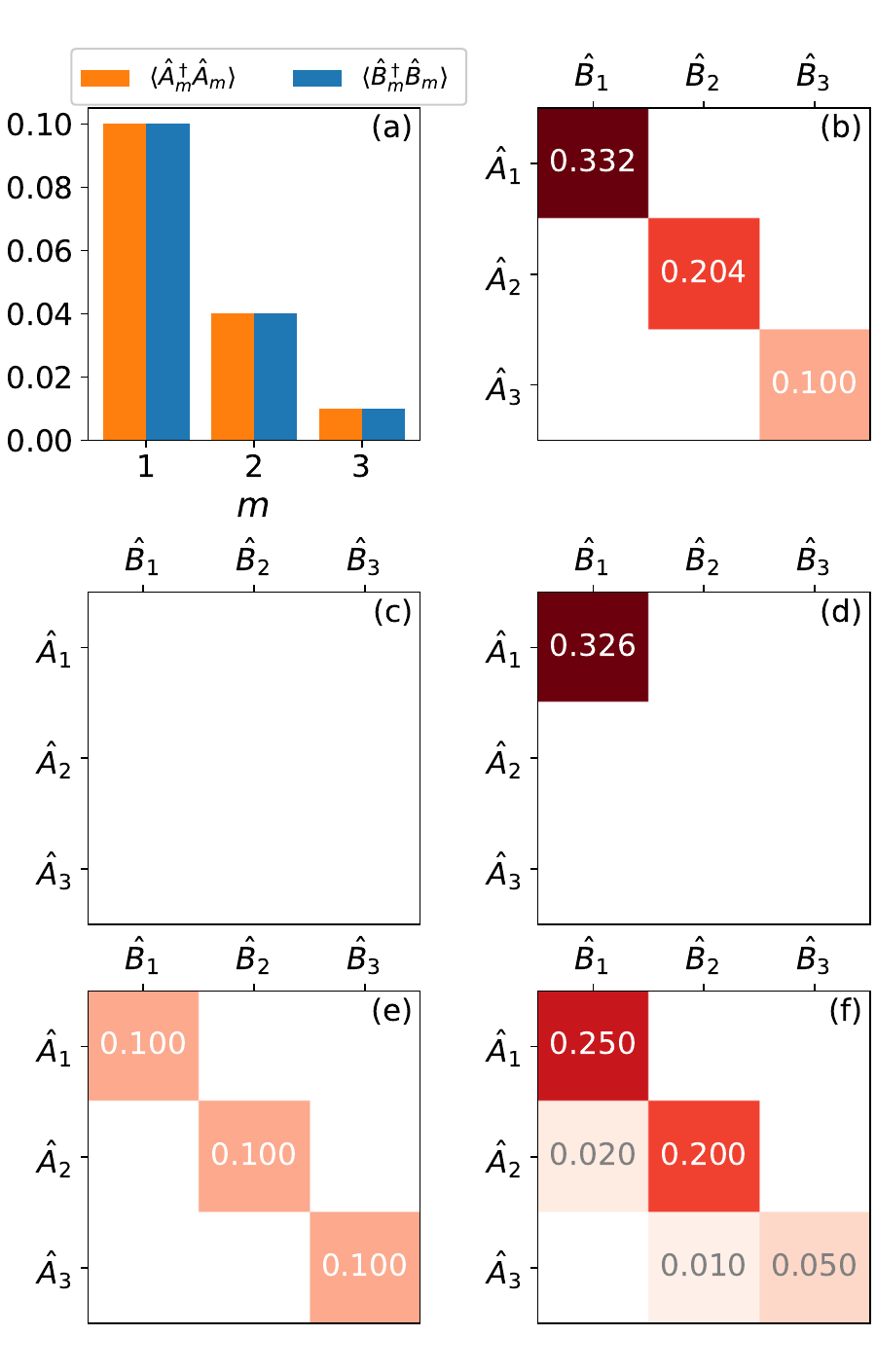}

\caption{Correlation matrices in MW-basis for the considered GSID-states in Sec.~\ref{sec_examples_mw_basis}.
(a) The diagonal elements of matrices $\veccal{D}_{A, B}$ (The photon number distribution in MW-basis). The matrices $\veccal{D}_{A, B}$ are the same for all the considered states.
The matrices $\veccal{C}_{AB}$ for the considered states: (b) pure, (c) thermal, (d) mixed I, (e) mixed II and (f) mixed III.
The resulting parameters are given in Table~\ref{tab_examples} .
}
\label{fig_2}
\end{figure}

GSID-states can always be written in the MW-basis; therefore, we can always define a valid quantum state with the diagonal matrices $\veccal{D}_{A, B}$ and some physical matrices $\veccal{C}_{AB}$ that satisfy the uncertainty relation~\eqref{eq_uncertainty_main}.
Here, for simplicity, we consider a ($3+3$)-mode system and define five states with the same photon-number distributions for both signal and idler fields.
This means that the matrices $\veccal{D}_{A}=\veccal{D}_{B}\equiv\vec{\Lambda}$ are the same for all of them. The matrices $\veccal{D}_{A,B} $ are shown in Fig.~\ref{fig_2}(a), providing the number of photons $\mathcal{N}_a=\mathcal{N}_b=0.15$ and the number of modes $\mu_a=\mu_b\equiv\mu= 1.92$.
To obtain GSID states with different correlation properties, we consider five different $\veccal{C}_{AB}$-matrices shown in Fig.~\ref{fig_2}(b-f).
The presented GSID states are physical, i.e., the uncertainty relation holds for all states~\eqref{eq_uncertainty_main}. 
Table~\ref {tab_examples} lists the parameters of the considered states.

\begin{table}[h!]
\begin{tabular}{l|c|c|c|c|c|c|c|c}
State & $\mathcal{N}_{a,b}$ & $\mu_{a,b}$ & $g_{a,b}^{(2)}$  & $\xi$ & $K$    & $g^{(11)}$ & $P(\hat{\rho})$ & $R(\hat{\rho})$\\
\hline
Pure      &\multirow{5}{*}{0.15} & \multirow{5}{*}{1.92} & \multirow{5}{*}{1.52} & 1.000 & 1.88 & 8.19 & 1.000 & 1.000 \\
Thermal   & & &  & 0.000 & --     & 1.00 & 0.572 & 0.000  \\
Mixed I   & & &  & 0.658 & 1.00   & 5.73 & 0.812 & 0.561 \\
Mixed II  & & &  & 0.187 & 3.00   & 2.35 & 0.635 & 0.146 \\
Mixed III & & &  & 0.652 & 1.99   & 5.69 & 0.812 & 0.561 \\
 \end{tabular} \\
 \caption{Parameters of the considered GSID-states in Sec.~\ref{sec_examples_mw_basis}.}
 \label{tab_examples}
\end{table}

The first state we investigate is a pure one ($P=1$). 
The MW-basis corresponds to the bipartite Schmidt basis, and the matrix $\veccal{C}_{AB}$ is diagonal with elements defined by Eq.~\eqref{eq_schmidt_correlators} (see Fig.~\ref{fig_2}(b)).
The pure state is also maximally correlated with $\xi=1$; however, the number of correlated modes is less than the total number of modes, since $K=1.88<\mu$. 
To explain this behavior, let us have a look at the matrices $\veccal{D}_{A}$ and $\veccal{C}_{AB}$ in Eq.~\eqref{eq_schmidt_correlators}.
The matrix $\veccal{C}_{AB}$ also includes the term $\vec{\Lambda} ^2$, which changes the ratio of signal-idler correlations between different modes. 
As a result, the ratios $[\veccal{C}_{AB}]_{11}/[\veccal{C}_{AB}]_{33} > [\veccal{D}_{A}]_{11}/[\veccal{D}_{A}]_{33}$, and from Eqs.~\eqref{eq_numberMu_pure_state} and \eqref{eq_numberK_pure_state}, we get that $\mu\neq K$.
These results emphasize that -- even for pure states -- the number of correlated modes and the number of modes can be different.

The second considered state is a thermal GSID-state, which shows the absence of signal-idler correlations. 
The matrix $\veccal{C}_{AB}=0$; thus, $\xi=0$ and the number of correlated modes $K$ is undefined.
Note that the purity of this state is $P=0.57$ (see Sec.~\ref{sec_purity}) while the renormalized purity is $R=0$, which reveals the relevance of this renormalized metric.

The final three states which we consider, three mixed states with nonzero matrices $\veccal{C}_{AB}$, are shown in Fig.~\ref{fig_2}(d-f). 
Once again, the chosen state states are physical; satisfying the uncertainty relation~\eqref{eq_uncertainty_main}. However, they contain different signal-idler correlation distributions.
Indeed, the presented states demonstrate that the number of correlated modes $K$ can be both smaller and larger than the number of modes $\mu$ (see mixed states I and II, respectively).  
Note that the purities of the mixed states I and III coincide despite having different $\xi$ values and different distributions of signal-idler correlations that, in the case of mixed state III, also contain off diagonal elements.

In summary, the GSID-states can have a sophisticated structure of signal-idler correlations -- even when the properties of signal and idler subsystems are the same.
The number of photons $\mathcal{N}_{a,b}$ and the number of modes $\mu_{a,b}$ are properties of the signal and idler subsystems and no conclusion can be drawn about the signal-idler correlations from them.

\subsection{Realistic example: nearly single-mode source}
\label{sec_examples_realistic}

\begin{figure}[h]
\centering
\includegraphics[width=0.42\linewidth]{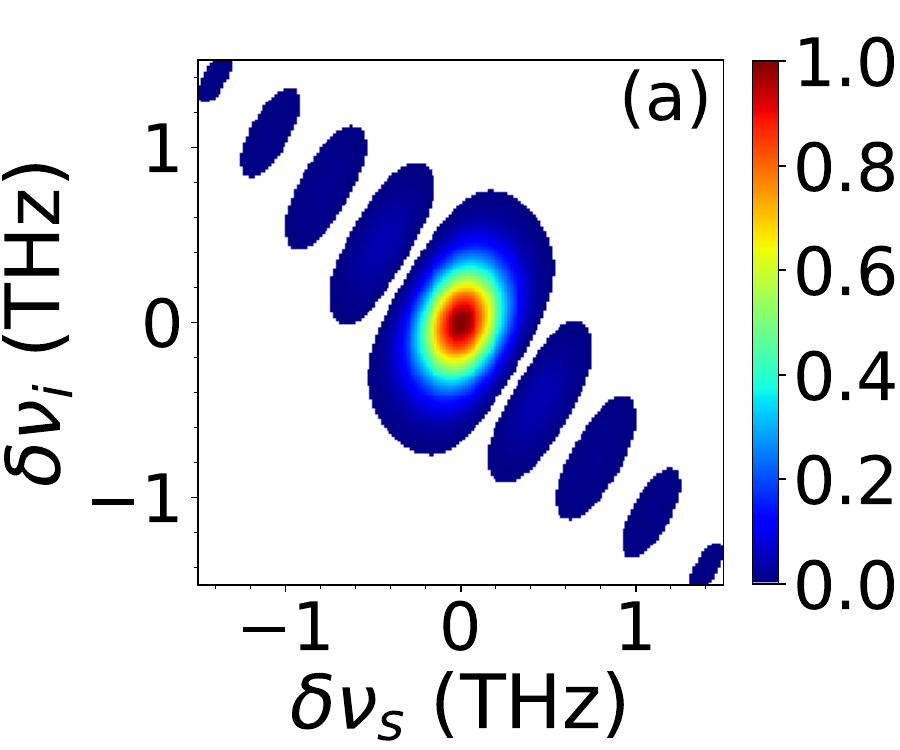}
\includegraphics[width=0.56\linewidth]{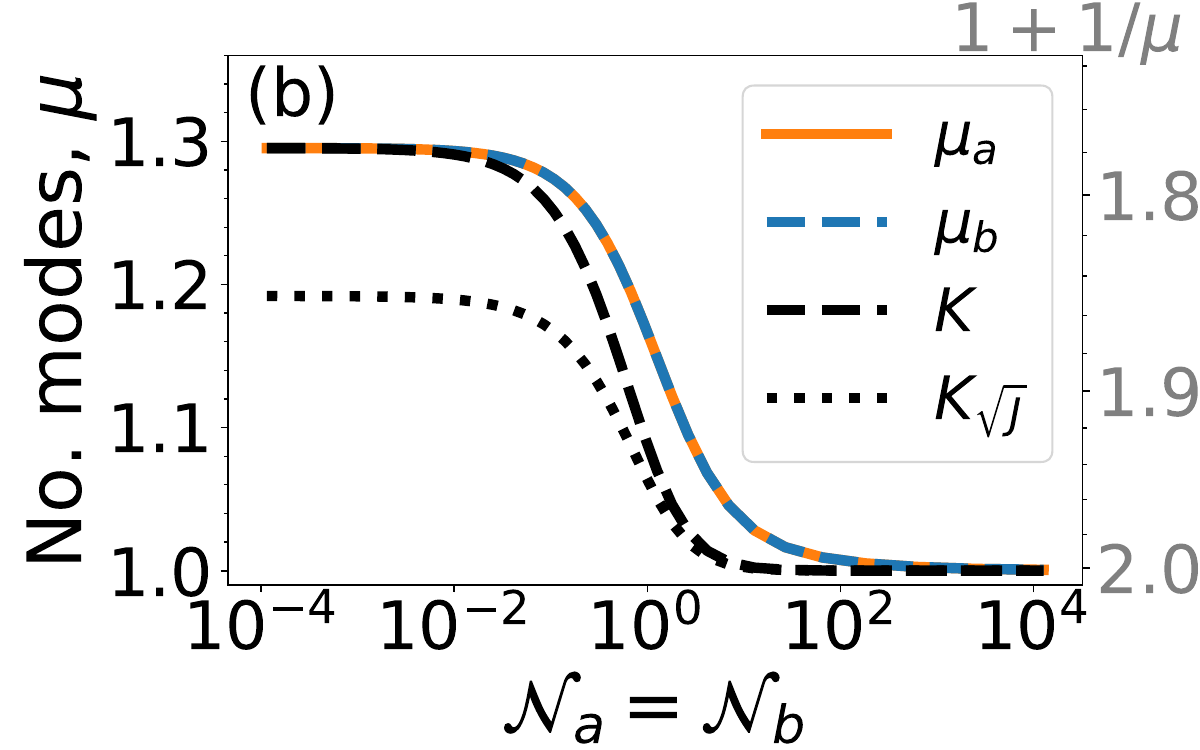} 
\caption{ (a) The JSI and (b) the dependence of the number of modes for the state seen directly after the generation stage of the realistic GSID-source).
}
\label{fig_3}
\end{figure}
 
In this section, we consider numerical simulations of three realistic GSID-sources, which consist of both generation and preparation stages (see Fig.~\ref{fig_scheme_1}(a)). 
The sources under study (denoted as `S1', `S2', and `S3') have identical generation stages but varying preparation stages, which represent different spectral shaping of the generated squeezed field.

For the generation stage, we consider a type-II PDC in a lossless single-mode waveguide and, for its numerical simulation, we use the formalism developed in Refs.~\cite{Kopylov_2025,Kopylov_2025_HOM}. 
This approach supports spatial-ordering effects (also often referred to as time-ordering), allowing us to study the PDC properties both in low- and high-gain regimes. 
The matrices $\veccal{D}_{a,b}$ and $\veccal{C}_{ab}$ are obtained from the numerical solution of the corresponding differential equations~\cite{kopylov_2025_zenodo}.
The generated mean photon number is varied by tuning the intensity of the pump field driving the process.
As a pump, we use a Gaussian pulse with a central frequency $\omega_0$ and a pulse duration of $0.6$~ps.
For the waveguide, we consider a linear dispersion with the phase-matching function    
$
    \Delta k(\omega, \omega^\prime) = a (\omega-\frac{\omega_0}{2}) + b (\omega^\prime-\frac{\omega_0}{2}),
$ 
with parameters $a=0.12$~mm$\cdot$ps and $b=-0.2$~mm$\cdot$ps.
The parameters are taken to produce a PDC with a low number of modes that is a fair representation of what can be achieved in realistic waveguides \cite{luo2026}. To investigate the impact of the different preparation stages, we first present the properties of the state directly after generation.
The JSI of the degenerate PDC field in the low-gain regime (with mean photon number of $\approx10^{-4}$) is shown in Fig.~\ref{fig_3}(a), and Fig.~\ref{fig_3}(b) presents the number of modes as a function of the generated mean photon number.

Let us look at the mode structure of the GSID-state exiting the generation stage of the source.
The waveguide is taken to be lossless, resulting in the generation of a pure state. 
Therefore, for its analysis we can use the results from Sec.~\ref{sec_gtts_states}.
Indeed, Fig.~\ref{fig_3}(b) shows that the number of modes $\mu_{a}=\mu_{b}$ for any number of generated photons.
The number of modes reduces as the generated photon number increases, a behavior which is expected~\cite{luo2026,Christ2013}.
However, only in the low photon number regime is the number of correlated modes approximately equal to the number of modes, $K\approx\mu_{a,b}$, in agreement with results in Sec.~\ref{sec_pure_gtts}.
In addition, we see that the number of modes approximated from the JSI is less than the actual number of correlated modes determined from the JSA 
$K_{\sqrt{J}} < K$. 
The results presented in Fig.~\ref{fig_3}(b) demonstrate that the numbers $\mu$, $K$ and $K_{\sqrt{J}}$ do not coincide.
From an experimental point of view, this result shows that the measurement of second-order correlation functions and JSI are not interchangeable even for pure states
and instead provide complementary information.
This means that the values $K$ and $K_{\sqrt{J}}$, obtained from the JSA and the JSI, respectively, do not necessarily coincide with the value $\mu$, obtained from $g^{(2)}$.
Therefore, measurement of the $g^{(2)}$ alone, as suggested in Ref.~\cite{Christ_2011}, is not sufficient for full characterization of the mode structure of GSID-states, even when they are pure.

\begin{figure*}
    \centering
    \includegraphics[width=0.2\linewidth]{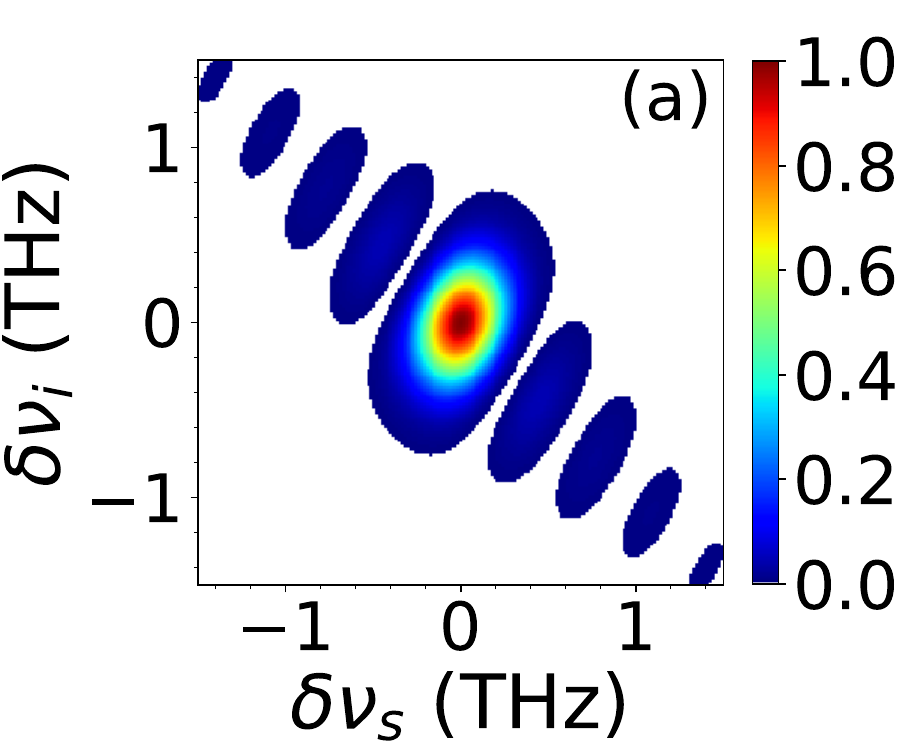}
    \includegraphics[width=0.26\linewidth]{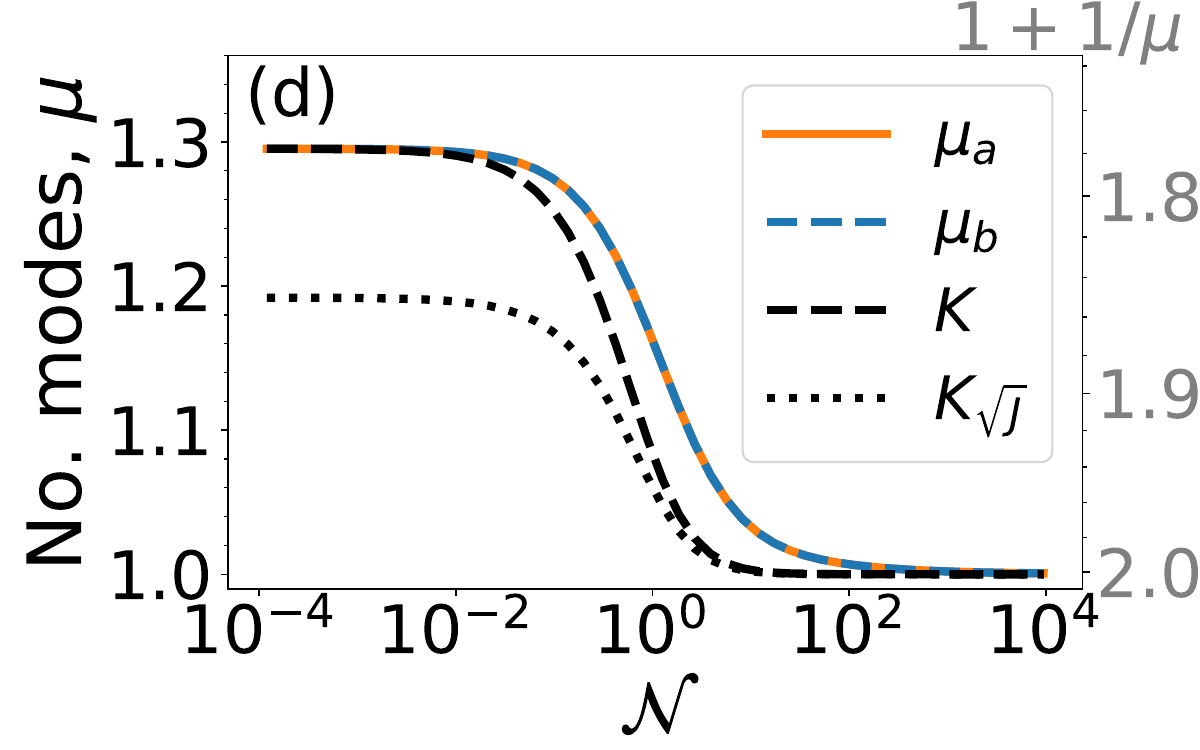}
    \includegraphics[width=0.26\linewidth]{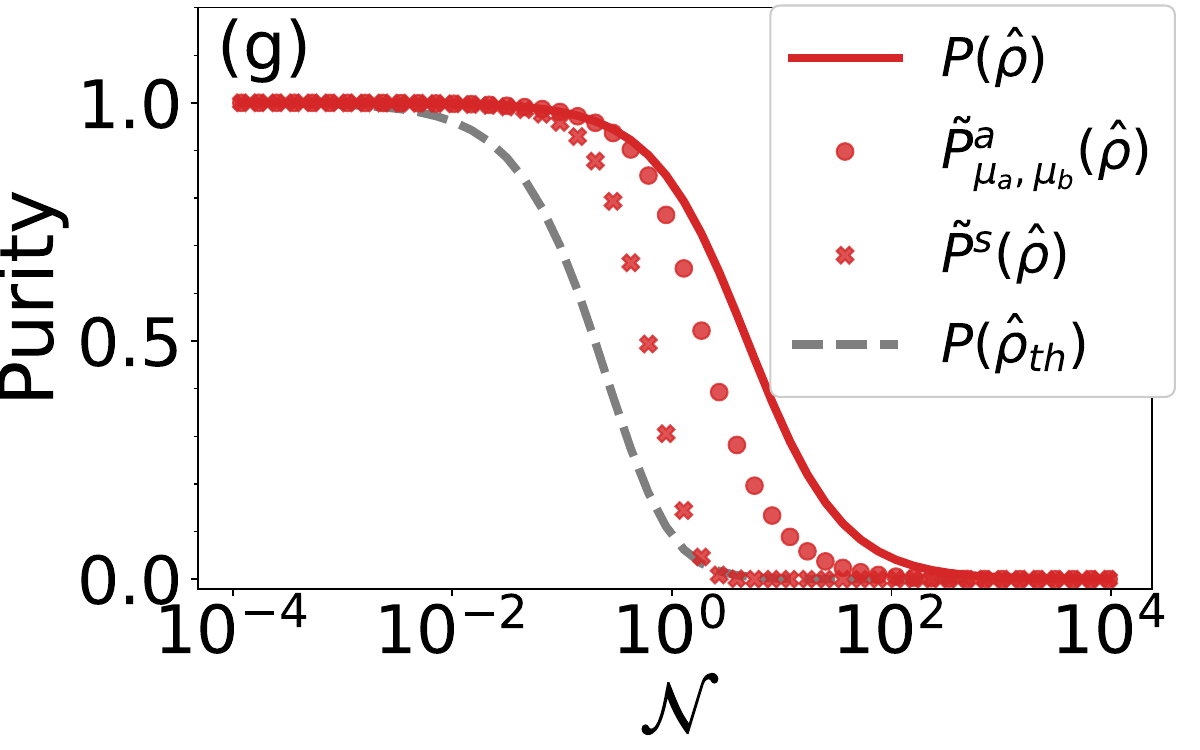}
    \includegraphics[width=0.26\linewidth]{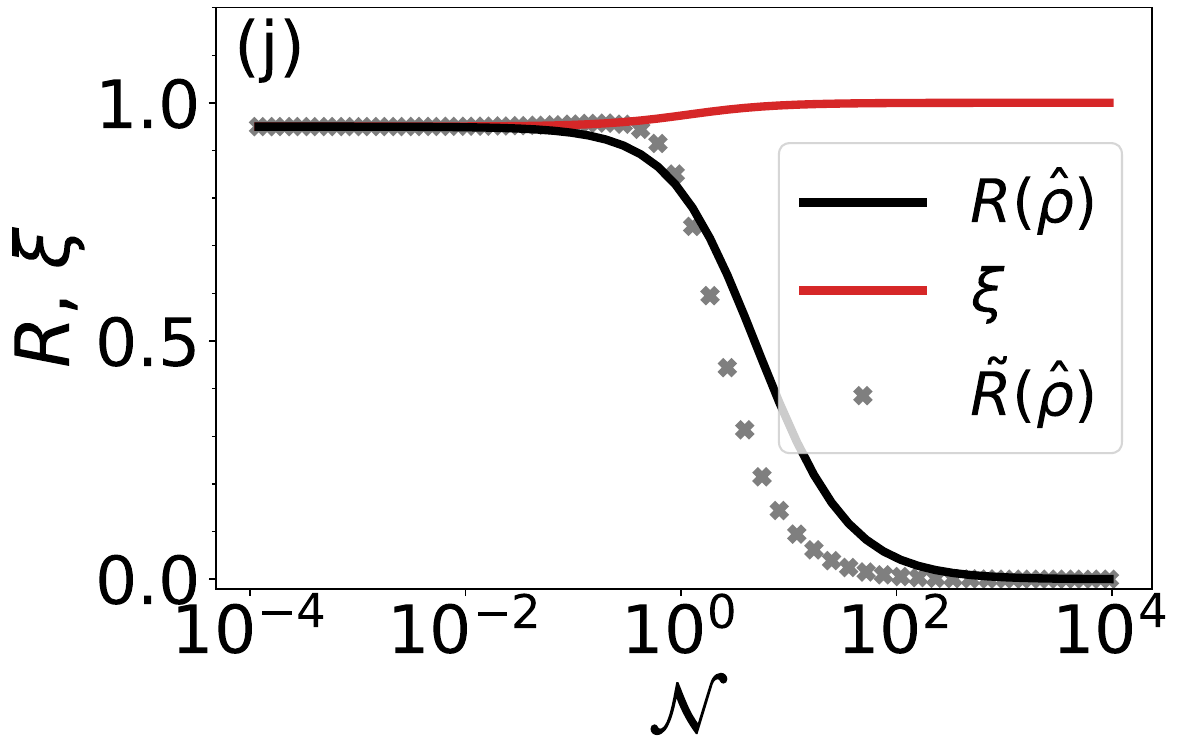}\\
    \includegraphics[width=0.2\linewidth]{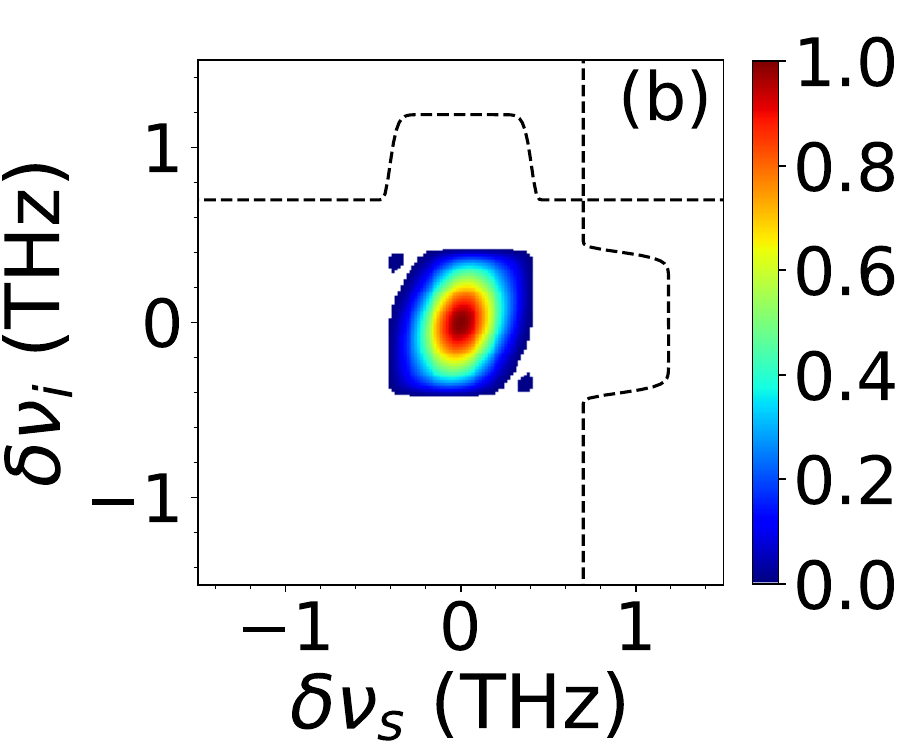}
    \includegraphics[width=0.26\linewidth]{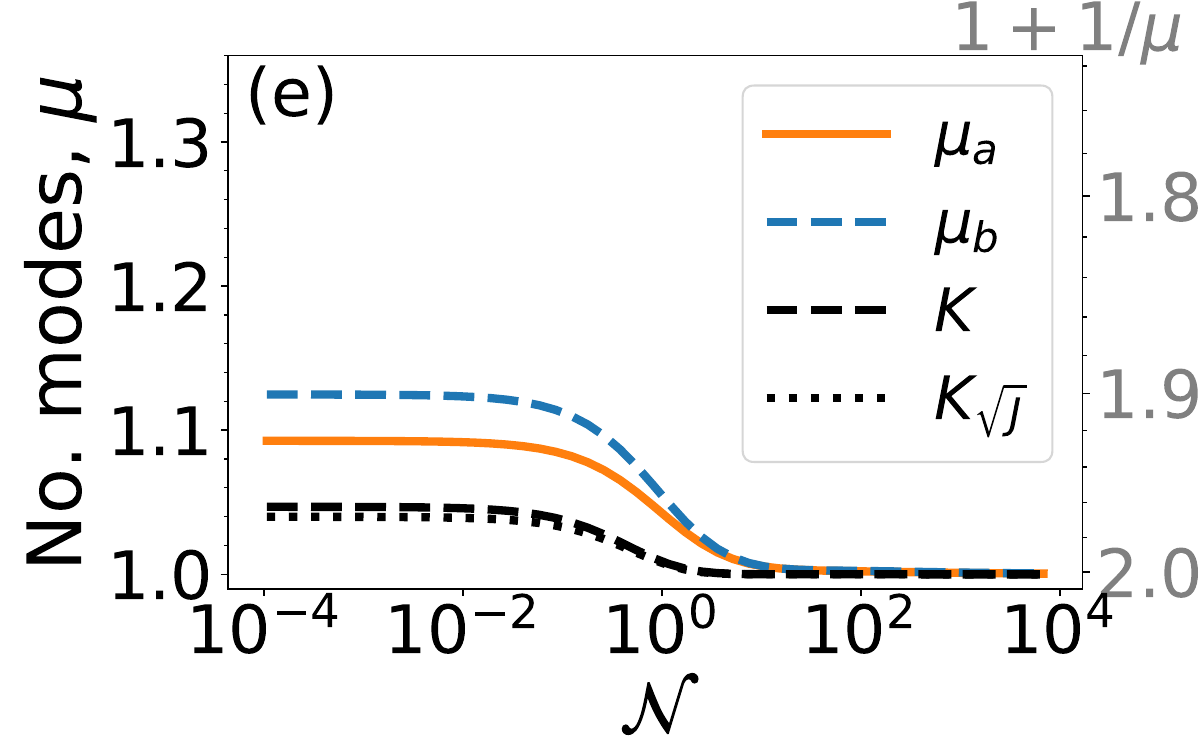}
    \includegraphics[width=0.26\linewidth]{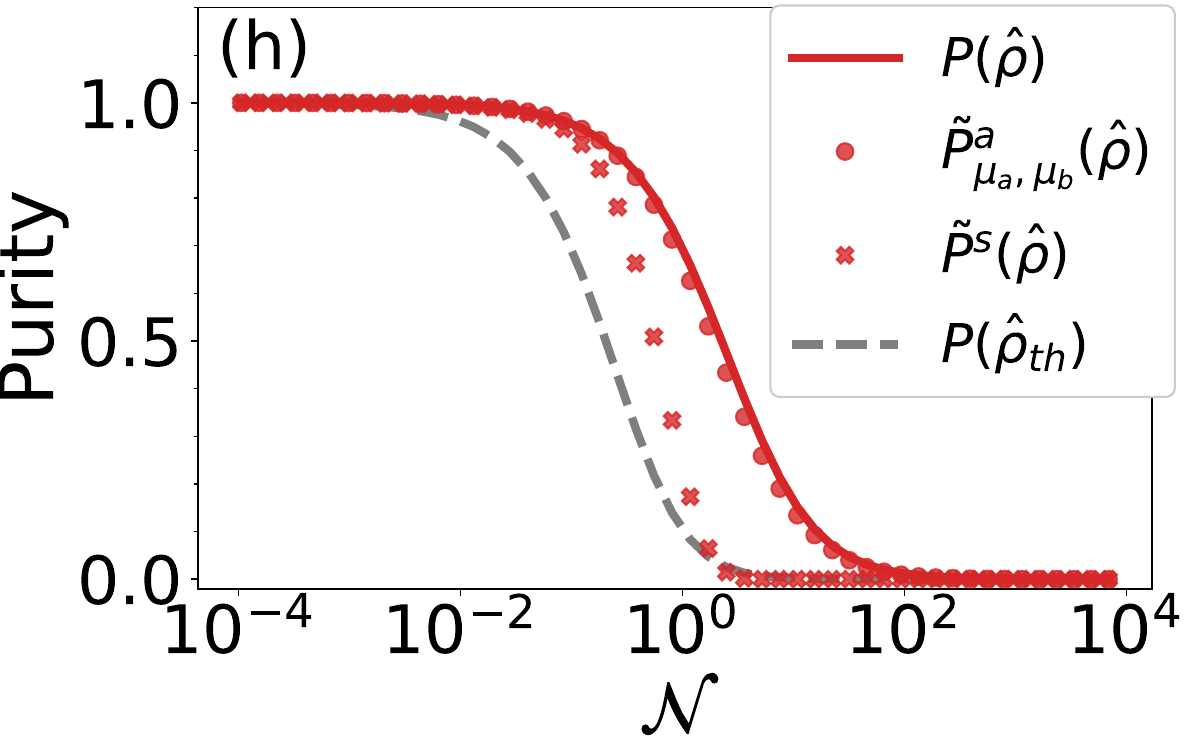}
    \includegraphics[width=0.26\linewidth]{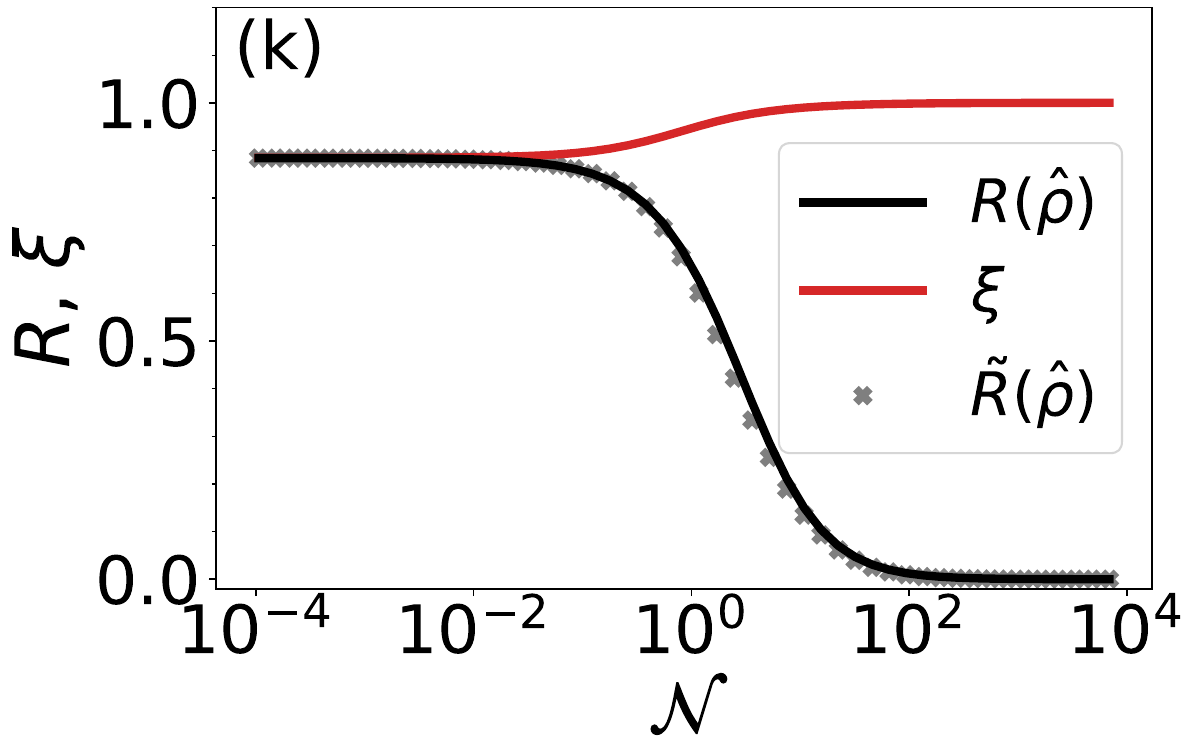}\\
    \includegraphics[width=0.2\linewidth]{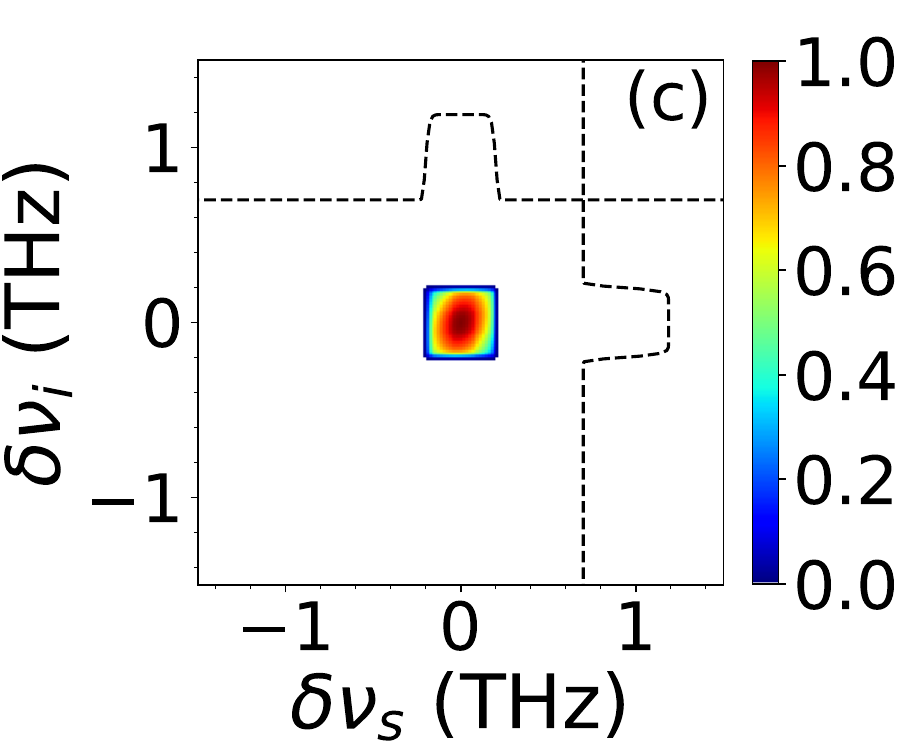}
    \includegraphics[width=0.26\linewidth]{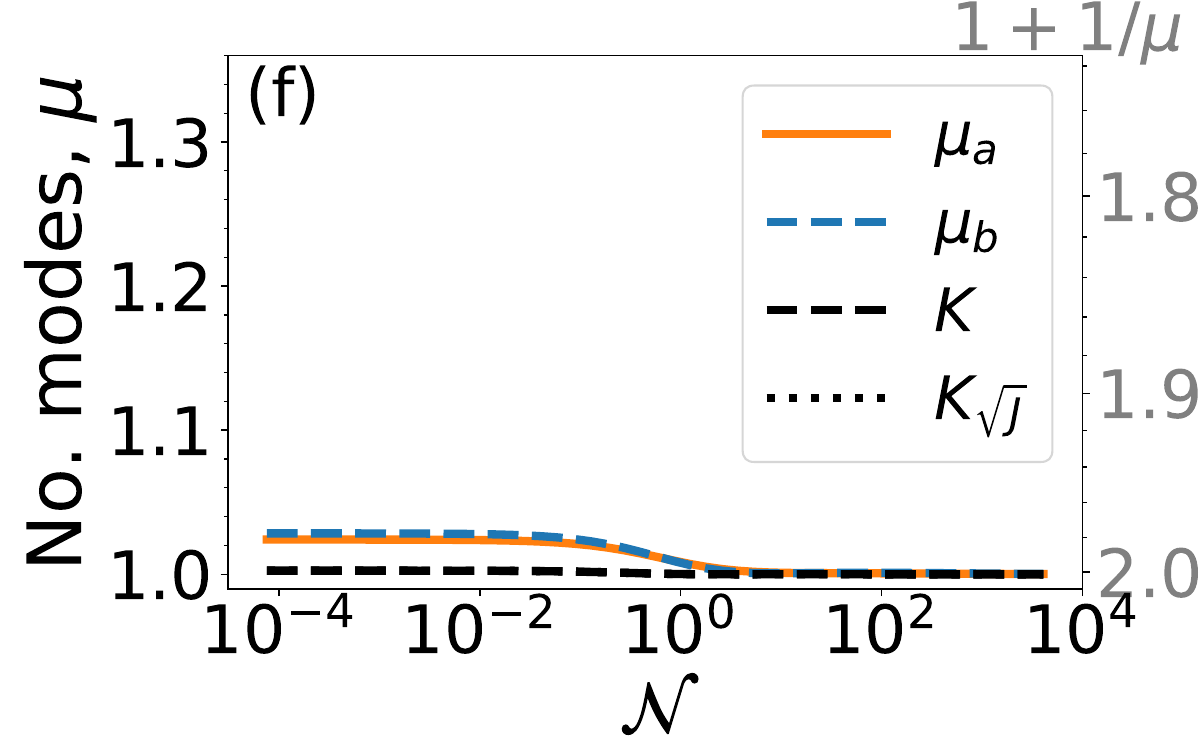}
    \includegraphics[width=0.26\linewidth]{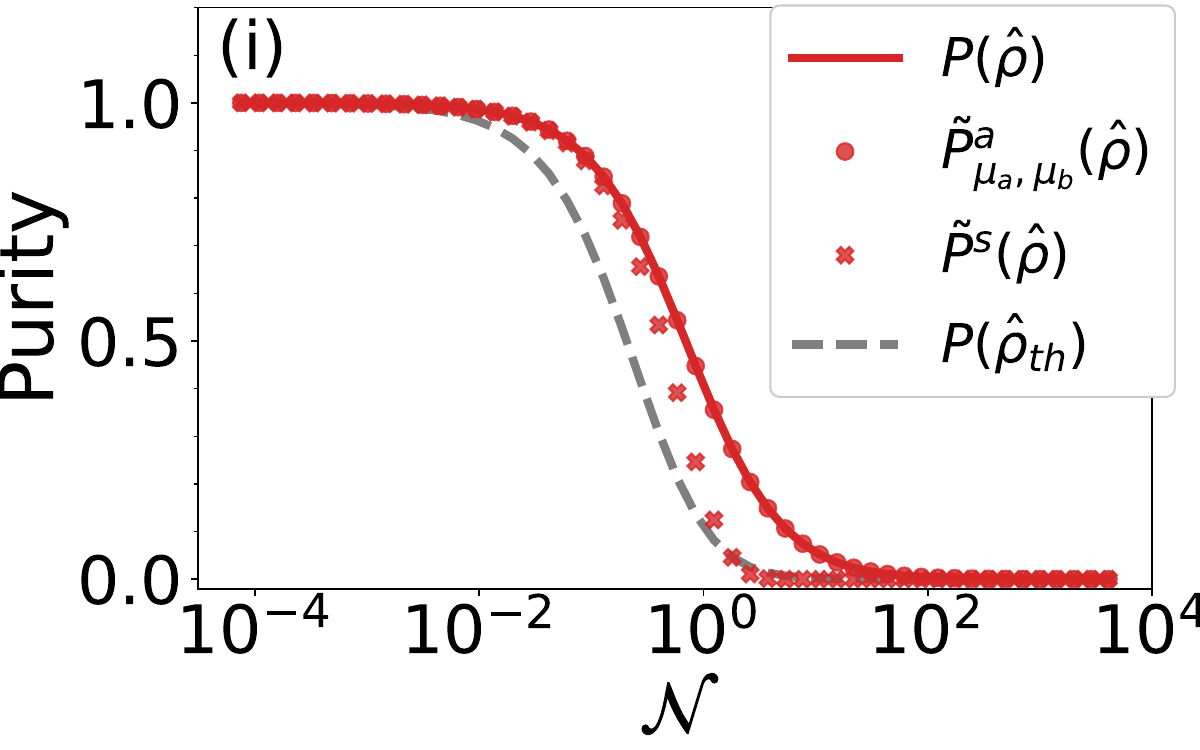}
    \includegraphics[width=0.26\linewidth]{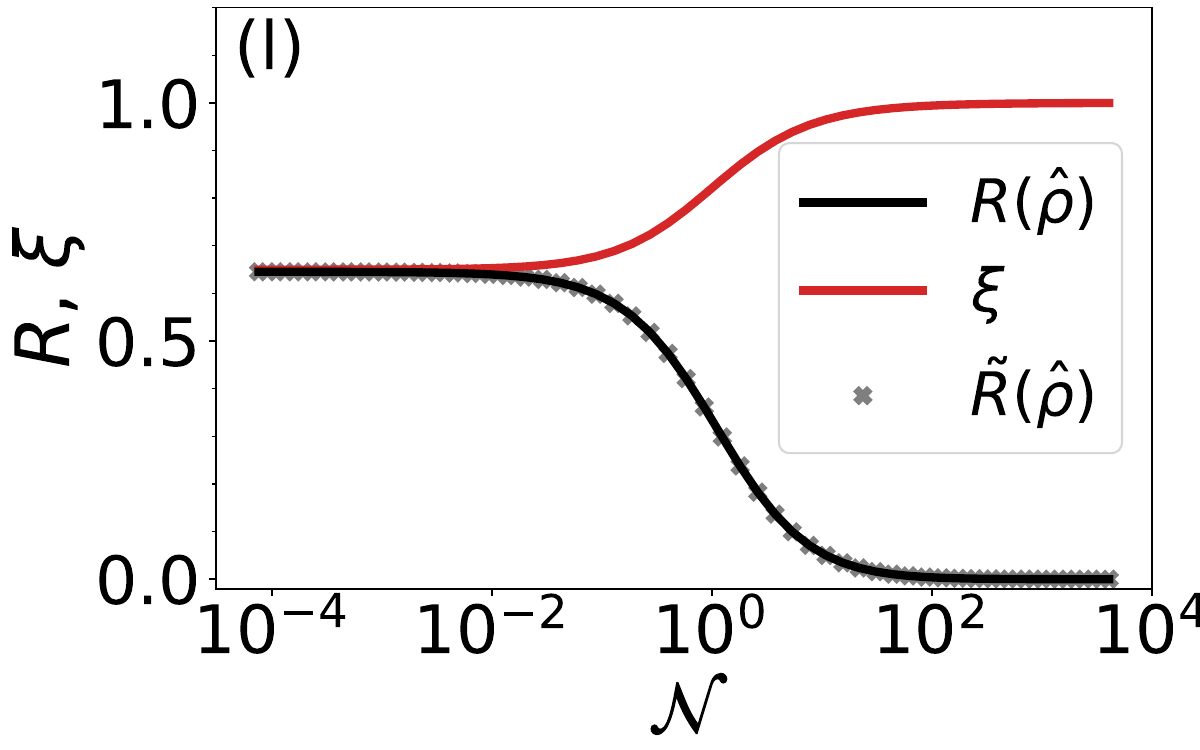}\\
    \caption{
        The numerical results for the GSID-sources `S1' (a,d,g,j), `S2' (b,e,h,k) and `S3'(c,f,i,l).  
        (a-c) The JSI for the sources `S1', `S2' and `S3'. 
        (d-f) The number of modes $\mu_{a,b}$, the number of correlated modes $K$, and the number $K_{\sqrt{J}}$.
        (g-i) The purity $P(\hat{\rho})$ (red solid), the approximate purity bound (red dots, Eq.~\eqref{eq_purity_bound_approx}) and strict purity bound (red crosses, Eq.~\eqref{eq_purity_bound_strict}) for the generated GSID-state $\hat{\rho}$ as a function of number of photons.
        The purity for the corresponding thermal state $P(\hat{\rho}_{th})$ is denoted by the dashed grey line.
        (j-l) The degree of correlation $\xi$ (red solid), the renormalized purity $R$ (black line) and the renormalized purity bound $\tilde{R}$ (grey crosses, Eq.~\eqref{eq_renormalized_purity_bound}).
    }
    \label{fig_4}
\end{figure*}

We now consider the impact of optical loss on the state exiting the generation stage, noting that this loss may be frequency dependent as is the case for optical filters.
Optical loss is included as components within the preparation stage, and we begin by assuming that all sources, `S1', `S2' and `S3', experience frequency-independent loss of 5\%.
To study the impact of frequency dependent loss, `S2' and `S3' have additional bandpass filters added to the preparation stage. 
In experiments, these bandpass filters are often chosen to reject the typical sinc squared profile side-lobes around the central peak.
For `S2' and `S3', the filter bandwidths are $\approx1.8\Delta$ and $\approx0.9\Delta$, respectively, where $\Delta$ is the FWHM of the JSI, depicted in Fig.~\ref{fig_3}(a).

In Fig~\ref{fig_4}(a-c), we present results for the JSIs, the number of modes, the purity and its bounds, and the renormalized purity for all sources.
Frequency-dependent loss results in an independent number of photons in the signal $\mathcal{N}_a$ and idler $\mathcal{N}_b$ subsystems, and as such, in these cases we define $\mathcal{N} = (\mathcal{N}_a + \mathcal{N}_b)/2$. 
According to the results in Fig.~\ref{fig_4}, we can define the three different regimes of generation according to the behavior of the parameters under investigation.
The first one is the Low Photon Number Regime (LPNR) corresponding to $\mathcal{N}\lesssim 10^{-1}$.
The second one is the High Photon Number Regime (HPNR), with $\mathcal{N}\gtrsim 10$, and the third one is the Intermediate Photon Number Regime (IPNR), where  $10^{-1}\lesssim \mathcal{N} \lesssim 10$. Let us analyze each of the regimes.

\subsubsection{Low Photon Number Regime}

Let us start with the number of modes depicted in Fig.~\ref{fig_4}(d-f).
For all the considered sources, the number of modes is approximately constant in the LPNR.  
For `S1', the number of modes $\mu_{a,b}$,  the number of correlated modes $K$, and the value $K_{\sqrt{J}}$ coincide with the initial lossless PDC (see Fig.~\ref{fig_3}(b)) because the frequency-uniform loss included as part of the preparation stage does not alter the mode structure.
Subsequently, the number of modes $\mu_{a,b}$ for `S2' and `S3' is decreased compared to `S1', as the bandpass filters present in these sources significantly modify the mode structure of the generated PDC light. 
According to Eq.~\eqref{eq_g2_2}, the different number of modes $\mu_a$ and $\mu_b$ lead to different values of second-order correlation functions $g^{(2)}_{a,b}$.

As was discussed in Sec.~\ref{sec_purity}, the purity for all the sources in LPNR is $P \approx 1$  (see Fig.~\ref{fig_4}(g-i)), and the renormalized purity~\eqref{eq_renormalized_purity} is a more informative state quality measure. 
For the sources considered here, the renormalized purities are shown in Fig.~\ref{fig_4}(j-l).
Sources `S2' and `S3' show a reduced renormalized purity when compared to source `S1'.
This occurs as the bandpass filters reduce the signal-idler correlations despite a decrease in the number of modes.

Unfortunately, the renormalized purity can not be directly measured with mode-unresolved detection; therefore, we need alternative experimentally accessible measures of the source quality.
In the LPNR, we can use both the degree of correlation and the renormalized purity bound $\tilde{R}$.
Both of them are plotted in Fig.~\ref{fig_4}(j-l), and for the studied sources, they coincide with the actual value of the renormalized purity.

Increasing the pump power of our sources, we notice that at around a mean photon number of $\mathcal{N} \approx 10^{-1}$, the source parameters begin to vary significantly, and we reach the IPNR.

\subsubsection{Intermediate Photon Number Regime}

GSID-sources operated in the IPNR are of special interest in state-of-the-art quantum applications. 
For pure single-mode squeezed light sources in the IPNR, one would expect a degree of squeezing ranging from $\approx0.1$~dB to $\approx20$~dB, which encompasses the typical operating range of squeezed light sources. 
The approach developed here allows us to perform a reliable and comprehensive characterization of the source quality in this regime -- something that is not possible using currently utilized techniques based on mode-unresolved detection. 
First, we observe that the number of modes decreases with increasing photon number $\mathcal{N}$ for all of the considered sources (see Fig.~\ref{fig_4}(d-f)) in the same way as for the initial PDC light (see Fig.~\ref{fig_3}(b)). 
Second, we also note that the purities decrease with increasing photon number: The brighter the quantum states become, the more sensitive they are to optical loss.
However, the degree of correlation $\xi$ continues to grow.
This is due to the presence of the quadratic photon number term in Eq.~\eqref{eq_degree_corr_def}, which becomes dominant for bright states.
This quadratic term can be interpreted as `classical' correlations.
Thus, in contrast to the LPNR, the degree of correlation $\xi$ in the IPNR, alone, does not characterise the quality of the GSID source.
Note that the renormalized purity bound $\tilde{R}$ (Eq.~\eqref{eq_renormalized_purity_bound}) also can not be used, because IPNR is beyond the approximation used in its derivation. 

In IPNR, both the strict~\eqref{eq_purity_bound_strict} and the approximate~\eqref{eq_purity_bound_approx} bounds become relevant.
Indeed, Fig.~\ref{fig_4}(g-i) shows that the approximate purity bound~\eqref{eq_purity_bound_approx} is higher compared to the strict bound.  
Moreover, for `S2', the approximate bound is very close to the actual state purity, and they coincide for `S3'.
To explain this, we notice that the number of modes is equal to 1, when the inequality~\eqref{eq_purity_bound_2__} turns to an equality.
This result shows that the approximate purity bound~\eqref{eq_purity_bound_approx} is extremely useful in experiments.
Note that the value $\xi$ is still required for the purity bounds.

\subsubsection{High Photon Number Regime}

In the HPNR, the number of modes for all the studied sources is close to 1.
Despite generating states that are very close to single-mode and with high degree of correlation $\xi\approx 1$, it is seen that state purity asymptotes to zero.
The purity bounds are valid in HPNR; however, they are not as informative as in IPNR.
The high sensitivity of purity to loss in HPNR requires the development of additional quality metrics and characterization methods that are beyond our framework.

\subsubsection{Purity bounds}

\begin{figure}
\centering
\includegraphics[width=\linewidth]{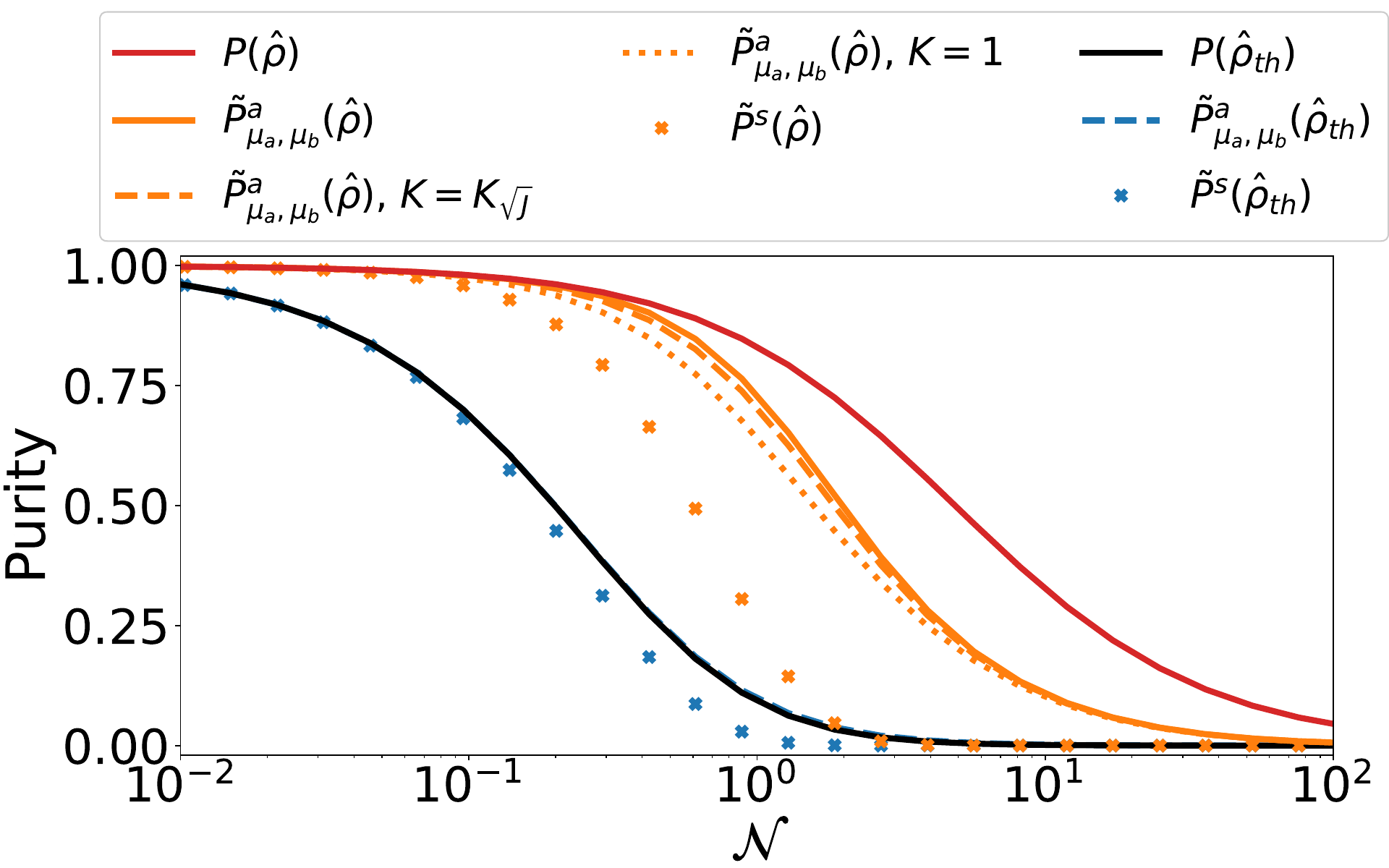}
\caption{ The purity (red line) and its bounds for the state $\hat{\rho}$ generated by $`S1'$ as a function of generated photon number. 
The approximate purity bounds~\eqref{eq_purity_bound_approx} are calculated using; the actual number of correlated modes $K$ (orange solid line), with $K$ replaced by $K_{\sqrt{J}}$ (orange dashed line), and with $K$ replaced by $1$ (dotted line). 
The strict purity bound~\eqref{eq_purity_bound_strict} is depicted with orange crosses.
The purity for the corresponding thermal state $\hat{\rho}_{th}$ (the state with the same matrices $\veccal{D}_{a,b}$ and $\veccal{C}_{a,b}=0$) is represented by the black solid line, and its approximate and strict bounds with blue dashed line and blue crosses, respectively.
   }
\label{fig_5}
\end{figure}

Let us discuss the purity bounds in detail. 
As we mentioned in Sec.~\ref{ssub_meas_corr_modes}, it is demanding to measure the parameter $K$ and, in many applications, it remains unknown.
However, in Sec.~\ref{sub_comments_bounds}, we showed that a lower bound can be found using instead the value $K_{\sqrt{J}}$ or, when $K_{\sqrt{J}}$ is also unattainable, even the value of $1$.
Here we consider such replacements for the source `S1,' and the numerical results are shown in Fig.~\ref{fig_5}. 
We see that the actual number of correlated modes $K$ gives the maximal approximate purity bound, while the values $K_{\sqrt{J}}$ and $1$ decrease it by 4\% and 14\%, respectively.
It means that even if the value $K$ is unknown, the lower purity bound can be determined, and JSI measurements can be used improve it.

Although the parameter $K$ is sufficient for purity bounds, determining the degree of correlation $\xi$ is necessary.
Indeed, the case of $\xi=0$ corresponds to thermal states and the absence of signal-idler correlation.
To show the difference between the actual state and a thermal one, we construct the corresponding thermal states $\hat{\rho}_{th}$ for the source `S1' by manually setting $\veccal{C}_{ab}=0$ (the correlation matrices $\veccal{D}_{a,b}$ remain the same). 
Their purity and purity bounds are shown in Fig.~\ref{fig_5}.

Two additional comments should be made regarding Fig.~\ref{fig_5}.
First, we see that the approximate purity bound for the thermal state is slightly violated. 
It demonstrates that the approximate bound~\eqref{eq_purity_bound_approx} is not a strict inequality.
Second, the strict purity bound~\eqref{eq_purity_bound_strict} is much more conservative compared to the approximate one.
These results illustrate the conclusions from Sec.~\ref{sub_purity_bounds} that the approximate bound is mathematically incorrect but very useful from an experimental perspective.

\subsubsection{Summary}
  
    In this section, we studied a realistic GSID-source and considered the properties of generated light in three different regimes.
    Although mode-unresolved detection is found to be insufficient for experimental characterisation of bright states in the HPNR, we derive the means by which it can provide a robust quality measure of GSID-sources operated in the LPNR and the IPNR.
    We show that the degree of correlation $\xi$ can be used as a quality measure of a GSID-source only in the LPNR; however, it is necessary but not sufficient in INPR.
    In the latter, the state purity is a good quality measure, and mode-unresolved detection allows one to determine its lower bound experimentally.
    This bound requires not only the degree of correlation $\xi$ but also the number of photons $\mathcal{N}_{a,b}$ and the number of modes $\mu_{a,b}$, and, ideally, the actual number of correlated modes $K$. 
    Reliable characterisation of the source therefore requires one to experimentally determine the value of all of these parameters.

\section{Discussion}
\label{sec_discussion}

In this section, we discuss additional aspects regarding our framework.

\subsection{Terminology}
\label{sub_terminology}

Our formalism is based on mixed Gaussian states, extending beyond the commonly used pure-state approaches.
In turn, the terminology developed and used in the simplified theories should also be revisited.

First, note that we do not use the term `two-photon amplitude' (TPA), which is meaningful only for pure two-photon states.
Indeed, for pure multimode squeezed states, we can always consider a two-photon Fock subspace and define the TPA.
The pure two-photon-state approximation holds for pure, weak squeezed states in which the two-photon Fock components dominate.
However, the approximation of an initially multimode mixed GSID-state via the pure two-photon state is incomplete and misleading, and therefore the TPA remains undefined.
The terms `spectral purity', `joint spectral amplitude', and `Schmidt modes' were initially introduced to describe two-photon states.
Thus, we should revise them for use in our holistic framework.

First, the term `spectral purity' appears from the description of bipartite two-photon states. It coincides with the inverse number of modes and indicates the quantum purity of the state in a first subsystem obtained after tracing out the second one~\cite{MeyerScott2017}.
This does not hold for GSID-states that have undefined photon number.
Therefore, the term `spectral purity' is misleading and should be avoided for general GSID-states.

Second, in the two-photon approximation, the joint spectral intensity is equal to the squared modulus of the two-photon amplitude.
This is why the two-photon amplitude is often referred to as the joint spectral amplitude.
In our approach, we define the JSI via frequency-dependent measurements and it is equal to the squared modulus of the correlation function $\veccal{C}_{ab}$.
Therefore, we use the 
term `JSA' for the correlation function $\veccal{C}_{ab}$. 
Note also that, in our framework, the JSA and the coupling function in the Hamiltonian Eq.~\eqref{eq_hamiltonian} do not coincide. 

Third, in the two-photon approximation, the signal photon number spectrum is given by integrating the JSI over the idler frequencies.
In our framework, the JSI is a property of the signal-idler correlations, while the photon number spectra of signal and idler fields are the properties of the individual subsystems.
They are defined via different correlation matrices and therefore are generally independent.
This means that the photon number spectra can not be expressed via the JSI for arbitrary GSID-states and should be measured independently from JSI.

Fourth, the common definition of `Schmidt modes' as a decomposition of TPA does not apply to our formalism.
For pure Gaussian states, the term `Schmidt modes' is applied to a unique basis obtained via the Bloch-Messiah reduction~\cite{Braunstein_2005,Wasilewski_2006,Cariolaro_2016,Horoshko_2019}.
However, the Bloch-Messiah reduction does not apply to mixed Gaussian states, which necessitates the use of alternative decompositions, suggested e.g. in Refs.~\cite{Simon_1994,Wolf_2008,Weedbrook_2012,Kopylov_2025,Hubenschmid_2026}.
In turn, different decompositions give different bases with different properties~\cite{Kopylov_2025,Kopylov_2025_APL}.
In this work, we have seen this by introducing the MW-basis and the C-basis in Sections~\ref{ssub_mw_basis} and \ref{ssub_c_basis}, respectively. 
Due to this variety in basis choice, for mixed states, we explicitly specify the bases and avoid calling them a `Schmidt'-basis.
We use the term `Schmidt-basis' only for pure states when the unique `good' basis exists.

    \subsection{Spatial degrees of freedom}
    \label{ssec_spatial_degrees}

    We initially required that the GSID-source should operate as a spatially single-mode source (see requirement R5).
    This requirement significantly simplified the description and presentation of results. Nevertheless, the developed formalism is also valid for fields that are multimode both in space and time.
    Indeed, in this case, we should consider the correlation matrices $\veccal{D}_{a,b}$ and $\veccal{C}_{ab}$ that depend both on temporal and spatial degrees of freedom: The matrix index runs both over the spatial and temporal modes.
    The number of modes $\mu_{a,b}$ and the number of correlated modes $K$ refer then to the joint spatio-temporal modes.
    However, the JSI is a more complicated function which will depend both on spatial and temporal degrees of freedom, making its measurement challenging.

    From a theoretical perspective, the presence of spatial modes does not change the analysis and derivations; thus, the developed formalism and state characterisation are also valid for spatially-temporal multimode GSID-states.

\subsection{Direct source optimization } 
\label{sec_source_optimization}
        
    One of the advantages of our holistic approach is that we treat the GSID-source, including both state generation and state preparation stages, as a complete system.
    This allows us to focus entirely on the properties of the generated light and neglect the impact of both imperfections and imprecise characterisation of individual elements in the complete state generation.
    In other words, we take a black-box approach to the GSID-source that does not require characterisation of individual components or additional modeling.
    Having control over some of these components, e.g., the pump pulse profile or the widths of the bandpass filters, allow one to experimentally optimize the quality of the sources, without needing to perform accurate numerical simulations that require precise characterisation of these components.

\subsection{Beyond basis-independent characteristics}

    In this work, we developed a GSID-source characterisation technique based on mode-unresolved measurements.     
    Unfortunately, such characterisation is incomplete: it does not provide detailed information regarding internal mode structure.    
    Therefore, we can ask ourselves: When is our approach helpful and sufficient?
    
    Quantum protocols operate with different type of detectors and any reliable source characterisation must be consistent with the detection scheme used in the protocol. 
    In particular, homodyne detection require knowledge of the internal mode structure.     
    This is because homodyne detection itself acts as a filter: the state is `projected' on the local-oscillator mode, and all other modes do not contribute to a measured signal.
    The optimal homodyne measurements should be based on the mode structure of the measured light~\cite{Kopylov_2025, Kopylov_2025_APL}.
    Therefore, our source characterisation technique is not sufficient for homodyne measurements: is sensitive to the mode structure but does not provide details regarding optical mode profiles and their distribution.

    Nevertheless, many quantum protocols combine squeezed light with photon-number-resolving detectors, for example in Gaussian boson sampling \cite{Hamilton2017,stefszky2026}. 
    Using the same type of detectors, our approach is fully consistent and compatible with them.
    This means that our source characterisation technique captures and reveals the light properties that affect the protocol's outcomes.

\section{Conclusions}

    In this work, we studied the properties of multimode mixed squeezed light, generated by a PDC-based source that produces fully distinguishable signal and idler fields.
    Starting from the Heisenberg uncertainty relation for the Gaussian covariance matrix, we presented an uncertainty relation for the field correlation matrices of GSID states.
    We found that the Mercer-Wolf basis does not necessarily diagonalize signal-idler correlations, and introduced the C-basis for their study.
    As a result, we show that the mode structure of mixed GSID states is characterized by three parameters: the number of modes in the signal subsystem $\mu_a$, the number of modes in the idler subsystem $\mu_b$, and the number of correlated modes $K$ between the signal and idler subsystems.
   This result shows that the two existing characterisation methods based on the second-order correlation functions and on the JSA are not complete: They yield different physical quantities of independent properties of the state.
   
Our analysis shows that JSI measuremetns, which are often used to determine the degree of correlation of a given source, provides only a lower bound on the actual number of correlated modes $K$.
Our work therefore demonstrates that JSI measurements are not sufficient for reliable source characterisation.
    
To address the shortcomings of the aforementioned characterisation methods, we develop a complete, holistic approach that unifies and extends upon these known methods. We reveal that the lower bound of the state purity is a suitable measure for GSID-source quality and show how it can be experimentally determined via mode-unresolved measurements.

The developed framework provides a robust and reliable characterisation of mixed multimode squeezed states, which goes beyond the typical assumptions and restrictions. This treatment reveals the limitations of current sources which is critical when assessing the impact that imperfections will have on the most demanding quantum photonics applications.

\section*{Acknowledgments}
This work has received funding from the German Federal Ministry of Research, Technology and Space within the PhoQuant project (Grant No. 13N16103).

\newpage

\appendix 

\section{Mathematical identities}
    \label{app_math}
    \subsection{Schur complement}

    Here we briefly present the main properties of the Schur complement.   
    For a detailed study, see, e.g., Ref.~\cite{Schur_complement_book_2005}. 

    Let us consider a block matrix of the form 
    \begin{equation}
        M = \begin{pmatrix}
                    A &  B      \\
                    C &  D   
                \end{pmatrix},
    \end{equation}
    where $A \in \mathbb{C}^{n\times n}$, $B \in \mathbb{C}^{m\times m}$, $C \in \mathbb{C}^{n\times m}$ and $D \in \mathbb{C}^{m\times n}$.
    If matrix $A$ is invertible, then matrix 
    \begin{equation}
         S_A = D - C A^{-1} B
    \end{equation}
    is defined and is called the Schur complement of $M$ relative to $A$.  
    Similarly, if matrix $D$ is invertible, then matrix
    \begin{equation}
        S_D = A - B D^{-1} C 
    \end{equation}
    is the Schur complement of $M$ relative to $D$.
    Further, when the Schur complements $S_A$ or $S_D$ are used, we always assume that they are well-defined. 
    Otherwise, the considered equations are not applicable.

    Schur's determinant formula (see Theorem 1.1 in Ref.~\cite{Schur_complement_book_2005}), applied to the matrix $M$, leads to two equivalent expressions for determinant
    \begin{align}
        \label{eq_schur_determinant_1}
        \det(M) & =\det(A)\det \left( S_A \right) ,\\
        \label{eq_schur_determinant_2}
        \det(M) & =\det(D)\det \left( S_D \right).
    \end{align}
    In this paper, we are interested in a special case of the matrix $M$, when it has a block form
    \begin{equation}
        M = 
        \left(
                \begin{array}{cc|cc}
                A_1 & 0      & 0   & B_1  \\
                0   & A_2    & B_2 & 0  \\
                \hline
                  0   & C_1  & D_1 & 0    \\
                  C_2 & 0    &  0 & D_2   \\
             \end{array} 
        \right),
        \label{eqap_matrix_form}
    \end{equation}
        where $A_1, D_1 \in \mathbb{C}^{n\times n}$; $A_2, D_2 \in \mathbb{C}^{m\times m}$;  $B_1, C_1 \in \mathbb{C}^{n\times m}$ and $B_2, C_2 \in \mathbb{C}^{m\times n}$.
        Then, the Schur complements of matrix $M$ relative to matrices $A=\left(
                    \begin{smallmatrix}
                     A_1 & 0      \\
                    0   & A_2  
                    \end{smallmatrix}
                        \right)$ 
                and 
                    $D =\left(
                    \begin{smallmatrix}
                     D_1 & 0      \\
                    0   & D_2  
                    \end{smallmatrix}
                        \right)$ are
        \begin{equation}
             S_A 
            = \begin{pmatrix}
                S_1^A & 0     \\
                0   & S_2^A   \\
            \end{pmatrix} ~ ~ \text{and} ~ ~
            S_D = \begin{pmatrix}
                        S_1^D & 0     \\
                        0   & S_2^D    \\
                    \end{pmatrix},
        \end{equation}
        respectively; and matrices 
        \begin{align}
            S_1^A &= D_1 - C_1 A_2^{-1} B_2 , \\
            S_2^A &= D_2 - C_2 A_1^{-1} B_1 , \\
            S_1^D &= A_1 - B_1 D_2^{-1} C_2 , \\
            S_2^D &= A_2 - B_2 D_1^{-1} C_1 .
        \end{align}

        Further, we use the well-known properties of the Schur complement and apply them to our matrix in the form of Eq.~\eqref{eqap_matrix_form}.
        The obtained equations are formulated as the following lemmas:                          
        \begin{lemma}
        \label{lemma_determinant}
                If matrices $A_{1,2}$ in Eq.~\eqref{eqap_matrix_form} are invertible, 
                \begin{equation}
                    \det{M} =  \det \begin{pmatrix}
                A_1 & B_1    \\
                C_2 & D_2   \\
            \end{pmatrix} \det \begin{pmatrix}
                 D_1 & C_1     \\
                 B_2 & A_2  \\
            \end{pmatrix}.
                \end{equation}
         \end{lemma}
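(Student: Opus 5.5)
The plan is to apply Schur's determinant formula \eqref{eq_schur_determinant_1} to $M$ in the form \eqref{eqap_matrix_form}, taking the pivot block to be $A=\mathrm{diag}(A_1,A_2)$. This block is invertible because $A_1$ and $A_2$ are assumed invertible, and $\det A=\det A_1\det A_2$. By the block computation already written above, the Schur complement $S_A$ is block-diagonal, $S_A=\mathrm{diag}(S_1^A,S_2^A)$, so
\begin{equation*}
\det M=\det A_1\,\det A_2\,\det\big(D_1-C_1A_2^{-1}B_2\big)\,\det\big(D_2-C_2A_1^{-1}B_1\big).
\end{equation*}

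Next I regroup these four factors into two pairs and read each pair backwards through \eqref{eq_schur_determinant_1}.

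For the first pair, consider $\begin{pmatrix}A_1&B_1\\C_2&D_2\end{pmatrix}$. Its Schur complement relative to the invertible block $A_1$ is $D_2-C_2A_1^{-1}B_1=S_2^A$. Hence
\begin{equation*}
\det A_1\,\det S_2^A=\det\begin{pmatrix}A_1&B_1\\C_2&D_2\end{pmatrix}.
\end{equation*}

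For the second pair, $\det A_2\det S_1^A$, I want a block matrix with $A_2$ as the pivot and $S_1^A$ as its Schur complement. The candidate is $\begin{pmatrix}D_1&C_1\\B_2&A_2\end{pmatrix}$, with $A_2$ in the lower-right position. Formula \eqref{eq_schur_determinant_2}, taken relative to the lower-right block, gives
\begin{equation*}
\det\begin{pmatrix}D_1&C_1\\B_2&A_2\end{pmatrix}=\det A_2\,\det\big(D_1-C_1A_2^{-1}B_2\big)=\det A_2\,\det S_1^A.
\end{equation*}
Here the roles of the letters $A,B,C,D$ in the appendix's general statement are relabelled. Multiplying the two identities reproduces the expression for $\det M$ above, which is the claim.

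The only point needing care is bookkeeping. I must check the block sizes so that each product such as $C_1A_2^{-1}B_2$ is defined and square of the right size, and I must confirm that $S_A$ really is block-diagonal in this arrangement, which follows because $A^{-1}=\mathrm{diag}(A_1^{-1},A_2^{-1})$ is block-diagonal. A cross-check is that $M$ is a permutation similarity of $\mathrm{diag}\big(\begin{pmatrix}A_1&B_1\\C_2&D_2\end{pmatrix},\begin{pmatrix}D_1&C_1\\B_2&A_2\end{pmatrix}\big)$, so the lemma even holds without the invertibility assumption. The Schur route, however, is the one matching the appendix's setup.
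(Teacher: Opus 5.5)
Your proposal is correct and follows the paper's proof step for step. You apply \eqref{eq_schur_determinant_1} with the block-diagonal pivot $\mathrm{diag}(A_1,A_2)$, then recombine $\det A_1\det S_2^A$ through \eqref{eq_schur_determinant_1} and $\det A_2\det S_1^A$ through \eqref{eq_schur_determinant_2} with $A_2$ as the lower-right pivot; your closing permutation-similarity remark is also valid and shows the identity holds even without assuming $A_{1,2}$ invertible.
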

            \begin{proof}

            Using Schur's determinant formula~\eqref{eq_schur_determinant_1}, we get
            \begin{multline}
                 \det(M) =\det \begin{pmatrix}
                    A_1 & 0     \\
                    0   & A_2   \\
                \end{pmatrix} 
                \det \begin{pmatrix}
                    S_1^A & 0     \\
                    0   & S_2^A   \\
                \end{pmatrix}  
                \\
                = \det(A_1)\det(A_2)\det(S_1^A)\det(S_2^A).
            \end{multline}
            Combining the first and the last factors, we can apply Eq.~\eqref{eq_schur_determinant_1} to them in the opposite direction.
            As a result, we get
            \begin{equation}
              \det(A_1)\det(S_2^A) 
              = \det 
               \begin{pmatrix}
                    A_1 & B_1    \\
                    C_2 & D_2   \\
                \end{pmatrix}.  
            \end{equation}
        Similarly, Eq.~\eqref{eq_schur_determinant_2} can be applied to the second and third factors, resulting in   
            \begin{equation}
                        \det(A_2)\det(S_1^A) 
                =
                \det \begin{pmatrix}
                     D_1 & C_1     \\
                     B_2 & A_2  \\
                \end{pmatrix}.
            \end{equation}
        Combining all these equations, we prove the desired identity.
            \end{proof}

        \begin{lemma}
        \label{lemma_schur_positivity}
                If matrix $M$ (Eq.~\eqref{eqap_matrix_form}) is Hermitian and matrices $A_{1,2}\succ0$ (are positive definite), then matrix
                \begin{equation}
                     M\succeq 0 ~ \text{ if and only if} ~
                            \begin{cases}
                                    S_1^A \succeq 0, \\
                                    S_2^A \succeq 0. 
                            \end{cases}
                \end{equation}
                Similarly, 
                \begin{equation}
                     M\succ 0 ~ \text{ if and only if} ~
                            \begin{cases}
                                    S_1^A \succ 0, \\
                                    S_2^A \succ 0. 
                            \end{cases}
                \end{equation}
        \end{lemma}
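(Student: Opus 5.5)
The plan is to reduce the statement to the standard Schur-complement positivity criterion for a $2\times 2$ block matrix. Write $M$ in the block form $\left(\begin{smallmatrix} A & B\\ C & D\end{smallmatrix}\right)$ with $A=\mathrm{diag}(A_1,A_2)$ and $D=\mathrm{diag}(D_1,D_2)$, and with the off-diagonal blocks $B=\left(\begin{smallmatrix}0 & B_1\\ B_2 & 0\end{smallmatrix}\right)$, $C=\left(\begin{smallmatrix}0 & C_1\\ C_2 & 0\end{smallmatrix}\right)$.

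\emph{Step 1: the classical criterion.} Since $M$ is Hermitian, $A$ and $D$ are Hermitian and $C=B^H$. Because $A_{1,2}\succ0$, the block $A$ is positive definite and hence invertible. I will use the standard congruence
\begin{equation*}
M = \begin{pmatrix} I & 0\\ B^H A^{-1} & I\end{pmatrix}\begin{pmatrix} A & 0\\ 0 & S_A\end{pmatrix}\begin{pmatrix} I & A^{-1}B\\ 0 & I\end{pmatrix},
\end{equation*}
where the outer factors are invertible and are conjugate transposes of each other. By Sylvester's law of inertia, $M$ and $\mathrm{diag}(A,S_A)$ have the same inertia. Since $A\succ0$, this gives $M\succeq0 \iff S_A\succeq0$ and $M\succ0\iff S_A\succ0$. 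Equivalently, this is Theorem 1.12 of Ref.~\cite{Schur_complement_book_2005}, which I would simply cite.

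\emph{Step 2: block structure of $S_A$.} Compute $S_A=D-CA^{-1}B$ directly, using $A^{-1}=\mathrm{diag}(A_1^{-1},A_2^{-1})$. The product $CA^{-1}B$ equals $\mathrm{diag}(C_1A_2^{-1}B_2,\;C_2A_1^{-1}B_1)$, and the off-diagonal blocks vanish because of the anti-diagonal pattern of $B$ and $C$. Therefore $S_A=\mathrm{diag}(S_1^A,S_2^A)$, which agrees with the expressions stated just before Lemma~\ref{lemma_determinant}.

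\emph{Step 3: conclusion.} A block-diagonal Hermitian matrix is positive semidefinite (respectively positive definite) if and only if each diagonal block is. To see this, test the quadratic form on vectors supported in one block, and conversely split any vector into its block components. Combining this with Step 1 yields both equivalences.

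The only point needing care is Hermiticity of $S_{1,2}^A$. From $M=M^H$ we get $D_1=D_1^H$, $D_2=D_2^H$, $C_1=B_2^H$ and $C_2=B_1^H$, so each $S_i^A$ is Hermitian and "$\succeq0$" is meaningful. I expect no real obstacle beyond this.
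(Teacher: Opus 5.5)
Your proposal is correct and takes the same route as the paper. The paper's proof is a one-line citation of Theorem 1.12 of the Schur-complement reference applied to $M$; your Steps 1--3 fill in that citation: the congruence/inertia argument, the explicit computation $S_A=\mathrm{diag}(S_1^A,S_2^A)$ from the anti-diagonal pattern of the off-diagonal blocks, and the fact that a block-diagonal Hermitian matrix is (semi)definite exactly when each of its blocks is.
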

            \begin{proof}
                This lemma is a direct corollary of Theorem 1.12 in Ref.~\cite{Schur_complement_book_2005}, applied to the matrix~\eqref{eqap_matrix_form}.
            \end{proof}

        \begin{lemma}
        \label{lemma_schur_trace}
                If matrix $M$ (Eq.~\eqref{eqap_matrix_form}) is Hermitian and positive semi-definite ($M\succeq 0$), the following inequality holds:
                \begin{equation}
                    |\trace{ B^*_2 B_1  }|^2 \leq \trace{ A^*_2  D_2 } \trace{ D^*_1 A_1  }.
                \end{equation}
        \end{lemma}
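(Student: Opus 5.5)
The inequality should follow from Cauchy–Schwarz applied to two positive semi-definite principal submatrices of $M$. I will combine them with a Kronecker (or Hadamard) product and compress the result to a $2\times 2$ positive semi-definite matrix.

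\emph{Step 1 (structure).} Hermiticity of $M$ in the form~\eqref{eqap_matrix_form} forces $A_{1,2}$ and $D_{1,2}$ to be Hermitian, with $C_2=B_1^H$ and $C_1=B_2^H$. Since $M\succeq 0$, every principal submatrix is positive semi-definite. Selecting the index blocks $\{1,4\}$ and $\{3,2\}$ gives
\begin{equation*}
G_1=\begin{pmatrix} A_1 & B_1\\ B_1^H & D_2\end{pmatrix}\succeq 0,\qquad
G_2=\begin{pmatrix} D_1 & B_2^H\\ B_2 & A_2\end{pmatrix}\succeq 0 .
\end{equation*}
Here $G_2$ is a simultaneous row/column permutation of a principal submatrix, so positivity is preserved. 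Both $G_1$ and $G_2$ act on an $n$-block followed by an $m$-block.

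\emph{Step 2 (product).} Complex conjugation and transposition preserve positive semi-definiteness, and the Kronecker product of two positive semi-definite matrices is positive semi-definite. Hence $G_1\otimes \tilde G_2\succeq 0$, where $\tilde G_2$ is $G_2$, $G_2^*$ or $G_2^T$, whichever produces the conjugations in the target. I then compress with the vector $w=\alpha\,\epsilon_n\oplus\beta\,\epsilon_m$. Here $\epsilon_n=\sum_i e_i\otimes e_i$ lives in the (first $n$-block)$\otimes$(first $n$-block) sector and $\epsilon_m$ is the analogous vector in the $m\otimes m$ sector. The identity $\epsilon^H(X\otimes Y)\epsilon=\sum_{ij}[X]_{ij}[Y]_{ij}=\trace{XY^T}$ turns $w^H(G_1\otimes\tilde G_2)w\ge0$ for all $\alpha,\beta$ into the statement that
\begin{equation*}
\begin{pmatrix} \trace{D_1^* A_1} & \trace{B_2^*B_1}\\ \overline{\trace{B_2^*B_1}} & \trace{A_2^*D_2}\end{pmatrix}
\end{equation*}
is a $2\times2$ positive semi-definite matrix, up to transposes inside traces.

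\emph{Step 3 (conclusion).} The diagonal entries are nonnegative, being traces of products of positive semi-definite matrices. A $2\times 2$ positive semi-definite matrix has nonnegative determinant, which is exactly $|\trace{B_2^*B_1}|^2\le \trace{A_2^*D_2}\,\trace{D_1^*A_1}$.

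The main obstacle is bookkeeping in Step 2: choosing between $G_2$, $G_2^*$ and $G_2^T$ so that the off-diagonal entry comes out as $\trace{B_2^*B_1}$ rather than $\trace{B_2B_1}$ or $\trace{B_1^HB_2^H}$. I would settle this by writing the entries of the Kronecker blocks indexwise. Hermiticity then gives $\trace{D_1^TA_1}=\trace{D_1^*A_1}$, and similarly for $A_2,D_2$, so the diagonal terms match the stated form.

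If the Kronecker route gets messy, a fallback is to factor $B_1=A_1^{1/2}K_1D_2^{1/2}$ and $B_2^H=D_1^{1/2}K_2A_2^{1/2}$ with contractions $\|K_{1,2}\|\le1$. This factorisation follows from the positivity of $G_1$ and $G_2$, with generalized inverses if $A_{1,2}$ or $D_{1,2}$ are singular. Cauchy–Schwarz for the Hilbert–Schmidt inner product then finishes the argument.
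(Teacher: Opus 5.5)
Your proof is correct and has the same skeleton as the paper's proof. Both establish positivity of the two blocks built from $(A_1,B_1,D_2)$ and $(A_2,B_2,D_1)$. Both then apply a Cauchy--Schwarz-type trace inequality for the off-diagonal blocks of two positive semi-definite matrices. The differences lie in how each step is justified.

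\emph{Positivity of the blocks.} The paper gets $M_{1,2}\succeq0$ by routing through the Schur complements of Lemma~\ref{lemma_schur_positivity}. That lemma needs $A_{1,2}\succ0$, which is not among Lemma~\ref{lemma_schur_trace}'s stated hypotheses. It is harmless in the paper's application, where $A_{1,2}=\veccal{D}^T_{a,b}+\vec{1}$ is positive definite. You instead read off $G_1,G_2\succeq0$ directly as (permuted) principal submatrices, so your argument needs only $M\succeq0$.

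\emph{The trace inequality.} The paper does not prove this step; it cites Eq.~(3.1.13) of the Schur-complement monograph. Your Kronecker compression proves that inequality, so your version is self-contained.

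\emph{The bookkeeping.} The choice you left open resolves with $\tilde G_2=G_2$ itself, with no conjugation or transposition. The identity $\epsilon^H(X\otimes Y)\epsilon=\trace{XY^T}$ gives exactly
$\trace{A_1D_1^T}=\trace{D_1^*A_1}$, $\trace{B_1B_2^*}=\trace{B_2^*B_1}$ and $\trace{D_2A_2^T}=\trace{A_2^*D_2}$,
using only Hermiticity of $A_2,D_1$ and cyclicity of the trace. You can therefore drop the hedge ``up to transposes inside traces.''
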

            \begin{proof}
                The initial matrix $M$ is Hermitian; therefore, the matrices $C_{1,2}=[B_{2,1}]^H$, $A_{1,2}=A_{1,2}^H$ and $D_{1,2}=[D_{1,2}]^H$.
                Applying Lemma~\ref{lemma_schur_positivity} to the positive semi-definite matrix $M$, we get that  matrices $S_{1,2}^A\succeq 0 $.
                In turn, both matrices $S_{1}^A$ and $S_{2}^A$ can be considered as the Schur complements of matrices
                \begin{equation}
                    M_1 = \begin{pmatrix}
                            A_2    & B_2        \\
                            B_2^H  & D_1       \\
                        \end{pmatrix} 
                        ~ \text{and} ~  
                    M_2 = \begin{pmatrix}
                            D_2  & B_1^H     \\
                            B_1  &  A_1      \\
                        \end{pmatrix},
                \end{equation}
                respectively.
                Matrices $M_{1,2}$ are Hermitian, their Schur complements are positive semi-definite; therefore, $M_{1,2} \succeq 0$.
                Note that the complex-conjugated matrices $M^*_{1,2}$ and hermitian-conjugated matrices $M^H_{1,2}$ are also positive semi-definite.

                It is known that, having two positive semi-definite matrices, we can construct a trace inequality for their submatrices. 
                Taking the matrices $M_1^*=\left(
                    \begin{smallmatrix}
                                A^*_2    & B^*_2        \\
                                B_2^T  & D^*_1       \\                    \end{smallmatrix}
                        \right)$ and $ M_2 $, and applying Eq.~(3.1.13) from Ref.~\cite{Schur_complement_book_2005})), we get the desired inequality.

            \end{proof}

\subsection{Trace inequalities for positive (semi-) definite matrices}

For matrices $A,B \in \mathbb{C}^{n\times m}$ and any positive definite matrix $X \succ 0, \in \mathbb{C}^{n\times n}$, the following inequality holds (see Chapter 3 in Ref.~\cite{Schur_complement_book_2005})
    \begin{equation}
        |\trace{A^H B}|^2 \leq \trace{A^H X A} \trace{B^H X^{-1} B}.
        \label{eqap_trace_ineq_1}
    \end{equation}
In the special case of $X=\vec{1}_{n}$,
    \begin{equation}
        |\trace{A^H B}|^2 \leq \trace{A^H A} \trace{B^H B}.
        \label{eqap_trace_ineq_2}
    \end{equation}

The following lemmas are necessary for the further study: 
    \begin{lemma}
    \label{lemma_trace_1}
        For positive semi-definite Hermitian matrices $A, B \in \mathbb{C}^{n\times n}$ 
        the following inequalities holds: 
            \begin{equation}
                    \trace{A B} \leq \dfrac{\trace{A} \trace{B}}{\sqrt{\mu_A \mu_B}}  \leq  \trace{A} \trace{B},
            \end{equation}
            where $ \mu_A = \dfrac{[\trace{A}]^2}{\trace{A^2}} \geq 1$ and $ \mu_B = \dfrac{[\trace{B}]^2}{\trace{B^2}} \geq 1$.
     \end{lemma}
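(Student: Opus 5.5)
The plan is to derive the first inequality from the Cauchy--Schwarz trace inequality~\eqref{eqap_trace_ineq_2}, and the second from the elementary bound $\mu_A,\mu_B\geq 1$.

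For the first inequality, take $A$ and $B$ Hermitian, so $\trace{AB}$ is real. For positive semi-definite $A,B$ it is also nonnegative, since $\trace{AB}=\trace{A^{1/2}BA^{1/2}}\geq 0$. Applying~\eqref{eqap_trace_ineq_2} with $A^H=A$ gives
\begin{equation*}
\trace{AB}^2 \leq \trace{A^2}\,\trace{B^2}.
\end{equation*}
By the definitions of $\mu_A$ and $\mu_B$,
\begin{equation*}
\trace{A^2} = \frac{[\trace{A}]^2}{\mu_A}, \qquad \trace{B^2} = \frac{[\trace{B}]^2}{\mu_B}.
\end{equation*}
Substituting and taking the nonnegative square root of both sides yields $\trace{AB}\leq \trace{A}\trace{B}/\sqrt{\mu_A\mu_B}$.

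For the second inequality, diagonalize $A$ with nonnegative eigenvalues $\lambda_i$. Then
\begin{equation*}
\trace{A^2}=\sum_i\lambda_i^2\leq\Big(\sum_i\lambda_i\Big)^2=[\trace{A}]^2,
\end{equation*}
because the cross terms $\lambda_i\lambda_j$ are nonnegative. Hence $\mu_A\geq 1$, and likewise $\mu_B\geq 1$. Since $\trace{A}\trace{B}\geq 0$, multiplying by $1/\sqrt{\mu_A\mu_B}\leq 1$ cannot increase it, which gives the second inequality.

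The only delicate point is the degenerate case $A=0$ or $B=0$, where $\mu_A$ or $\mu_B$ is $0/0$. There $\trace{AB}=0$ and both right-hand sides vanish under any convention (e.g.\ $\mu=1$), so the statement holds trivially; I will note this separately. Otherwise the argument is routine.
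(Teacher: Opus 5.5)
Your proposal is correct and follows the paper's proof essentially step for step. The first bound comes from the Cauchy--Schwarz trace inequality~\eqref{eqap_trace_ineq_2} with $A^H=A$, $B^H=B$, using $\trace{AB}\geq 0$ and rewriting $\trace{A^2}=[\trace{A}]^2/\mu_A$; the second comes from $\mu_A,\mu_B\geq 1$, which follows from $\sum_i\lambda_i^2\leq(\sum_i\lambda_i)^2$ for nonnegative eigenvalues. Your explicit justification $\trace{AB}=\trace{A^{1/2}BA^{1/2}}\geq 0$ and your remark on the degenerate case $A=0$ or $B=0$ are small additions that the paper leaves implicit.
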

            \begin{proof}
            Let us start with the inequality Eq.~\eqref{eqap_trace_ineq_2}.
            Using $A^H = A$ and $B^H = B$, and taking the square root of the obtained expression, we get
            \begin{equation}
                 \trace{A B} \leq  \sqrt{ \trace{A A} } \sqrt{ \trace{B B} }.
            \end{equation}
            Here we also used the fact that the trace of the product of two positive semi-definite Hermitian matrices is real and positive; therefore, $|\trace{A B}| = \trace{A B} \geq 0$.

            Writing the traces as $\trace{A}=\trace{\Lambda_A}$ and $\trace{A^2}=\trace{\Lambda_A^2}$, where $\Lambda_A$ is the diagonal matrix built from eigenvalues $\set{\lambda_i}$ of matrix $A$, we define
            \begin{equation*}
                 \dfrac{1}{\mu_A} \equiv \dfrac{\trace{\Lambda_A^2}}{[\trace{\Lambda_A}]^2} = \dfrac{\sum_i^n [\lambda_i]^2 }{ [\sum_i^n  \lambda_i  ]^2} = \sum_i^n \bigg[  \dfrac{\lambda_i}{ \sum_j^n  \lambda_j}   \bigg]^2  \leq 1.
            \end{equation*} 
            Thus, $\mu_{A}\geq 1$.
            The value $\mu_{B}$ is defined similarly for matrix $B$.
            Therefore, the first inequality can be written as
            \begin{align*}
                 \trace{A B} & \leq \sqrt{ \trace{A^2} } \sqrt{ \trace{B^2} } = \dfrac{\trace{A} \trace{B}}{\sqrt{\mu_A \mu_B}}  \\  & \leq \trace{A} \trace{B}.
             \end{align*} 
            \end{proof}

    \begin{lemma}
    \label{lemma_trace_2}
            For an arbitrary matrix $A \in \mathbb{C}^{n\times m}$ and a  Hermitian positive semi-definite matrix $X \succeq 0, \in \mathbb{C}^{n\times n}$, the following inequality holds: 
            \begin{multline}
               \trace{A^H (X + \vec{1}_n)^{-1} A} \geq   \dfrac{  \sqrt{K\mu} \: \trace{A^H A} }{  \trace{X}  +  \sqrt{K\mu}   } \\ \geq  \dfrac{ \trace{A^H A} }{  \trace{X} + 1  },
            \end{multline}
            where 
    \begin{equation}
        K = \dfrac{[\trace{A A^H  }]^2}{\trace{A A^H  A A^H}} \geq 1 ~ ~ \text{ and } ~ \mu = \dfrac{[\trace{X}]^2}{\trace{X^2}} \geq 1.
        \label{eq_app_Kmu_defin}
    \end{equation}

     \end{lemma}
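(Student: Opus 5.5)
The plan is a Cauchy–Schwarz argument in the form \eqref{eqap_trace_ineq_1} with the weight $Y \equiv X + \vec{1}_n$, followed by Lemma~\ref{lemma_trace_1} to bound the remaining trace. Since $X \succeq 0$, we have $Y \succ 0$, so $Y^{-1}$ exists and is positive definite. The degenerate case $A=0$, where $K$ is undefined, is trivial because both sides vanish; so assume $A\neq 0$.

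\emph{Step 1.} Apply \eqref{eqap_trace_ineq_1} with the pair of matrices $(A, A)$ and positive definite weight $Y^{-1}$:
\begin{equation*}
\big[\trace{A^H A}\big]^2 \leq \trace{A^H Y^{-1} A}\,\trace{A^H Y A}.
\end{equation*}
Since $\trace{A^H Y A} = \trace{A^H X A} + \trace{A^H A} > 0$, this rearranges to
\begin{equation*}
\trace{A^H (X+\vec{1}_n)^{-1} A} \geq \frac{[\trace{A^H A}]^2}{\trace{A^H X A} + \trace{A^H A}} .
\end{equation*}

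\emph{Step 2.} Bound $\trace{A^H X A}$ from above using Lemma~\ref{lemma_trace_1}. By cyclicity, $\trace{A^H X A} = \trace{X\, A A^H}$, where $X$ and $AA^H$ are both Hermitian positive semi-definite $n\times n$ matrices. The quantity $\mu$ in Lemma~\ref{lemma_trace_1} for $AA^H$ is exactly $K$ of \eqref{eq_app_Kmu_defin}, and for $X$ it is $\mu$. Hence
\begin{equation*}
\trace{X A A^H} \leq \frac{\trace{X}\,\trace{A^H A}}{\sqrt{K\mu}} ,
\end{equation*}
using $\trace{AA^H}=\trace{A^HA}$. Since this quantity sits in the denominator of the bound from Step 1, it can be substituted directly. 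Cancelling one factor of $\trace{A^HA}$ gives
\begin{equation*}
\trace{A^H (X+\vec{1}_n)^{-1} A} \geq \frac{\trace{A^H A}}{\trace{X}/\sqrt{K\mu} + 1} = \frac{\sqrt{K\mu}\,\trace{A^H A}}{\trace{X} + \sqrt{K\mu}} .
\end{equation*}
This is the first inequality.

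\emph{Step 3.} For the second inequality, note that $s \mapsto s/(\trace{X}+s)$ is nondecreasing for $s>0$, because $\trace{X}\geq 0$. Lemma~\ref{lemma_trace_1} gives $K,\mu\geq 1$, so $\sqrt{K\mu}\geq 1$, and replacing $\sqrt{K\mu}$ by $1$ can only decrease the bound.

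The main obstacle is choosing the right Cauchy–Schwarz pairing in Step 1. It must place $(X+\vec{1})^{-1}$ on the side we want to bound from below and $X+\vec{1}$ on the other side, so that the trace of $X$ against $AA^H$ appears. Once that is in place, Lemma~\ref{lemma_trace_1} supplies exactly the $\sqrt{K\mu}$ factor, and the rest is algebra.
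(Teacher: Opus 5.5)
Your proof is correct and follows the paper's argument exactly. Both apply the weighted Cauchy--Schwarz inequality with weight $X+\vec{1}_n$, split $\trace{A^H(X+\vec{1}_n)A}=\trace{AA^HX}+\trace{A^HA}$, bound the first term with Lemma~\ref{lemma_trace_1}, and then pass to $\sqrt{K\mu}\to 1$. Your monotonicity remark in Step 3 and your treatment of the case $A=0$ just make explicit what the paper leaves implicit in ``taking $K=\mu=1$''.
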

            \begin{proof}
            The matrix $X + \vec{1}_n$ is positive definite; therefore, we can start with Eq.~\eqref{eqap_trace_ineq_1}
                \begin{equation}
                    |\trace{A^H A}|^2 \leq \trace{A^H (X + \vec{1}_n) A} \trace{A^H (X + \vec{1}_n)^{-1} A}.
                    \label{eq_app_tin_der_1}
                \end{equation}
            Its first factor in the right-hand side is
            \begin{equation}
                \trace{A^H (X + \vec{1}_n) A} = \trace{A A^H X } + \trace{A^H A  }.
            \end{equation}
            Applying Lemma~\ref{lemma_trace_1} to the first term, we get
            \begin{equation}
                \trace{A A^H X } \leq \dfrac{1}{\sqrt{K \mu}} \trace{A A^H  } \trace{X},
            \end{equation}         
             where $K, \mu \geq 1$ and their  explicit form is given in Eq.~\eqref{eq_app_Kmu_defin}. 
            Using the last inequality, Eq.~\eqref{eq_app_tin_der_1} reads as
            \begin{widetext}
            \begin{align}
                 |\trace{A^H A}|^2 & \leq (\trace{A A^H X } + \trace{A^H A  }) \trace{A^H (X + \vec{1}_n)^{-1} A} \\
                 &  \leq \left(\dfrac{1}{\sqrt{K \mu}} \trace{A A^H  } \trace{X} + \trace{A^H A  }\right) \trace{A^H (X + \vec{1}_n)^{-1} A},
            \end{align}
            \end{widetext}
            Taking into account that $|\trace{A^H A}| = \trace{A^H A}\geq 0$, we prove the first desired inequality
            \begin{equation}
               \trace{A^H (X + \vec{1}_n)^{-1} A} \geq   \dfrac{ \trace{A^H A} }{   \dfrac{ \trace{X} }{ \sqrt{K\mu} }  + 1  }.
            \end{equation}
            The second result follows straightforwardly by taking $K=\mu = 1$.

            \end{proof}

\subsection{Trace-determinant inequalities}
    In this section, we derive two inequalities that will be used for derivations of the purity bounds.
    Let us consider a Hermitian matrix $\vec{M} \in \mathbb{C}^{N\times N}$, such that $\vec{M}+\vec{1}_N\succ0$.
    Then,  
    \begin{multline}
          \det(\vec{M}+\vec{1}_N) =  \exp{(\trace{\ln(\vec{M}+\vec{1}_N)})}  \\ =\exp\Big[\sum_i^N \ln(\lambda_i + 1)\Big],
          \label{eq_app_det_trace_init}
    \end{multline}
     where we used the well-known equation $\det(e^{\vec{A}}) = e^{\trace{\vec{A}}}$ for Hermitian matrix $\vec{A}$ and the eigenvalues $\set{\lambda_i}$ of the matrix $\vec{M}$.

     \subsubsection{First inequality}
        Using $\ln(x+1) \leq x$, we get that 
    \begin{equation}
        \sum_i^N \ln(\lambda_i + 1) \leq \sum_i^N \lambda_i =  \trace{\vec{M}}
    \end{equation}
    and therefore Eq.~\eqref{eq_app_det_trace_init} reads as
    \begin{equation}
        \det(\vec{M}+\vec{1}_N) \leq \exp(\trace{\vec{M}}).
        \label{eqap_det_first_est}
    \end{equation}

   \subsubsection{Second inequality}
        Here, both sides of Eq.~\eqref{eq_app_det_trace_init} are raised to the power of $\frac{1}{N}$
            \begin{equation}
                \left[\det(\vec{M}+\vec{1}_N)\right]^{\frac{1}{N}} = \exp \left({\frac {\sum_{i}^{N}\ln  ( \lambda_i + 1 )}{N}}\right) 
            \end{equation}
   Using Jensen's inequality  
    \begin{equation}
       \exp \left({\frac {\sum_{i}^{N}\ln  ( \lambda_i + 1 )}{N}}\right) \leq {\frac {\sum_{i}^{N}(\lambda_i + 1)}{N}}, 
       \label{eq_app_jensen}
    \end{equation}
    we get 
    \begin{equation}
        \det(\vec{M}+\vec{1}_N) \leq  \bigg[ \frac{ \trace{\vec{M}} + N }{N} \bigg]^N.
    \end{equation}

 If some of the eigenvalues equal zero, the sum over all of them results in an overestimated value.    Therefore, we can exclude these zero-values from the sum Eq.~\eqref{eq_app_jensen}.
Then, the resulting inequality takes the form
\begin{equation}
        \det(\vec{M}+\vec{1}_N) \leq  \bigg[ \frac{ \trace{\vec{M}} + n }{n} \bigg]^n,
        \label{eqap_det_second_est}
    \end{equation}
    where $n\leq N$ is the number of nonzero eigenvalues of matrix $\vec{M}$.

\subsection{ Derivatives of multivariate Gaussian function }
    \label{app_derivatives_Gaussian}

        In what follows, we need to compute the partial derivatives $
            \prod_{ij} \partial^{l_i}_{\beta_i}  \partial^{k_j}_{\beta^*_j} \: \Phi(\vec{\beta}, \vec{\beta^*}) \big|_{\vec{\beta}=\vec{\beta^*}=0} $,
        where the variables are the column-vectors $\vec{\beta} = (\beta_1, ..., \beta_M)^T$, $\vec{\beta^*} = (\beta^*_1, ..., \beta^*_M)^T$, $\partial^{n}_{x}\equiv \frac{\partial^{n}}{(\partial x)^{n}}$, the function 
        $
            \Phi(\vec{\beta}, \vec{\beta^*}) =  e^{P(\vec{\beta}, \vec{\beta^*})},
        $ 
        and $P(\vec{\beta}, \vec{\beta^*})$ is a second-order polynomial in the form
        \begin{equation}
            P(\vec{\beta}, \vec{\beta^*}) = \vec{\beta}^T \vec{A} \vec{\beta} + \vec{\beta}^T \vec{B} \vec{\beta^*} + (\vec{\beta^*})^T \vec{C} \vec{\beta^*} .
            \end{equation}
        Here the matrices $\vec{A}$, $\vec{B}$ and $\vec{C}$ are the complex $M\times M$ matrices with matrix elements ${A}_{ij}$, ${B}_{ij}$ and ${C}_{ij}$, respectively.
        The first-, second- and third- order derivatives are
        \begin{align}
            \partial_{\beta_{i}} \Phi(\vec{\beta}, \vec{\beta^*}) \big|_{\vec{\beta}=\vec{\beta^*}=0} & =  0, \\
            \partial_{\beta_{i}} \partial_{\beta_{j}} \Phi(\vec{\beta}, \vec{\beta^*}) \big|_{\vec{\beta}=\vec{\beta^*}=0}  & = A_{ij} + A_{ji} , \\
               \partial_{\beta^*_i}\partial_{\beta^*_j} \Phi(\vec{\beta}, \vec{\beta^*}) \big|_{\vec{\beta}=\vec{\beta^*}=0}  & = C_{ij} + C_{ji} ,  \\
               \partial_{\beta_i}\partial_{\beta^*_j} \Phi(\vec{\beta}, \vec{\beta^*}) \big|_{\vec{\beta}=\vec{\beta^*}=0}
             & = B_{ij},
             \\
               \partial_{\beta_i}\partial_{\beta_j}\partial_{\beta^*_k} \Phi(\vec{\beta}, \vec{\beta^*}) \big|_{\vec{\beta}=\vec{\beta^*}=0}
             & = 0 .
        \end{align}
        The fourth-order derivative is
        \begin{multline}
            \partial_{b_z} \partial_{b_x}\partial_{b^*_w}\partial_{b^*_y} \Phi \big|_{\vec{b}=\vec{b^*}=0}
             = (C_{wy} + C_{yw})(A_{zx}+A_{xz}) \\ +  B_{zw}  B_{xy} + B_{xw} B_{zy} .
         \end{multline}

\section{Formalism of Gaussian states }
\label{app_Gaussian_states}
    
\subsection{ General description }
\label{app_gaussian_general}

    Multimode Gaussian states are typically described in terms of the covariance matrix, built from the second moments of quadrature operators~\cite{Simon_1994,Weedbrook_2012}.
    For the $M$-mode bosonic system, the quadrature operators can be written in a compact vector form as $\vec{x}  = (\hat{q}_1, \hat{q}_2, \dots, \hat{q}_M; \hat{p}_1, \hat{p}_2, \dots, \hat{p}_M)^T$ with the bosonic commutation relation $[\hat{x}_i, \hat{x}_j ] = i \hbar [\vec{\Omega}]_{ij}$, where
    $
        \vec{\Omega} = \left(
                    \begin{smallmatrix}
                      \vec{0}   & \vec{1}_M  \\
                     -\vec{1}_M & \vec{0}  
                \end{smallmatrix} 
            \right)
    $.
    For a quantum state $\hat{\rho}$, the matrix of quadrature moments is $\vec{X}  = \braket{ \vec{\hat{x}} \cdot \vec{\hat{x}}^T } $.
    Here and further, the notation $\braket{ \vec{\hat{x}} \cdot \vec{\hat{x}}^T } $ indicates the matrix with matrix elements $\braket{ \hat{x}_i \hat{x}_j  } \equiv \trace{\hat{x}_i \hat{x}_j \hat{\rho}} $; $\braket{\vec{x}}$ denotes the column-vector with the elements $\braket{ \hat{x}_i  } \equiv \trace{\hat{x}_i \hat{\rho}} $.
    The symmetric-ordered covariance matrix (known also as the Wigner-covariance matrix) is defined as
    \begin{equation}
        \vec{\sigma} = \dfrac{\vec{X} + \vec{X}^T}{2} -  \braket{ \vec{\hat{x}} }  \braket{\vec{\hat{x}}^T } .
    \end{equation}
    For a valid quantum state, the covariance matrix obey the following uncertainty condition~\cite{Simon_1994}
    \begin{equation}
          \vec{\sigma} + \frac{i\hbar}{2}\vec{\Omega} \succeq 0.
          \label{eqap_uncert_covar}
    \end{equation} 
    The uncertainty relation can also be written for the matrix $\vec{X}$.
    Indeed, using the commutation relation, $\vec{X}^T = \vec{X} - i\hbar \vec{\Omega}$, we get that $\vec{\sigma} = \vec{X} - \frac{i\hbar}{2} \vec{\Omega} -  \braket{ \vec{\hat{x}} }  \braket{\vec{\hat{x}}^T }$.
    Therefore,
    \begin{equation}
        \vec{X} -  \braket{ \vec{\hat{x}} }  \braket{\vec{\hat{x}}^T }  \succeq 0.
    \end{equation}

   In addition to the quadrature operators, the annihilation $\hat{\xi}_i = \tfrac{1}{\sqrt{2\hbar}}(\hat{q}_i + i \hat{p}_i)$ and creation $\hat{\xi}^\dagger_i = \tfrac{1}{\sqrt{2\hbar}}(\hat{q}_i - i \hat{p}_i)$ operators can be introduced.
   For a compact notation, we define two vectors $\vec{\xi} \equiv  (\hat{\xi}_1, \dots, \hat{\xi}_M; \hat{\xi}^\dagger_1, \dots, \hat{\xi}^\dagger_M)^T$ and $\vec{\xi}^H \equiv  (\hat{\xi}^\dagger_1, \dots, \hat{\xi}^\dagger_M; \hat{\xi}_1, \dots, \hat{\xi}_M) $.
   Then, the bosonic commutation relation has a form  $[\hat{\xi}_i, \hat{\xi}^\dagger_j] = K_{ij}$, where 
    $
        \vec{K} = \left(
                    \begin{smallmatrix}
                     \vec{1}_M & \vec{0} \\
                     \vec{0} & -\vec{1}_M 
                    \end{smallmatrix}
                \right)   
    $. 
    The transformation between quadratures and field operators is $\vec{x} = \sqrt{\hbar}  W \vec{\xi}$, where the unitary matrix
    $ W =   \frac{1}{\sqrt{2}} \left(
                    \begin{smallmatrix}
                      \vec{1}_M   &  \vec{1}_M      \\
                    -i\vec{1}_M   & i\vec{1}_M  
                    \end{smallmatrix}
                        \right)$.
    The uncertainty relation~\eqref{eqap_uncert_covar} can be therefore reformulated in terms of field operators as
    \begin{equation}
        \veccal{M} - \braket{\vec{\xi} } \braket{\vec{\xi}^H} \succeq 0,
    \end{equation} 
    where $\veccal{M} = \braket{\vec{\xi} \cdot \vec{\xi}^H}$ and $\vec{X} = \hbar W \veccal{M} W^H $.

 In contrast to formalisms in which symmetric-ordered moments of the creation and annihilation operators are studied~\cite {Simon_1994}, here, we are interested only in the normal-ordered moments.     For this, we define the normal-ordered $M\times M$-matrix $\veccal{D}$ with elements $[\veccal{D}]_{ij} = \braket{{\hat{\xi}_i^\dagger \hat{\xi}_j} }$ and matrix $\veccal{C}$ with elements $ [\veccal{C}]_{ij} =  \braket{{\hat{\xi}_i \hat{\xi}_j} }$. 
    By definition, the matrix $\veccal{D}$ is Hermitian and the matrix $\veccal{C}$ is symmetric. 
    In terms of $\veccal{D}$ and $\veccal{C}$, matrix $\veccal{M}$ has the form
    \begin{equation}
         \veccal{M} =
        \begin{pmatrix}
            \veccal{D}^T + \vec{1}_M &  \veccal{C}      \\
            \veccal{C}^H     & \veccal{D}  
        \end{pmatrix}.
    \end{equation}

    \subsection{Undisplaced states}
    
    For undisplaced states, i.e., when $\braket{\vec{x}}=0$ and $\braket{\vec{\xi}}=0$, the uncertainty condition takes the form 
    \begin{equation}
        \veccal{M} \succeq 0.
    \end{equation} 
    Taking its Schur complement relative to $\veccal{D}^T + \vec{1}_M$, and using Theorem 1.12 from Ref.~\cite{Schur_complement_book_2005}, the uncertainty condition has an alternative form 
    \begin{equation}
         \veccal{D} - \veccal{C}^H (\veccal{D}^T + 1 )^{-1}  \veccal{C} \succeq 0.
    \end{equation}

\subsection{Multimode bipartite states}
\label{app_bipartitie_Gaussian}

    In this work, we are interested in a multimode bipartite system, consisting of two parts: part $a$ consisting of $N$- modes and part $b$ -- $M$-bosonic modes. 
    Denoting the annihilation operators for subsystem $a$ and subsystem $b$ as vectors $\hat{\vec{a}} \equiv  (\hat{a}_1, \dots, \hat{a}_N)^T$ and $\hat{\vec{b}} \equiv  (\hat{b}_1, \dots, \hat{b}_M)^T$,
    the vector $\vec{\xi}$ from the previous subsection is
    \begin{equation}
        \vec{\xi} \equiv  (\hat{a}_1, ..., \hat{a}_N, \hat{b}_1, ..., \hat{b}_M; \hat{a}^\dagger_1, ..., \hat{a}^\dagger_N, \hat{b}^\dagger_1, ..., \hat{b}^\dagger_M)^T.
    \end{equation}
    Then the matrix $ \veccal{M} $ takes the form
    \begin{equation}
         \veccal{M} \equiv 
        \begin{pmatrix}
            \veccal{D}_{a}^T + \vec{1}_N & \veccal{D}_{ba}^T   & \veccal{C}_{a} &  \veccal{C}_{ab}  \\
            \veccal{D}_{ab}^T  & \veccal{D}_{b}^T + \vec{1}_N & \veccal{C}_{ba} &  \veccal{C}_{b}  \\
              \veccal{C}_{a}^H  &  \veccal{C}_{ba}^H &   \veccal{D}_{a}  & \veccal{D}_{ab}     \\
              \veccal{C}_{ab}^H  & \veccal{C}_{b}^H &  \veccal{D}_{ba}  & \veccal{D}_{b}   \\
         \end{pmatrix},
    \end{equation}
    where the elements of matrices $\veccal{D}_{a}$, $\veccal{D}_{b}$ and $\veccal{D}_{ab}$ are $ \braket{{\hat{a}_i^\dagger \hat{a}_j} }$, $ \braket{{\hat{b}_i^\dagger \hat{b}_j} }$ and $ \braket{{\hat{a}_i^\dagger \hat{b}_j} }$, respectively.
    The elements of matrices $\veccal{C}_{a}$, $\veccal{C}_{b}$ and $\veccal{C}_{ab}$ are $ \braket{{\hat{a}_i \hat{a}_j} }$, $ \braket{{\hat{b}_i \hat{b}_j} }$ and $ \braket{{\hat{a}_i \hat{b}_j} }$, respectively.

    \subsection{GSID-states}
    \label{app_GSID_Gaussian}
        
        GSID-states are the undisplaced bipartite Gaussian states with zero correlation matrices $\veccal{D}_{ab}=0$, $\veccal{C}_{a}=0$, $\veccal{C}_{b}=0$.
        Therefore,
    \begin{equation}
         \veccal{M} \equiv 
        \begin{pmatrix}
            \veccal{D}_{a}^T + \vec{1}_N & 0  & 0 &  \veccal{C}_{ab}  \\
            0 & \veccal{D}_{b}^T + \vec{1}_M & \veccal{C}_{ab}^T &  0 \\
              0 &  \veccal{C}_{ab}^* &   \veccal{D}_{a}  & 0    \\
              \veccal{C}_{ab}^H  &0 & 0  & \veccal{D}_{b}   \\
         \end{pmatrix} \succeq 0 .
         \label{eqap_matrix_M_general}
    \end{equation}
    Matrix $\veccal{M}$ is Hermitian and has the form of Eq.~\eqref{eqap_matrix_form}; therefore, Lemma~\ref{lemma_schur_positivity} gives us the uncertainty condition in the form of Eq.~\eqref{eq_uncertainty_main}.

    Applying Lemma~\ref{lemma_schur_trace}, we get 
    \begin{equation}
        |\trace{ \veccal{C}_{ab}^H \veccal{C}_{ab}  }|^2 \leq \trace{ (\veccal{D}_{b}^H + \vec{1}_M) \veccal{D}_{b} } \trace{ (\veccal{D}_{a}^H + \vec{1}_N) \veccal{D}_{a}  }.
    \end{equation}
    Matrices $\veccal{D}_{a,b}^H = \veccal{D}_{a,b}$, and we immediately get Eq.~\eqref{eq_uncet_inequality}.

\section{Purity of GSID-state}
    \label{app_purity}

    The purity $P = \trace{\hat{\rho}^2}$ for an $(N+M)$-mode Gaussian state $\hat{\rho}$ is expressed in terms of a covariance matrix $\veccal{\sigma}$ as~\cite{Fabre_Treps_2020_review}
    \begin{equation}
        P= \dfrac{(\frac{\hbar}{2})^{N+M}}{\sqrt{\det(\veccal{\sigma})}}.
        \label{eqap_purity_deriv_1}
    \end{equation}
    To write down the purity for GSID-states in terms of matrices $\veccal{D}_a$, $\veccal{D}_b$ and $\veccal{C}_{ab}$, we first connect matrices $\veccal{\sigma}$ and $\veccal{M}$.  
    The structure matrices are related as $ W^H \vec{\Omega} W = -i \vec{K}  $ (or 
         $  \vec{\Omega}  = -i W \vec{K} W^H $).
    Therefore,
    \begin{multline}
         \vec{\sigma} = \vec{X} - \frac{i\hbar}{2} \vec{\Omega}    = \hbar W \veccal{M} W^H - \frac{i\hbar}{2} ( -i W \vec{K} W^H) \\ 
         = \hbar W\left( \veccal{M} - \tfrac12 \vec{K} \right) W^H .
         \label{eqap_connection_diff_matrix}
    \end{multline}
    Taking the determinant of both sides, we get 
    \begin{multline}
         \det(\vec{\sigma}) = \det\left( \hbar W \left( \veccal{M} - \tfrac12 \vec{K} \right) W^H \right) \\  = \hbar^{2(N+M)} \det(W) \det(W^H) \det\left(  \veccal{M} - \tfrac12 \vec{K} \right) \\  = \hbar^{2(N+M)} \det\left(  \veccal{M} - \tfrac12 \vec{K} \right),
            \label{eqap_purity_deriv_2}  
    \end{multline}
    where $\vec{\tfrac12} = \tfrac12\vec{1}_{2(N+M)}$.

    Substituting the matrix $\veccal{M}$ for the GSID-state~\eqref{eqap_matrix_M_general} and using Lemma~\ref{lemma_determinant}, we get
    \begin{multline}
             \det\left(  \veccal{M} - \tfrac12 \vec{K} \right) =
            \det \begin{pmatrix}
                \tilde{\veccal{D}}_a^T & \veccal{C}_{ab}    \\
                \veccal{C}_{ab}^H & \tilde{\veccal{D}_b}   \\
            \end{pmatrix} 
            \det \begin{pmatrix}
                 \tilde{\veccal{D}}_a & \veccal{C}_{ab}^*     \\
                 \veccal{C}_{ab}^T & \tilde{\veccal{D}}_b^T  \\
            \end{pmatrix} 
            \\ = \left[  \det \begin{pmatrix}
                \tilde{\veccal{D}}_a^T & \veccal{C}_{ab}    \\
                \veccal{C}_{ab}^H & \tilde{\veccal{D}_b}   \\
            \end{pmatrix} \right]^2,
            \label{eqap_purity_deriv_3}
    \end{multline}
    where $\tilde{\veccal{D}}_a = \veccal{D}_a + \tfrac12\vec{1}_{N}$ and $\tilde{\veccal{D}_b} = \veccal{D}_b + \tfrac12\vec{1}_{M}$.
    For the last equality, we used $\det(\vec{A}^*) = \det(\vec{A})$, which is valid for any Hermitian matrix $\vec{A}$.
    Combining Eqs.~\eqref{eqap_purity_deriv_1},~\eqref{eqap_purity_deriv_2} and \eqref{eqap_purity_deriv_3}, we get the final expression Eq.~\eqref{eq_purity}.

    Note that, by definition, the covariance matrix  is positive definite ($\vec{\sigma} \succ 0$) ~\cite{Simon_1994,Weedbrook_2012}; therefore, $ \veccal{M} - \tfrac12 \vec{K} \succ 0$.
    As a result, its Schur complements are also positive definite.
    This means that the matrix $\left( \begin{smallmatrix}
                    \tilde{\veccal{D}}_a^T & \veccal{C}_{ab}    \\
                    \veccal{C}_{ab}^H & \tilde{\veccal{D}_b}   \\
                \end{smallmatrix} \right) \succ 0$
    , which will be further used in Appendix~\ref{app_purity_estimation}.

\section{Degree of correlation bound }
\label{app_ineq_degree_corr}
    Taking the trace of the first inequality from the uncertainty relation Eq.~\eqref{eq_uncertainty_main}, we get 
    \begin{equation}
                 0 \leq \mathcal{N}_a - \trace{ \veccal{C}_{ab}(\veccal{D}_b  + \vec{1}_M)^{-1}\veccal{C}^H_{ab} }.
        \label{eq_uncert_type_II_v2}
    \end{equation}
    Using Lemma~\ref{lemma_trace_2}, we have  
    \begin{equation}
       \trace{\veccal{C}_{ab}^H(\veccal{D}_b  + \vec{1}_M)^{-1} \veccal{C}_{ab}} \geq   \dfrac{ \mathcal{J}  }{   \frac{ \mathcal{N}_b   }{ \sqrt{K\mu} }  + 1  }, 
    \end{equation}
    and consequently
    \begin{equation}
        0  \leq \mathcal{N}_a - \dfrac{ \mathcal{J} }{   \frac{ \mathcal{N}_b }  { \sqrt{K\mu_b} }  + 1  }.
    \end{equation}
    Writing the total amount of correlations $\mathcal{J}$ in terms of degree of correlation $\xi$, we get
       \begin{equation}
        0  \leq \mathcal{N}_a - \dfrac{ \xi \sqrt{\mathcal{N}_a + \frac{\mathcal{N}_a^2}{\mu_a} }   \sqrt{\mathcal{N}_b + \frac{\mathcal{N}_b^2}{\mu_b} } }{   \frac{ \mathcal{N}_b }  { \sqrt{K\mu_b} }  + 1  }.
    \end{equation}
    and
    \begin{equation}
        \xi \leq  \dfrac{ \mathcal{N}_a \left(  \frac{ \mathcal{N}_b }  { \sqrt{K\mu_b} }  + 1   \right) }{\sqrt{\mathcal{N}_a + \frac{\mathcal{N}_a^2}{\mu_a} }   \sqrt{\mathcal{N}_b + \frac{\mathcal{N}_b^2}{\mu_b} }}.
    \end{equation}

    Applying the same operations to the second inequality in Eq.~\eqref{eq_uncertainty_main}, and taking into account that both conditions hold, we obtain the resulting inequalities~\eqref{eq_inequality_xi}.

\section{Uncertainty of MW-modes}
\label{app_uncertainty_of_mw_modes}

    In this section, we briefly discuss the properties of the MW-modes for the GSID-state.
    First, let us build matrices
   \begin{equation}
        \vec{\Lambda}_a = \mathrm{diag}(\lambda^a_1,  ...  \lambda^a_N) ~ ~ \text{and} ~ ~ \vec{\Lambda}_b= \mathrm{diag}(\lambda^b_1, ...  \lambda^b_M),
    \end{equation}
    where $\set{\lambda^{a}_n}$ and $\set{\lambda^{b}_m}$ are the eigenvalues of matrices $\veccal{D}_a$ and $\veccal{D}_b$, respectively.
    The eigenvalues are always taken in decreasing order ($\lambda^{a,b}_i \geq \lambda^{a,b}_j\geq 0, \forall j>i$), therefore, the matrices $\vec{\Lambda}_{a,b}$ are  uniquely-defined.
    The eigendecompoisitions of the matrices $\veccal{D}_{a,b}$ are then
    $
         \veccal{D}_a = W_A \vec{\Lambda}_a W_A^H, ~ ~      \veccal{D}_b = W_B \vec{\Lambda}_b W_B^H,
    $ 
    where $W_A$ and $W_B$ are unitary matrices.
    These eigendecompositions define the MW-basis and therefore it has the following properties:
    \begin{enumerate}
        \item The MW-basis is well-defined but not unique. 
        In the case of degenerate eigenvalues, the unitary matrices and consequently the MW-modes are not unique. 
        Indeed, let us consider the matrix $\veccal{D}_a$ that has $M$ different eigenvalues $\lambda_i$ with the degree of degeneracy $k_i$.
        Its eigendecomposition is $\veccal{D}_a = W_A \vec{\Lambda}_a W_A^H$, where
        \begin{equation}
            \vec{\Lambda}_a  = \lambda_1 \vec{1}_{k_1} \oplus \lambda_2 \vec{1}_{k_2} \oplus ... \oplus \lambda_{k_M} \vec{1}_{k_M}.
            \label{eq_lambda_a}
        \end{equation}
        However, any matrix $\tilde{W}_A = W_A V$ is a valid unitary transformation that diagonalizes matrix $\veccal{D}_a$, where $V=\vec{V}_{1} \oplus ... \oplus \vec{V}_{M}$, and $\vec{V}_{i}$ are unitary matrices of the size $k_i\times k_i$.
        Therefore, both unitarz matrices $W_A$ and $\tilde{W}_A $ can be considered as a valid MW-basis for the subsystem $a$.
        
        \item In the case of a non-degenerate spectrum, the matrix $V=\textrm{diag}(e^{i\phi_1}, ..., e^{i\phi_N})$; i.e, every mode is defined up to a constant phase.

        \item Depending on the explicit form of the matrix $\veccal{C}_{AB}$, some of the MW-bases may be more preferable then others.
              To choose the `best' unitary matrices $W_A$ and $W_B$, we can introduce additional criteria requiring the matrix $\veccal{C}_{AB}$ to be close to some particular form.
                As a first example, such an additional criteria exists for the pure states, and we can introduce the unique basis of Schmidt modes (see Sec.~\ref{sec_pure_gtts}).
                As a second example, for the non-degenerate matrices $\veccal{D}_a$ and $\veccal{D}_b$, matrices $W_A$ and $W_B$ can always be chosen in such a way that the diagonal elements of $\veccal{C}_{AB}$ are real (see Refs.~\cite{Kopylov_2025,Kopylov_2025_APL}).
                However, for arbitrary GSID-states, such reasonable criteria can depend on application and their detailed studies are  out of the scope of this paper.
    \end{enumerate}

    As soon as the basis-independent quantities depend only on the eigenvalues, the uncertainty of matrices $W_A$ and $W_B$ does not play any role for them.

\section{Purity bounds}
    \label{app_purity_estimation}

        In this section, we derive lower purity bounds that depend only on the parameters $\mathcal{N}_a$, $\mathcal{N}_b$, $\mu_a$, $\mu_b$, $K$ and $\xi$.
        We start with Eq.~\eqref{eq_purity} and from the Schur's determinant formulas Eq.~\eqref{eq_schur_determinant_1} and~\eqref{eq_schur_determinant_2}, we obtain
        \begin{equation}
             P(\hat{\rho}) =   [\det(\veccal{A}) \det\left( \veccal{B} - 4 \veccal{G}_a  \right)]^{-1} 
             \label{eqap_pur_der_1}
        \end{equation} 
        and
        \begin{equation}
             P(\hat{\rho}) =   [\det(\veccal{B})  \det\left( \veccal{A} - 4 \veccal{G}_b \right)]^{-1}.
                \label{eqap_pur_der_3}
        \end{equation} 
        where, for the compactness, we introduced $\veccal{A} = 2\veccal{D}_a^T + \vec{1}_{N}$, $\veccal{B} = 2\veccal{D}_b   + \vec{1}_{M}$, $\veccal{G}_a = \veccal{C}_{ab}^H \veccal{A}^{-1} \veccal{C}_{ab}$ and 
         $\veccal{G}_b = \veccal{C}_{ab} \veccal{B}^{-1} \veccal{C}_{ab}^H$.
        From Lemma~\ref{lemma_trace_2}, we obtain
        \begin{equation}
            \trace{\veccal{G}_a}  \geq \dfrac{  \sqrt{K\mu_a} \: \mathcal{J} }{  2\mathcal{N}_a   +  \sqrt{K\mu_a}   } \equiv F.
            \label{eqap_pur_der_2}
        \end{equation}
        
        In the following subsections, we derive two purity bounds based on inequalities~\eqref{eqap_det_first_est} and \eqref{eqap_det_second_est}.

\subsection{Strict purity bound}
    \label{app_purity_bound_strict}

        Let us apply inequality~\eqref{eqap_det_first_est} to the determinants in Eq.~\eqref{eqap_pur_der_1}
        \begin{equation}
            \det(\veccal{A}) = \det(2\veccal{D}_a^T + \vec{1}_{N}) \leq \exp( 2\mathcal{N}_a ), \\
         \end{equation} 
        and
        \begin{multline}
            \det\left( \veccal{B} - 4\veccal{G}_a  \right)  \leq \exp\left( 2\mathcal{N}_b - 4\trace{\veccal{G}_a} \right)
            \\ 
            \leq \exp\left( 2\mathcal{N}_b - 4 F \right).
        \end{multline}
        As a result, we get the first expression for the strict purity bound
        \begin{equation}
             P(\hat{\rho}) \leq P^s_1 \equiv \exp( 2\mathcal{N}_a ) \exp\left( 2\mathcal{N}_b - 4 F \right).
        \end{equation}
        Applying inequality~\eqref{eqap_det_first_est} to the determinants in Eq.~\eqref{eqap_pur_der_3}, we get the second expression with the value $P^s_2$.
        Taking the maximal value of $P^s_1$ and $P^s_2$, we get the strict purity bound Eq.~\eqref{eq_purity_bound_strict}.

\subsection{Approximate purity bound}
    \label{app_purity_bound_approx}

        Let us first apply inequality~\eqref{eqap_det_second_est} to the first determinant in Eqs.~\eqref{eqap_pur_der_1}
        \begin{equation}
             \det(\veccal{A}) = \det(2\veccal{D}_a^T + \vec{1}_{N}) \leq \bigg[ \frac{2\mathcal{N}_a}{n_a} +1  \bigg]^{n_a}  , 
        \end{equation}   
        where $n_a$ is the number of nonzero eigenvalues of matrix $\veccal{D}_a$.
        Applying it to the second one, we get
         \begin{multline}
            \det\left( \veccal{B} - 4\veccal{C}_{ab}^H \det(\veccal{A})^{-1} \veccal{C}_{ab}  \right) \\
             \leq  \Bigg[ 1 + \frac{2\mathcal{N}_b}{n_b} - 4 F   \Bigg]^{n_b}, 
        \end{multline}
        where $n_b$ is the number of nonzero eigenvalues of matrix $\veccal{D}_b$.

        Similarly to the derivation of the strict bound, we have to apply inequality~\eqref{eqap_det_second_est} to the determinants in Eq.~\eqref{eqap_pur_der_3}.
        Taking the maximal value from both of them, we get the purity bound in the form of Eq.~\eqref{eq_purity_bound_approx}.

    \subsection{Upper bound for the renormalized purity}
    \label{app_renormalized_purity_bound}

    Here we derive the upper bound for the renormalized purity.
    To get the renormalized purity $R(\hat{\rho})$, we start with Eq.~\eqref{eqap_pur_der_1} 
    \begin{equation}
         P(\hat{\rho})  =   \big[\det(\veccal{A}) \det\left( \veccal{B} - 4\veccal{G}_a  \right)\big]^{-1}
    \end{equation}
    and the purity of the corresponding thermal state 
    \begin{equation}
         P(\hat{\rho}_{th}) = \big[\det(\veccal{A}) \det(\veccal{B})\big]^{-1}.
    \end{equation}

    In the case of low number of photons,
    \begin{align}
        \det(\veccal{A}) & \approx 1 + 2\mathcal{N}_a, \\
        \det(\veccal{B}) & \approx 1 + 2\mathcal{N}_b, \\
        \det\left( \veccal{B} - 4\veccal{G}_a  \right) & \approx 1 + 2\mathcal{N}_b - 4\trace{\veccal{G}_a}.
    \end{align}

    Using these expressions for the renormalized purities, we get 
    \begin{multline}
        R(\hat{\rho}) = \dfrac{P(\hat{\rho})-P(\hat{\rho}_{th})}{1-P(\hat{\rho}_{th})} 
        = \dfrac{P(\hat{\rho})[P(\hat{\rho}_{th})]^{-1} - 1} {[P(\hat{\rho}_{th})]^{-1} - 1}  \\
        = \dfrac{ \frac{1 + 2\mathcal{N}_b}{(1 + 2\mathcal{N}_b - 4\trace{\veccal{G}_a})} - 1} {(1 + 2\mathcal{N}_a)(1 + 2\mathcal{N}_b)  - 1}. 
    \end{multline}
    Using Eq.~\eqref{eqap_pur_der_2}, the desired upper bound is obtained  
    \begin{equation}
        R(\hat{\rho}) \leq \dfrac{ \frac{1 + 2\mathcal{N}_b}{(1 + 2\mathcal{N}_b - 4 F)} - 1} {(1 + 2\mathcal{N}_a)(1 + 2\mathcal{N}_b)  - 1} .
    \end{equation}
   Applying the same procedure for Eq.~\eqref{eqap_pur_der_3}, and taking the lowest value, we get the final expression Eq.~\eqref{eq_renormalized_purity_bound}.

\section{ Normal-ordered characteristic functions for undisplaced states}
 \label{app_norm_ordered_undisplcaed}

\subsection{ General undisplaced states }

        Let us consider an $M$-mode Gaussian state $\hat{\rho}$, described by the vector of field operators  $\vec{\xi} \equiv  (\hat{\xi}_1, \dots, \hat{\xi}_M; \hat{\xi}^\dagger_1, \dots, \hat{\xi}^\dagger_M)^T$ (see Appendix~\ref{app_gaussian_general}).
        For the undisplaced Gaussian state, the normal-ordered characteristic function $\chi(\vec{\beta}, \vec{\beta^*}) $ is Gaussian~\cite{Vogel_Welsch_book} and
        \begin{equation}
            \chi(\vec{\beta}, \vec{\beta^*}) = \exp \left( \vec{\beta}^T \vec{A} \vec{\beta} + \vec{\beta}^T \vec{B} \vec{\beta^*} + (\vec{\beta^*})^T \vec{C} \vec{\beta^*}    \right).
        \end{equation}
        Its derivatives relate to the normal-ordered moments as
        \begin{multline}
           \braket{\prod_{ij} (\hat{\xi}_i^\dagger)^{l_i} \hat{\xi}_j^{k_j} } = (-1)^{\sum_i k_i} \\ \times \prod_{ij} \partial^{l_i}_{\beta_i} \partial^{k_j}_{\beta^*_j} \: \Phi(\vec{\beta}, \vec{\beta^*}) \big|_{\vec{\beta}=\vec{\beta^*}=0} .
        \end{multline}

        Using results from Appendix~\ref{app_derivatives_Gaussian} for second-order moments, we obtain
        that  $A_{ij} = \frac12 \braket{ \hat{\xi}^\dagger_{i} \hat{\xi}^\dagger_{j} }$, $C_{ij}  = \frac12 \braket{  \hat{\xi}_{i} \hat{\xi}_{j} }$ and $B_{ij}  =  - \braket{   \hat{\xi}^\dagger_{i} \hat{\xi}_{j} } $.
        The fourth-order normal-ordered moment is then
        \begin{multline}
             \langle \hat{\xi}^\dagger_{i} \hat{\xi}^\dagger_{j} \hat{\xi}_{k} \hat{\xi}_{l} \rangle  
                =  
                \langle \hat{\xi}_{k} \hat{\xi}_{l} \rangle \langle \hat{\xi}^\dagger_{j} \hat{\xi}^\dagger_{i} \rangle + 
                \langle \hat{\xi}^\dagger_{i} \hat{\xi}_{k} \rangle \langle \hat{\xi}^\dagger_{j} \hat{\xi}_{l} \rangle  \\ +
                \langle \hat{\xi}^\dagger_{j} \hat{\xi}_{k} \rangle \langle \hat{\xi}^\dagger_{i} \hat{\xi}_{l} \rangle .
                \label{eqap_norm_ord_correlator}
        \end{multline}
        Using the photon number operator $\hat{n}_i = \hat{\xi}^\dagger_i \hat{\xi}_i$, its second order correlation function reads as
        \begin{multline}
            \braket{\hat{n}_i \hat{n}_j} =  
            \braket{\hat{\xi}^\dagger_{i} \hat{\xi}^\dagger_{j}\hat{\xi}_{i} \hat{\xi}_{j}} +
            \braket{\hat{\xi}^\dagger_{i} \hat{\xi}_{j}} \delta_{ij} =   \\ 
             \langle \hat{\xi}_{i} \hat{\xi}_{j} \rangle \langle  \hat{\xi}^\dagger_{i} \hat{\xi}^\dagger_{j} \rangle + 
                \langle \hat{\xi}^\dagger_{i} \hat{\xi}_{i} \rangle \langle \hat{\xi}^\dagger_{j} \hat{\xi}_{j} \rangle \\ + 
                \langle \hat{\xi}^\dagger_{j} \hat{\xi}_{i} \rangle \langle \hat{\xi}^\dagger_{i} \hat{\xi}_{j} \rangle +
            \braket{\hat{\xi}^\dagger_{i} \hat{\xi}_{j}} \delta_{ij} .
            \label{eqap_fourth_order_general}
        \end{multline}

\subsection{ Bipartite states }

        In the same way as in Appendix~\ref{app_bipartitie_Gaussian}, let us consider a bipartite system, consisting of two parts: $c$ and $d$.    
        Denoting the annihilation operators for subsystems $c$ and $d$ as vectors $\hat{\vec{c}} \equiv  (\hat{c}_1, \dots, \hat{c}_N)^T$ and $\hat{\vec{d}} \equiv  (\hat{d}_1, \dots, \hat{d}_M)^T$, respectively,  the vector 
            $ 
                \vec{\xi} \equiv  (\hat{c}_1, ..., \hat{c}_N, \hat{d}_1, ..., \hat{d}_N; \hat{c}^\dagger_1, ..., \hat{c}^\dagger_N, \hat{d}^\dagger_1, ..., \hat{d}^\dagger_N)^T.
            $
        Defining the photon number operators for each of the modes, we get $\hat{n}^{c}_i = \hat{c}^\dagger_i \hat{c}_i$ and $\hat{n}^{d}_i = \hat{d}^\dagger_i \hat{d}_i$, and 
        the correlator~\eqref{eqap_fourth_order_general} is then
        \begin{equation}
            \braket{\hat{n}^c_i \hat{n}^d_j} =  
             |\langle \hat{c}_{i} \hat{d}_{j} \rangle |^2 + |\langle \hat{c}^\dagger_{i} \hat{d}_{j} \rangle |^2   +  \langle \hat{n}^{c}_i\rangle \langle \hat{n}^{d}_i \rangle . 
             \label{eqap_bipart_correlator}
        \end{equation}

\section{ The normalized second-order-correlation function with frequency-unresolved detectors }
\label{app_sec_ord_corr_func}

    Let us consider two detectors that measure modes $c$ and $d$ of the bipartite system presented in previous section.
    Let us define a measured normalized second-order correlation function between the detectors $c$ and $d$ 
    \begin{equation}
          G_{(c,d)} = \dfrac{\braket{\hat{N}_c \hat{N}_d}}{\braket{\hat{N}_c} \braket{\hat{N}_d}}.
    \end{equation}
    The operators $\hat{N}_c = \sum_i \hat{c}^\dagger_i \hat{c}_i$ and $\hat{N}_d=\sum_j \hat{d}^\dagger_j \hat{d}_j$ are the total number of photons operators for each subsystem.
    The correlator $\braket{\hat{N}_c \hat{N}_d} = \sum_{ij}\braket{\hat{c}^\dagger_i \hat{c}_i \hat{d}^\dagger_j \hat{d}_j}$, we get 
    \begin{equation}
       G_{(c,d)} = 1 + \dfrac{ \sum_{ij} |\braket{\hat{c}_{i} \hat{d}_{j}}  |^2 }{N_c N_d} + \dfrac{\sum_{ij} |\braket{\hat{c}^\dagger_{i} \hat{d}_{j} } |^2  }{N_c N_d},
       \label{eqap_sec_ord_measuremet_geenral}
    \end{equation}
    where we used Eq.~\eqref{eqap_bipart_correlator}.
    This general expression is valid for any undisplaced state, and below we apply it to the GSID-state.

\subsection{ Signal-idler second-order correlation function for GSID-state }

    Let us take the GSID-state defined by the matrices $\veccal{D}_a$,  $\veccal{D}_b$ and $\veccal{C}_{ab}$, and send the signal and the idler fields to the detectors $c$ and $d$, respectively.
    Then, $\braket{\hat{c}_{i} \hat{d}_{j}} = \braket{\hat{a}_{i} \hat{b}_{j}}$ and $\braket{\hat{c}^\dagger_{i} \hat{d}_{j} } = \braket{\hat{a}^\dagger_{i} \hat{b}_{j} } = 0$, and we obtain  
    \begin{equation}
       G_{(c,d)} = 1 + \dfrac{ \sum_{ij} | \braket{\hat{a}_{i} \hat{b}_{j}}   |^2 }{\sum_i\braket{\hat{a}_{i} \hat{a}_{i}} \sum_j \braket{\hat{b}_{j} \hat{b}_{j}}}.
    \end{equation}

    The normal ordered signal-idler second-order correlation function is defined as 
    \begin{equation}
        g^{(1,1)} = \dfrac{\braket{:\hat{N}_a \hat{N}_b:}}{\braket{\hat{N}_a} \braket{\hat{N}_b}}.
    \end{equation} 
    From Eq,~\eqref{eqap_norm_ord_correlator}, we see that $g^{(1,1)} = G_{(c,d)}$.

\subsection{ Signal-signal second-order correlation function for GSID-state }

    Let us send a signal field of a GSID-state to the first port of a 50:50 beamsplitter, and the resulting field from the first and second output ports to detectors $c$ and $d$, respectively.
    The measured modes are then $\hat{c}_i = \sqrt{\frac12}(\hat{a}_i + \hat{f}_i)$,  $\hat{d}_i = \sqrt{\frac12}(\hat{a}_i - \hat{f}_i)$, where $\hat{f}_i$ is the annihilation operator from the input of the second beamsplitter port.
    If the initial state of this input is vacuum, $\braket{\hat{c}_{i} \hat{d}_{j} } = \braket{\hat{a}_{i} \hat{a}_{j} } = 0$ and $\braket{\hat{c}^\dagger_{i} \hat{d}_{j}} =  \braket{\hat{a}^\dagger_{i} \hat{a}_{j}} $.
    Then, the measured second-order correlation function~\eqref{eqap_sec_ord_measuremet_geenral} is
    \begin{equation}
       G_{(c,d)} = 1 + \dfrac{ \sum_{ij} | \braket{\hat{a}^\dagger_{i} \hat{a}_{j}}   |^2 }{\big[  \sum_i\braket{\hat{a}_{i} \hat{a}_{i}}  \big] } .
    \end{equation}

    Note that in our derivations, we have started with the two detectors in channels $c$ and $d$ and have considered their fourth-order correlation function.
    In order to express it in terms of the normal-ordered second-order correlation function of the initial signal field
    \begin{equation}
          g^{(2)}_{a} = \dfrac{\braket{:\hat{N}^2_a:}}{\braket{\hat{N}_a}^2} = 
          \dfrac{\sum_{ij}\braket{\hat{a}^\dagger_{i} \hat{a}^\dagger_{j} \hat{a}_{i} \hat{a}_{j}} }{\big[\sum_{i}\braket{\hat{a}^\dagger_{i} \hat{a}_{i} }\big]^2},
    \end{equation}
    we should use Eq.~\eqref{eqap_norm_ord_correlator}.
    Indeed,      
    \begin{equation}
         \langle \hat{a}^\dagger_{i} \hat{a}^\dagger_{j} \hat{a}_{i} \hat{a}_{j} \rangle  
            =  
            |\langle \hat{a}^\dagger_{i} \hat{a}_{j} \rangle|^2 + 
            \langle \hat{a}^\dagger_{i} \hat{a}_{i} \rangle \langle \hat{a}^\dagger_{j} \hat{a}_{j} \rangle  ,
    \end{equation}
    and therefore $g^{(2)}_{a} = G_{(c,d)}$.

    Similarly, the expression of the normal second-order correlation function for the idler field is obtained.

\section{Proof of Theorem 1: bipartite Schmidt modes}
    \label{app_theorem_proof}

A proof of Theorem~\ref{theorem_pure_state} consist of four steps: 
    (1) We start with a bipartite system whose subsystems have the same number of modes. 
    Considering a GSID-state $\hat{\rho}$ that satisfies the minimal uncertainty relation, we find a connection between its correlation matrices;
    (2) We define the unique basis of Schmidt modes;
    (3) We show that the minimal uncertainty states are pure.  
    (4) We extend the obtained results to bipartite systems with arbitrary dimensions.

    \textbf{Step 1.}
    Let us assume that the most minimal form of the uncertainty condition~\eqref{eq_uncertainty_main} is
        \begin{equation}
            \begin{cases}
             \veccal{D}^T_a - \veccal{C}_{ab}(\veccal{D}_b  + 1)^{-1}\veccal{C}^H_{ab} & = 0, \\
                     \veccal{D}_b - \veccal{C}_{ab}^H (\veccal{D}_a^T  + 1)^{-1}\veccal{C}_{ab} & = 0.
            \end{cases}
            \label{eq_app_uncertainty_general}
        \end{equation}
    Here we manually replaced the symbol `$\succeq$' with `$=$', meaning that the matrices on the left-hand side have only zero eigenvalues. 
    In other words, these matrices are zero.
    This replacement does not guarantee that the solution of Eq.~\eqref{eq_app_uncertainty_general} exists. 
    However, if it does, the uncertainty condition~\eqref{eq_uncertainty_main} is the minimal uncertainty condition, and it defines the minimal uncertainty states.         

    To show that Eq.~\eqref{eq_app_uncertainty_general} is resolvable, let us start with an $(N+N)$-mode system and a GSID-state $\hat{\rho}$ that is given by the correlation matrices $\veccal{D}_a, \veccal{D}_b, \veccal{C}_{ab} \in \mathbb{C}^{N\times N}$. 
    Here, we assume that all $N$ eigenvalues of matrix $\veccal{D}_A$ are nonzero.  
    In the MW-basis (see Sec.~\ref{ssub_mw_basis}), matrices $\veccal{D}_{A,B}$ are diagonal: $\veccal{D}_A = \vec{\Lambda}_a$ and $\veccal{D}_B = \vec{\Lambda}_b$.
    It follows from Eq.~\eqref{eq_app_uncertainty_general} that the minimal uncertainty relation should take the form
        \begin{align}
            \begin{cases}
                \vec{\Lambda}_a - \veccal{C}_{AB}(\vec{\Lambda}_b  + 1)^{-1}\veccal{C}^H_{AB} & = 0, \\
                \vec{\Lambda}_b - \veccal{C}_{AB}^H (\vec{\Lambda}_a  + 1)^{-1}\veccal{C}_{AB} &  = 0.
            \end{cases}
            \label{eq_app_minuncert_diag}
        \end{align}
    The resolvability of this system is implied by the following lemma.
    Therefore, the  \textit{minimal uncertainty states} obeying the system~\eqref{eq_app_minuncert_diag} exist. 
        
    \begin{lemma}
        \label{lemma_uncert_resolv}
        \begin{equation}
            \vec{\Lambda}_a  = \lambda_1 \vec{1}_{k_1} \oplus \lambda_2 \vec{1}_{k_2} \oplus ... \oplus \lambda_{k_M} \vec{1}_{k_M},
            \label{eq_app_lambda_a}
        \end{equation}
        where $\lambda_i > \lambda_j > 0,  \forall i<j$, and $k_i \in \mathbb{N}$,
         the system~\eqref{eq_app_minuncert_diag} is resolvable if and only if
        \begin{equation}
            \vec{\Lambda}_b = \vec{\Lambda}_a \equiv \vec{\Lambda} 
        \end{equation}
        and 
        \begin{equation}
            \veccal{C}_{AB} = (\vec{\Lambda} + \vec{\Lambda}^2)^{\frac12} V ,
            \label{eq_app_res_c_theorem}
        \end{equation}
        where $V = {V}_1 \oplus {V}_2 \oplus ... \oplus {V}_M$ and ${V}_i$ are unitary matrices of the size $k_i \times k_i$.
            
        \end{lemma}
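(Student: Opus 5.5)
The plan is to normalise $\veccal{C}_{AB}$ by the diagonal matrices $\vec{P}=\vec{\Lambda}_a+\vec{1}$ and $\vec{Q}=\vec{\Lambda}_b+\vec{1}$, which are positive definite, so their square roots and inverses are well defined. This turns the system~\eqref{eq_app_minuncert_diag} into statements about $\vec{T}\vec{T}^H$ and $\vec{T}^H\vec{T}$ for a single square matrix $\vec{T}$. Concretely, I would set
\begin{equation*}
\vec{T}=\vec{P}^{-1/2}\,\veccal{C}_{AB}\,\vec{Q}^{-1/2}.
\end{equation*}
Since all diagonal matrices involved commute, the first line of~\eqref{eq_app_minuncert_diag} becomes $\vec{P}^{1/2}\vec{T}\vec{T}^H\vec{P}^{1/2}=\vec{\Lambda}_a$, that is,
\begin{equation*}
\vec{T}\vec{T}^H=\vec{\Lambda}_a(\vec{\Lambda}_a+\vec{1})^{-1}.
\end{equation*}
In the same way, the second line gives $\vec{T}^H\vec{T}=\vec{\Lambda}_b(\vec{\Lambda}_b+\vec{1})^{-1}$. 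Because $\vec{\Lambda}_a\succ0$, the matrix $\vec{T}\vec{T}^H$ is invertible. Hence the square matrix $\vec{T}$ is invertible, and so is $\veccal{C}_{AB}$.

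\textbf{Step 1: $\vec{\Lambda}_b=\vec{\Lambda}_a$.} The matrices $\vec{T}\vec{T}^H$ and $\vec{T}^H\vec{T}$ have the same eigenvalues with multiplicities. The map $x\mapsto x/(1+x)$ is strictly increasing on $[0,\infty)$. Both $\vec{\Lambda}_a$ and $\vec{\Lambda}_b$ are diagonal with entries in decreasing order (the MW-basis convention of Appendix~\ref{app_uncertainty_of_mw_modes}). Comparing the two spectra entry by entry therefore forces $\vec{\Lambda}_b=\vec{\Lambda}_a\equiv\vec{\Lambda}$.

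\textbf{Step 2: structure of $\veccal{C}_{AB}$.} Put $\vec{D}=\vec{\Lambda}(\vec{1}+\vec{\Lambda})^{-1}$. Step 1 gives $\vec{T}\vec{T}^H=\vec{T}^H\vec{T}=\vec{D}$, so $\vec{T}$ is normal. I would write its polar decomposition as $\vec{T}=\vec{D}^{1/2}V$ with $V$ unitary, which is valid because $(\vec{T}\vec{T}^H)^{1/2}=\vec{D}^{1/2}$ and $\vec{T}$ is invertible. Then
\begin{equation*}
\vec{T}^H\vec{T}=V^H\vec{D}V=\vec{D},
\end{equation*}
so $V$ commutes with $\vec{D}$. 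Strict monotonicity of $x/(1+x)$ means $\vec{D}$ has exactly the same degeneracy blocks $k_1,\dots,k_M$ as $\vec{\Lambda}$ in~\eqref{eq_app_lambda_a}. A unitary matrix commuting with such a block-scalar diagonal matrix must itself be block diagonal, $V=V_1\oplus\dots\oplus V_M$. In particular $V$ commutes with $\vec{P}^{1/2}$. Undoing the normalisation then gives
\begin{equation*}
\veccal{C}_{AB}=\vec{P}^{1/2}\vec{D}^{1/2}V\vec{P}^{1/2}=\vec{\Lambda}^{1/2}(\vec{1}+\vec{\Lambda})^{1/2}V=(\vec{\Lambda}+\vec{\Lambda}^2)^{1/2}V,
\end{equation*}
which is~\eqref{eq_app_res_c_theorem}.

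\textbf{Step 3: converse.} I would substitute $\vec{\Lambda}_b=\vec{\Lambda}_a=\vec{\Lambda}$ and $\veccal{C}_{AB}=(\vec{\Lambda}+\vec{\Lambda}^2)^{1/2}V$ into~\eqref{eq_app_minuncert_diag}. Using $VV^H=V^HV=\vec{1}$ and the fact that $V$ commutes with every function of $\vec{\Lambda}$, both lines reduce to $(\vec{\Lambda}+\vec{\Lambda}^2)(\vec{\Lambda}+\vec{1})^{-1}=\vec{\Lambda}$, so the system is satisfied.

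I expect the main obstacle to be the commutation argument in Step 2. One has to justify that the unitary polar factor is forced to be block diagonal exactly along the degeneracy pattern of $\vec{\Lambda}$, and not along some coarser or finer pattern. This is where both the strict ordering $\lambda_i>\lambda_j>0$ and the strict monotonicity of $x/(1+x)$ are genuinely needed.
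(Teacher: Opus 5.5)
Your proof is correct, and it takes a different route from the paper's.

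The paper works directly with $\veccal{C}_{AB}$:
\begin{itemize}
\item It argues via a rank count that $\veccal{C}_{AB}$ is invertible.
\item It solves each line of~\eqref{eq_app_minuncert_diag} for $\veccal{C}_{AB}$ or $\veccal{C}_{AB}^H$ and substitutes into the other line. This gives the similarity relations $M_a=\veccal{C}_{AB}M_b\veccal{C}_{AB}^{-1}$ and $M_b=\veccal{C}_{AB}^H M_a(\veccal{C}_{AB}^H)^{-1}$, with $M_{a,b}=\vec{\Lambda}_{a,b}(\vec{\Lambda}_{a,b}+1)^{-1}$.
\item Combining these shows that $\veccal{C}_{AB}\veccal{C}_{AB}^H$ commutes with $M_a$. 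From this it reads off the block structure of $\veccal{C}_{AB}$ and $\vec{\Lambda}_b=\vec{\Lambda}_a$.
\item It then solves $C_iC_i^H=C_i^HC_i=\lambda_i(\lambda_i+1)\vec{1}_{k_i}$ block by block.
\end{itemize}

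Your symmetric normalisation $\vec{T}=\vec{P}^{-1/2}\veccal{C}_{AB}\vec{Q}^{-1/2}$ instead turns both lines into Gram-matrix identities for a single matrix. Each conclusion then follows from one standard fact:
\begin{itemize}
\item invertibility, from $\vec{T}\vec{T}^H\succ 0$;
\item $\vec{\Lambda}_b=\vec{\Lambda}_a$, from the isospectrality of $\vec{T}\vec{T}^H$ and $\vec{T}^H\vec{T}$;
\item the block-diagonal unitary, from the polar decomposition together with $V^H\vec{D}V=\vec{D}$.
\end{itemize}

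Your version is tighter exactly where the paper is sketchy, namely the step from the commutation relations to the block form of $\veccal{C}_{AB}$ and to $\vec{\Lambda}_b=\vec{\Lambda}_a$. It also makes explicit that the decreasing-order convention for $\vec{\Lambda}_b$ is what rules out a permuted solution. And it checks the converse direction, which the paper leaves implicit. The paper's route, for its part, stays with the original variables and gets the similarity of $M_a$ and $M_b$ without any auxiliary construction.
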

                
                \begin{proof}

                The first equation in the system~\eqref{eq_app_minuncert_diag} has a solution if $\mathrm{rank}(\vec{\Lambda}_a) = \mathrm{rank}(\veccal{C}_{AB}(\vec{\Lambda}_b  + 1)^{-1} \veccal{C}^H_{AB})$. 
                In turn, $ \mathrm{rank}(\veccal{C}_{AB}(\vec{\Lambda}_b  + 1)^{-1} \veccal{C}^H_{AB})\leq \mathrm{min}[\mathrm{rank}(\veccal{C}_{AB}),N,\mathrm{rank}(\veccal{C}^H_{AB}) ]$, and an equality holds if $\mathrm{rank}(\veccal{C}^H_{AB}) = N$ and therefore $\veccal{C}_{AB}$ has to be invertible.
                Then, we can extract the matrices $\veccal{C}_{AB} $ and $\veccal{C}_{AB}^H $ from the first and second equations \eqref{eq_app_minuncert_diag}
                \begin{subequations}
                    \begin{align}
                            \label{eq_app_diag_step_1a}                
                            \veccal{C}_{AB} = \vec{\Lambda}_a(\veccal{C}^H_{AB})^{-1}(\vec{\Lambda}_b  + 1),  \\
                            \label{eq_app_diag_step_1b}                    
                            \veccal{C}_{AB}^H = \vec{\Lambda}_b (\veccal{C}_{AB})^{-1} (\vec{\Lambda}_a  + 1).
                    \end{align}
                \end{subequations}
                Substituting Eqs.~\eqref{eq_app_diag_step_1a} and \eqref{eq_app_diag_step_1b} into the second and the first equations, respectively, we obtain
                \begin{subequations}
                    \begin{align}
                            \label{eq_app_diag_step_2a}                 
                            \vec{\Lambda}_b(\vec{\Lambda}_b  + 1)^{-1}  & =  \veccal{C}_{AB}^H (\vec{\Lambda}_a  + 1)^{-1} \vec{\Lambda}_a(\veccal{C}^H_{AB})^{-1},  \\ 
                            \label{eq_app_diag_step_2b}    
                            \vec{\Lambda}_a (\vec{\Lambda}_a  + 1)^{-1}  & =  \veccal{C}_{AB}(\vec{\Lambda}_b  + 1)^{-1}\vec{\Lambda}_b (\veccal{C}_{AB})^{-1}.
                    \end{align}
                \end{subequations}
                Matrices $\vec{\Lambda}_{a,b}$ are diagonal; therefore,  $\vec{\Lambda}_{a,b}(\vec{\Lambda}_{a,b}  + 1)^{-1} = (\vec{\Lambda}_{a,b}  + 1)^{-1} \vec{\Lambda}_{a,b}$ and we can rewrite the obtained relation as
                \begin{subequations}
                    \begin{align}
                            \label{eq_app_diag_step_3a}                 
                            M_b  & =  \veccal{C}_{AB}^H M_a (\veccal{C}^H_{AB})^{-1},  \\
                            \label{eq_app_diag_step_3b}    
                            M_a  & =  \veccal{C}_{AB} M_b (\veccal{C}_{AB})^{-1}.
                    \end{align}
                \end{subequations}
                where $M_{a,b} \equiv \vec{\Lambda}_{a,b}(\vec{\Lambda}_{a,b}  + 1)^{-1}$.
                Substituting Eq.~\eqref{eq_app_diag_step_3a} into Eq.~\eqref{eq_app_diag_step_3b} and Eq.~\eqref{eq_app_diag_step_3b} into Eq.~\eqref{eq_app_diag_step_3a}, we get
                \begin{subequations}
                     \begin{align}
                            \label{eq_app_diag_step_4a}
                            M_a  & =  \veccal{C}_{AB} \veccal{C}_{AB}^H M_a (\veccal{C}_{AB} \veccal{C}^H_{AB})^{-1},  \\ 
                            \label{eq_app_diag_step_4b}
                            M_b  & =  \veccal{C}_{AB}^H \veccal{C}_{AB} M_b  (\veccal{C}^H_{AB}\veccal{C}_{AB})^{-1}.  
                    \end{align}
                \end{subequations}
                Using the explicit form of the matrix $\vec{\Lambda}_a = \lambda_1 \vec{1}_{k_1} \oplus \lambda_2 \vec{1}_{k_2} \oplus ... \oplus \lambda_{k_M} \vec{1}_{k_M} $ in Eq.~\eqref{eq_app_diag_step_4a}, we get that the matrix $\veccal{C}_{AB}^H \veccal{C}_{AB}$ has a block-form
                \begin{equation}
                    \veccal{C}_{AB}^H \veccal{C}_{AB} =  \tilde{C}_1 \oplus  \tilde{C}_2 \oplus ... \oplus \tilde{C}_M , 
                \end{equation} 
                where $\tilde{C}_i$ are invertible matrices of the size $k_i\times k_i$.
                In turn, condition \eqref{eq_app_diag_step_4b}, implies that the matrix $\vec{\Lambda}_b$ should have the same structure of the degenerate values.
                Together with Eq.~\eqref{eq_app_diag_step_3a} and \eqref{eq_app_diag_step_3b}, it means that the matrix $\veccal{C}_{AB}$ itself has the same block structure as $ \veccal{C}_{AB}^H \veccal{C}_{AB}$, namely
                \begin{equation}
                    \veccal{C}_{AB} =  C_1 \oplus C_2 \oplus ... \oplus  C_M , 
                    \label{eq_app_cmatrix_general_form}
                \end{equation} 
                where the matrix $C_i$ is an invertible matrix such as $C^H_i C_i = \tilde{C}_i$.
                Thus, Eqs.~\eqref{eq_app_diag_step_3a} and \eqref{eq_app_diag_step_3b} imply that $M_a = M_b$ and therefore 
                \begin{equation}
                    \vec{\Lambda}_b=\vec{\Lambda}_a \equiv \vec{\Lambda} .
                    \label{eq_app_dmatrix_general_form}
                \end{equation} 

                In order to find the matrices $C_i$, we substitute Eqs.~\eqref{eq_app_cmatrix_general_form} and \eqref{eq_app_dmatrix_general_form} into Eq.~\eqref{eq_app_minuncert_diag}. 
                The resulting equations have a block structure, and each block corresponds to a different eigenvalue. Therefore, we can solve each block independently.                    
                \begin{align}
                        \begin{cases}
                            \lambda_i \vec{1}_{k_i}  =  C_i(\lambda_i \vec{1}_{k_i}  + 1)^{-1} C_i^H , \\
                            \lambda_i \vec{1}_{k_i}  = C_i^H (\lambda_i \vec{1}_{k_i}  + 1)^{-1} C_i .
                        \end{cases}
                    \end{align}    
                The matrices commute, and their order can be changed, resulting in
                \begin{align}
                    \begin{cases}
                        \lambda_i(\lambda_i + 1)\vec{1}_{k_i}  = C_i C_i^H , \\
                        \lambda_i(\lambda_i + 1)\vec{1}_{k_i}  = C_i^H C_i .
                    \end{cases}
                \end{align}    
                As a result, matrix $C_i$ can be written as 
                \begin{equation}
                     C_i = \sqrt{\lambda_i(\lambda_i + 1)} V_i, 
                \end{equation} 
                where $V_i$ is a unitary matrix of dimension $k_i \times k_i$, completing the proof of Eq.~\ref{eq_app_res_c_theorem}.

                Note, that due to the block-form, the matrices $\vec{\Lambda}$ and $V$ commute: $\vec{\Lambda} V = V \vec{\Lambda}$.
            \end{proof}

    \textbf{Step 2.}

    Let us consider a minimal uncertainty state $\hat{\rho}$ that is given in a monochromatic basis by matrices $\veccal{D}_a$, $\veccal{D}_b$ and $\veccal{C}_{ab}$.
    According to Lemma~\ref{lemma_uncert_resolv}, the eigendecompositions are
    $ 
         \veccal{D}_a = W_A \vec{\Lambda} W_A^H, ~  \veccal{D}_b = \tilde{W}_B \vec{\Lambda} \tilde{W}_B^H,
    $ 
    and, in the MW-basis, we have
    \begin{equation}
        \veccal{D}_A = \veccal{D}_B = \vec{\Lambda}.
    \end{equation}
    and 
    \begin{equation}
        \tilde{\veccal{C}}_{AB} = W_A^T \veccal{C}_{ab} \tilde{W}_B \\ = (\vec{\Lambda} + \vec{\Lambda}^2)^{\frac12} V ,
    \end{equation}
    where $V$ is unitary.

    As was mentioned in Appendix~\ref{app_uncertainty_of_mw_modes}, the MW-basis is not unique and instead of matrix $\tilde{W}_B$ we can take the matrix $W_B = \tilde{W}_B V^H$. 
    Indeed,
    \begin{multline}
        \veccal{D}_b = W_B  \vec{\Lambda} W_B^H = (\tilde{W}_B V) \vec{\Lambda}_b (\tilde{W}_B V)^H \\ =  \tilde{W}_B   \vec{\Lambda}  V V^H \tilde{W}_B^H =  \tilde{W}_B \vec{\Lambda} \tilde{W}_B^H.
    \end{multline}
    Using the matrix $W_B$ instead of  $\tilde{W}_B^H$ for the MW-basis construction, we get
    \begin{multline}
        \veccal{C}_{AB} = W_A^T \veccal{C}_{ab} W_B = W_A^T \veccal{C}_{ab} \tilde{W}_B V^H = \\ \tilde{\veccal{C}}_{AB} V^H   = (\vec{\Lambda} + \vec{\Lambda}^2)^{\frac12}.
    \end{multline}

    Therefore, for the minimal uncertainty states, we can always find unitary matrices $W_A$ and $W_B$ that transform the correlation matrices into the diagonal form
    \begin{equation}
        \veccal{D}_A = \veccal{D}_B = \vec{\Lambda} ~ \text{and} ~ \veccal{C}_{AB} = (\vec{\Lambda} + \vec{\Lambda}^2)^{\frac12}.
        \label{eq_app_intermed_bipartmodes}
    \end{equation}
    The basis constructed with the matrices $W_A$ and $W_B$ is called the basis of bipartite Schmidt modes.
 
    \textbf{Step 3.}
    
    To show that the minimal uncertainty states are pure, let us write purity Eq.~\eqref{eq_purity} with Schur's determinant formula Eq.~\eqref{eq_schur_determinant_1} as 
    \begin{multline}
         P^{-1} =   \det(2\veccal{D}_a^T + \vec{1}_{N}) \\ \times \det\left( 2\veccal{D}_b + \vec{1}_{M} - 4\veccal{C}_{ab}^H (2\veccal{D}_a^T + \vec{1}_{N})^{-1} \veccal{C}_{ab}  \right).
    \end{multline} 
    The purity is basis independent and we can use any desired basis.
    As soon as we consider the minimal uncertainty state, we can take its correlation matrices in the bipartite Schmidt basis~\eqref{eq_app_intermed_bipartmodes}.
    Then,
    \begin{multline}
        P^{-1} =   \det(2\vec{\Lambda} + \vec{1}_{N}) \\ \times \det\left( 2\vec{\Lambda} + \vec{1}_{N} - 4(\vec{\Lambda} + \vec{\Lambda}^2)^{\frac12} (2\vec{\Lambda} + \vec{1}_{N})^{-1} (\vec{\Lambda} + \vec{\Lambda}^2)^{\frac12}  \right).
    \end{multline} 
    All the matrices in a second determinant are diagonal; therefore, they commute.
    After some algebra, we get
    \begin{equation}
         P^{-1} =  \det(2\vec{\Lambda} + \vec{1}_{N}) \det\left((2\vec{\Lambda} + \vec{1}_{N})^{-1}\right) \equiv 1.
    \end{equation}
    This result shows that the states with the minimal uncertainty are equivalent to pure states.

    \textbf{Step 4.}

    Here we show that all the aforementioned results are valid for bipartite systems with an arbitrary size.
    Let us consider an $(N+M)$-mode system and a GSID-state $\hat{\rho}$ with matrices $\veccal{D}_a \in \mathbb{C}^{N\times N}$, $\veccal{D}_b \in \mathbb{C}^{M\times M}$ and $\veccal{C}_{ab} \in \mathbb{C}^{N\times M}$.
    Let us assume that the matrices $\veccal{D}_a$ and $\veccal{D}_b$ have $n \leq N$ and $m \leq M$ nonzero eigenvalues, respectively. 
    Building diagonal matrices $\vec{\Lambda}_a$ and $\vec{\Lambda}_b$ that contain only these nonzero eigenvalues, we get that, in the MW-basis, the correlation matrices can be written as
    $\veccal{D}_A = \vec{\Lambda}_a \oplus \vec{0}_{N-n}$ and $\veccal{D}_B = \vec{\Lambda}_b\oplus \vec{0}_{M-m}$.
    Due to Eq.~\eqref{eq_uncet_inequality_elements}, the matrix $\veccal{C}_{AB}$ has the following block-form in the MW-basis
    \begin{equation}
        \veccal{C}_{AB} = \begin{pmatrix}
                            \tilde{\veccal{C}}_{AB}     &  \vec{0}_{M-m}       \\
                            \vec{0}_{N-n}  & \vec{0}_{N-n, M-m}       \\
                        \end{pmatrix}.
    \end{equation}

    This implies that $\rank{\veccal{D}_A}= n$, $\rank{\veccal{D}_B} = m$ and $\rank{\veccal{C}_{AB}} \leq \mathrm{min}[n, m]$.
    In the case of $n\neq m$, $\rank{\veccal{D}_A} \neq \rank{\veccal{D}_B}$ and the system~\eqref{eq_app_uncertainty_general} is not resolvable.
    Therefore, it can be solved only for $m=n$ 
    and we can consider only the case of the squared $n\times n$ sub-matrices of the full matrices $\veccal{D}_A$, $\veccal{D}_B$ and $\veccal{C}_{AB}$ and apply to them the proof from Step 1.

\section{ Bound for the number of correlated modes }
    \label{app_JSI_inequality}

    Let us start with the SVD of a complex matrix $\vec{F} = U_F\vec{\Sigma}_FV_F$.
    Introducing $Q = \trace{\vec{\Sigma}_F^4}$ and applying Eq.~\eqref{eq_trace_indent}, we can write it as
    \begin{multline}
        Q = \trace{\vec{\Sigma}_F^4}  = \trace{(\vec{F}^H \vec{F})^H \vec{F}^H \vec{F}} \\ = \sum_{ij}  ([(\vec{F}^H \vec{F})]_{ij})^* [(\vec{F}^H \vec{F})]_{ij}     \\  = 
        \sum_{ij} \left(\sum_k  f_{ki} f^*_{kj}\right) \left(\sum_m f^*_{mi} f_{mj}\right),
        \label{eq_fake_intermed}
    \end{multline}
    where we use $f_{ij} \equiv [\vec{F}]_{ij}$.
     In turn, $Q$ is real and $Q>0$. Therefore, $Q = |Q|$ and
    \begin{multline}
        \sum_{ijkm}  f_{ki} f^*_{kj} f^*_{mi} f_{mj} = \Big| \sum_{ijkm}  f_{ki} f^*_{kj} f^*_{mi} f_{mj} \Big| \\ \leq  \sum_{ijkm}   \big|f_{ki} f^*_{kj} f^*_{mi} f_{mj} \big | \leq \sum_{ijkm}    |f_{ki}| |f_{kj}| |f_{mi}| |f_{mj}| .
    \end{multline}
    
    Let us construct a real matrix $\vec{A}$ such that $ a_{ij} = |f_{ij}| $.
    Its SVD is $\vec{A} = U_A\vec{\Sigma}_A V_A$.
    From Eq.~\eqref{eq_trace_indent}, we get that 
    \begin{multline}
        \trace{\vec{\Sigma}_A^2} = \trace{\vec{A}^T \vec{A}} = \sum_{ij} a_{ij}^2 = \sum_{ij} |f_{ij}|^2   \\ =  \trace{\vec{F}^H \vec{F}} =  \trace{\vec{\Sigma}_F^2}.
    \end{multline}
    In turn, applying same steps as in Eq.~\eqref{eq_fake_intermed}, we get 
    \begin{multline}
        \trace{\vec{\Sigma}_A^4} = \sum_{ijkm} a_{ki} a_{kj} a_{mi} a_{mj} \\ = \sum_{ijkm}    |f_{ki}| |f_{kj}| |f_{mi}| |f_{mj}| .
    \end{multline}
    Therefore,
    \begin{equation}
        \trace{\vec{\Sigma}_F^4} \leq \trace{\vec{\Sigma}_A^4},
    \end{equation}
    and
    \begin{equation}
        \dfrac{\trace{\vec{\Sigma}_F^4}}{\trace{\vec{\Sigma}_F^2}} \leq \dfrac{\trace{\vec{\Sigma}_A^4}}{\trace{\vec{\Sigma}_A^2}}.
    \end{equation}

\sloppy

\bibliography{literature.bib}

\end{document}